\DeclareMathOperator{\diag}{diag}
\DeclareMathOperator{\tr}{tr}
\DeclareMathOperator{\sv}{sv}
\DeclareMathOperator{\conv}{conv}
\DeclareMathOperator{\rank}{rank}
\DeclareMathOperator{\ran}{ran}
\theoremstyle{plain}
\newtheorem{proposition}{Proposition}
\newtheorem{corollary}{Corollary}[proposition]
\newtheorem{lemma}{Lemma}
\theoremstyle{definition}
\newtheorem{definition}{Definition}
\newtheorem{example}{Example}
\newtheorem{remark}{Remark}
\setlist[itemize]{leftmargin=5mm}
\setlist[enumerate]{leftmargin=5mm}
\pgfplotsset{compat=1.18}
\pgfplotsset{every axis/.append style={
			title style={font=\Large},
			label style={font=\Large},
			tick label style={font=\Large},
			legend style={font=\large},
		}}
\algrenewcommand{\algorithmicrequire}{\textbf{Input:}}
\algrenewcommand{\algorithmicensure}{\textbf{Output:}}
\algrenewcommand{\algorithmicwhile}{\textbf{While}}
\algrenewcommand{\algorithmicend}{\textbf{End}}
\algrenewcommand{\algorithmicrepeat}{\textbf{Repeat}}
\algrenewcommand{\algorithmicuntil}{\textbf{Until}}
\algrenewcommand{\algorithmicfor}{\textbf{For}}
\algrenewcommand{\algorithmicif}{\textbf{If}}
\algrenewcommand{\algorithmicelse}{\textbf{Else}}
\algrenewcommand{\algorithmicdo}{}
\algrenewcommand{\algorithmicthen}{}
\algnewcommand{\Initialize}[1]{%
	\State \textbf{Initialize }{#1}
}
\newacronym{ao}{AO}{Alternating Optimization}
\newacronym{bd}{BD}{Beyond-Diagonal}
\newacronym{bcd}{BCD}{Block Coordinate Descent}
\newacronym{d}{D}{Diagonal}
\newacronym{dof}{DoF}{Degrees of Freedom}
\newacronym{siso}{SISO}{Single-Input Single-Output}
\newacronym{miso}{MISO}{Multiple-Input Single-Output}
\newacronym{mimo}{MIMO}{Multiple-Input Multiple-Output}
\newacronym{rcg}{RCG}{Riemannian Conjugate Gradient}
\newacronym{ris}{RIS}{Reconfigurable Intelligent Surface}
\newacronym{gp}{GP}{Globally Passive}
\newacronym{pc}{PC}{Point-to-Point Channel}
\newacronym{p2p}{P2P}{Point-to-Point}
\newacronym{ic}{IC}{Interference Channel}
\newacronym{qos}{QoS}{Quality of Service}
\newacronym{wsr}{WSR}{Weighted Sum-Rate}
\newacronym{snr}{SNR}{Signal-to-Noise Ratio}
\newacronym{svd}{SVD}{Singular Value Decomposition}
\newacronym{mmse}{MMSE}{Minimum Mean-Square Error}
\newacronym{wmmse}{WMMSE}{Weighted \gls{mmse}}
\newacronym{mse}{MSE}{Mean-Square Error}
\newacronym{los}{LoS}{Line-of-Sight}
\newacronym{csi}{CSI}{Channel State Information}
\newacronym{cscg}{CSCG}{Circularly Symmetric Complex Gaussian}
\newacronym{sca}{SCA}{Successive Convex Approximation}
\newacronym{kkt}{KKT}{Karush-Kuhn-Tucker}
\newacronym{rf}{RF}{Radio Frequency}
\newacronym{sv}{SV}{Singular Value}
\newacronym{pdd}{PDD}{Penalty Dual Decompistion}
\newacronym{sinr}{SINR}{Signal-to-Interference Noise Ratio}
\newacronym{af}{AF}{Amplify-and-Forward}
\title{MIMO Channel Shaping and Rate Maximization\\Using Beyond-Diagonal RIS}
\author{
	\IEEEauthorblockN{
		Yang Zhao, \IEEEmembership{Member, IEEE,}
		Hongyu Li, \IEEEmembership{Member, IEEE,}\\
		Bruno Clerckx, \IEEEmembership{Fellow, IEEE,}
		and Massimo Franceschetti, \IEEEmembership{Fellow, IEEE}
	}




}
\begin{document}

\maketitle

\begin{abstract}
	This paper investigates the limits to which a passive \gls{ris} can reshape a point-to-point \gls{mimo} channel in terms of singular values and their functions (e.g., achievable rate and harvestable power) for improved wireless performance.
	We depart from the \gls{d} scattering model and adopt a \gls{bd} model that exploits element-wise connections for passive signal amplitude and phase manipulation.
	Specifically, analytical tight bounds are derived under typical \gls{ris} deployment scenarios to unveil the channel shaping potentials of \gls{bd}-\gls{ris} regarding communication \gls{dof}, singular value spread, power gain, and capacity.
	An efficient numerical method is then proposed to optimize \gls{bd}-\gls{ris} for any locally Lipschitz function of channel singular values, and showcased to characterize the achievable singular value region.
	As a side product, we tackle \gls{bd}-\gls{ris}-aided \gls{mimo} rate maximization problem by a local-optimal \gls{ao} approach and a low-complexity shaping approach.
	Results show that \gls{bd}-\gls{ris} significantly improves the dynamic range of channel singular values and the tradeoff in manipulating them, thus offering enhanced data rate, harvestable power, and physical-layer security.
	These advantages become more pronounced when the number of \gls{ris} elements, group size, or \gls{mimo} dimensions increase.
	Of particular interest, \gls{bd}-\gls{ris} is shown to activate multi-stream transmission and achieve the asymptotic \gls{dof} at much lower transmit power than \gls{d}-\gls{ris} thanks to its proficiency in channel shaping.
\end{abstract}

\begin{IEEEkeywords}
	MIMO, RIS, channel shaping, rate maximization, singular value analysis, manifold optimization.
\end{IEEEkeywords}

\glsresetall

\begin{section}{Introduction}
	Today we are witnessing a paradigm shift from connectivity to intelligence, where the wireless environment is no longer a chaotic medium but a conscious agent that can serve on demand.
	This is empowered by recent developments in \gls{ris}, a programmable surface that recycles and redistributes ambient electromagnetic waves for improved wireless performance.
	A typical \gls{ris} consists of numerous low-power sub-wavelength scattering elements, whose response can be engineered in real-time to manipulate the amplitude, phase, frequency, and polarization of the scattered waves \cite{Basar2019}.
	It enables full-duplex operation while featuring higher flexibility than reflectarrays, lower noise than relays, and greater scalability than multi-antenna transceivers.
	One popular \gls{ris} research topic is \emph{joint beamforming} design with transceivers for a specific performance measure, which has attracted significant interests in wireless communication \cite{Wu2019,Guo2020,Liu2022}, sensing \cite{He2022,Luo2022,Hua2023}, and power transfer \cite{Wu2020a,Feng2022,Zhao2022}.
	Although \gls{ris}-induced propagation paths suffers attenuation from double fading, passive beamforming at \gls{ris} offers better asymptotic behaviors than active beamforming at transceivers (e.g., second-order array gain and fourth-order harvested power \cite{Zhao2022}).
	Another \gls{ris} application is \emph{information modulation} by periodically switching its reflection pattern within the channel coherence time.
	This creates a free-ride message stream with dual benefits -- integrating with legacy transmitter for enhanced channel capacity \cite{Karasik2020,Ye2022} or serving as individual source for low-power uplink communication \cite{Liang2020,Zhao2024}.
	Different from above, \emph{channel shaping} exploits \gls{ris} as a stand-alone device to modify the inherent properties of the wireless environment, for example, compensate for the Doppler effect \cite{Basar2021}, flatten frequency-selective channels \cite{Arslan2022}, improve the channel rank \cite{Ozdogan2020a}, and introduce time diversity for multiple access schemes \cite{Yang2019,Chen2023}.
	This helps decouple joint beamforming problems into a channel shaping stage and a transceiver design stage, offering a modular and versatile solution for diverse wireless applications.
	At a specific time-frequency resource block, channel shaping metrics can be classified into the two categories below.
	\begin{itemize}
		\item \emph{Singular value:} The impact of \gls{ris} has been studied in terms of channel minimum singular value \cite{ElMossallamy2021}, effective rank \cite{Meng2023}, condition number \cite{Huang2023}, and \gls{dof} \cite{Chae2023}. Those are closely related to explicit performance measures but sensitive to minor perturbations of the channel matrix;
		\item \emph{Power:} The impact of \gls{ris} has been studied in terms of channel power gain \cite{Wu2019,Shen2020a,Nerini2023,Nerini2024,Santamaria2023} in point-to-point channels and leakage interference \cite{Santamaria2023a} in interference channels. Those second-order metrics are less informative in the \gls{mimo} context but easier to analyze and optimize.
	\end{itemize}

	Although above works offered inspiring glimpses into the channel shaping potential of passive \gls{ris}, they neither provided in-depth theoretical analysis nor characterized the achievable singular value region.
	Most works \cite{ElMossallamy2021,Meng2023,Huang2023,Chae2023,Wu2019,Santamaria2023a} have also been confined to the conventional \gls{d} architecture where each \gls{ris} element is connected to a dedicated impedance and functions independently of the others, namely, the wave impinging on one element is entirely scattered by itself.
	This architecture is modeled by a diagonal scattering matrix with unit-magnitude diagonal entries that ideally applies a phase shift to the incident signal.
	The idea was soon extended to \gls{bd}-\gls{ris} with group-connected architecture that connects elements within the same group via passive reconfigurable circuit components \cite{Shen2020a} that can be symmetric (e.g., capacitors and inductors) or asymmetric (e.g, ring hybrids and branch-line hybrids \cite{Ahn2006}).
	As such, the wave impinging on one element is able to propagate within the circuit and depart partially from the others of the same group.
	It can thus manipulate both amplitude and phase of the scattered wave while remaining globally passive.
	The main manufacturing complexity of \gls{bd}-\gls{ris} lies in the design and implementation of the circuit network.
	Fortunately, novel topologies such as tree- and forest-connections have been proposed to reduce the number of components for a flexible cost-performance tradeoff \cite{Nerini2024}.
	Other practical challenges such as channel estimation \cite{Li2024}, mutual coupling \cite{Li2023f}, wideband modelling \cite{Li2024a}, multi-sector coverage \cite{Li2023c}, and hardware implementation \cite{Tapie25e} have also been studied in recent literature.
	\gls{bd}-\gls{ris} has been proved to achieve higher spectral efficiency than \gls{d}-\gls{ris} and higher energy efficiency than active \gls{ris} and \gls{af} relay \cite{Santamaria2023,Fang2023,Zhou2023}.
	However, the interplay between \gls{bd}-\gls{ris} and \gls{mimo} is still at infancy stage and the potential remains largely unexplored.
	For example, the rate maximization problem \cite{Bartoli2023} has only been tacked in the special case where the direct channel is negligible and the \gls{bd}-\gls{ris} is fully-connected.
	Under those conditions, the mathematical modeling of \gls{bd}-\gls{ris} coincides with that of \gls{af} relay of unit power, although the operation mechanism and noise characteristics are clearly distinct.

	When it comes to signal processing, existing works have mainly invoked the {quasi-Newton} method \cite{Shen2020a}, the \gls{pdd} method \cite{Zhou2023}, and the generic (i.e., non-geodesic) \gls{rcg} method \cite{Li2023c} for the optimization of \gls{bd}-\gls{ris}.
	The first solves an unconstrained problem and projects the solution back to the feasible domain without optimality guarantee.
	The second alternates between the primal variables in the inner layer and the penalty coefficient in the outer layer.
	It is often used to tackle coupled constraints (e.g., \gls{sinr} thresholds under active and passive beamforming) and can be computationally expensive (e.g., $\mathcal{O}(N_{\mathrm{s}}^2)$ for \gls{d}-\gls{ris} and $\mathcal{O}(N_{\mathrm{s}}^4)$ for fully-connected \gls{bd}-\gls{ris}) \cite[Table~I]{Zhou2023}.
	The third applies the conjugate gradient method on generic Riemannian manifolds.
	Each iteration consists of an addition on the tangent space and a retraction to the feasible domain, which constitutes a zigzag path departing from and returning to the manifold.
	However, none of them effectively exploits the special structure of \gls{bd}-\gls{ris} for accelerated convergence.

	This paper is motivated by a fundamental question:
	\emph{What is the channel shaping capability, in terms of singular values and their functions, of a passive \gls{ris} in \gls{mimo} channels?}
	Unlike existing works that focus on specific performance metrics or deployment scenarios, we aim for an understanding of the theoretical shaping limits (via analysis) and the achievable shaping results (via optimization) that are broadly applicable across diverse wireless applications.
	We believe a comprehensive shaping answer can serve as a theoretical support/reference for the vast number of \gls{ris} research papers and real-world applications.
	Without a framework that identifies the fundamental shaping limits of \gls{ris}, the design of truly optimal and efficient architectures will remain elusive.
	The contributions of this paper are summarized below.

	First, we pioneer \gls{bd}-\gls{ris} study in general \gls{mimo} channels and interpret its shaping ability as branch matching and mode alignment.
	Branch matching refers to pairing and combining the branches (i.e., entries) of backward and forward channels corresponding to each group of the \gls{bd}-\gls{ris}.
	Mode alignment refers to aligning and ordering the modes (i.e., singular vectors) of the \gls{ris}-induced channels with those of the direct channel.
	The former arises uniquely from the off-diagonal entries of the \gls{bd}-\gls{ris} scattering matrix while the latter is enabled by its block-unitary transformation.

	Second, we provide an analytical answer to the shaping question under typical channel conditions.
	It is shown that \gls{bd}-\gls{ris} may achieve a larger or smaller communication \gls{dof} than \gls{d}-\gls{ris}.
	When the backward or forward channel is rank-deficient, we derive asymptotic bounds of individual singular values applying to \gls{d}- and \gls{bd}-\gls{ris}.
	When the direct channel is negligible, we recast the shaping question for fully-connected \gls{bd}-\gls{ris} as a well-studied linear algebra question and provide tight bounds (with closed-form scattering matrices) on channel singular values, power gain, and capacity.
	These results help us understand the fundamental limits of channel shaping and serve as a foundation for application-specific designs.


	Third, we provide a numerical \gls{bd}-\gls{ris} design framework for any locally Lipschitz function of channel singular values via a geodesic \gls{rcg} method.
	It compares favorably to generic manifold optimizers in that the updates are performed along the geodesics, namely the shortest paths on the manifold, for accelerated convergence.
	The method is then invoked for a Pareto problem to reveal the achievable channel singular value region, which generalizes most relevant metrics and provides an intuitive shaping benchmark.

	Fourth, we tackle \gls{bd}-\gls{ris}-aided \gls{mimo} rate maximization problem by a local-optimal \gls{ao} approach and a low-complexity shaping approach.
	The former iteratively updates the passive beamforming via geodesic \gls{rcg} and the active beamforming by eigenmode transmission, until convergence.
	The latter simply shapes the channel for maximum power gain then performs legacy transmission.

	Fifth, we validate the analytical bounds and the numerical methods by extensive simulation.
	It is concluded that:
	\begin{itemize}
		\item \gls{bd}-\gls{ris} can widen the dynamic range of channel singular values for enhanced rate, power, and physical-layer security;
		\item The shaping benefits of \gls{bd}-\gls{ris} over \gls{d}-\gls{ris} scale with the number of elements, group size, and \gls{mimo} dimensions;
		\item \gls{bd}-\gls{ris} can activate multi-stream transmission and achieve the asymptotic \gls{dof} at lower transmit power than \gls{d}-\gls{ris};
		\item The rate gap between the \gls{ao} and shaping approaches diminishes as the \gls{ris} evolves from \gls{d} to fully-connected \gls{bd};
		\item The proposed geodesic \gls{rcg} method is efficient and the optimization cost of practically-sized \gls{bd}-\gls{ris} remains low;
		\item The solutions are robust to channel estimation errors and extendable to symmetric constraint with minimal degradation.
	\end{itemize}

	\emph{Notation:}
	Italic, bold lower-case, and bold upper-case letters indicate scalars, vectors and matrices, respectively.
	$\jmath$ denotes the imaginary unit.
	$\mathbb{R}$ and $\mathbb{C}$ denote the set of real and complex numbers, respectively.
	$\mathbb{H}^{n \times n}$, $\mathbb{H}_+^{n \times n}$, $\mathbb{U}^{n \times n}$, and $\mathbb{P}^{n \times n}$ denote the set of $n \times n$ Hermitian, positive semi-definite, unitary, and permutation matrices, respectively.
	$\mathbf{0}$ and $\mathbf{I}$ are the zero and identity matrices with appropriate size, respectively.
	$\Re\{\cdot\}$ takes the real part of a complex number.
	$\mathbb{E}\{\cdot\}$ is the expectation operator.
	$\conv\{\cdot\}$ returns the convex hull of arguments.
	$\tr(\cdot)$ and $\det(\cdot)$ evaluate the trace and determinant of a square matrix, respectively.
	$\diag(\cdot)$ constructs a square matrix with arguments on the main (block) diagonal and zeros elsewhere.
	$\sv(\cdot)$, $\ran(\cdot)$, and $\ker(\cdot)$ evaluate the singular values, range, and kernel of a matrix, respectively.
	$\mathrm{vec}(\cdot)$ stacks the columns of a matrix as a vector.
	$\lvert \cdot \rvert$, $\lVert \cdot \rVert$, and $\lVert \cdot \rVert _\mathrm{F}$ denote the absolute value, Euclidean norm, and Frobenius norm, respectively.
	$\sigma_n(\cdot)$ and $\lambda_n(\cdot)$ are the $n$-th largest singular value and eigenvalue, respectively.
	$(\cdot)^*$, $(\cdot)^\mathsf{T}$, $(\cdot)^\mathsf{H}$, $(\cdot)^\dagger$, $(\cdot)^{\star}$ denote the conjugate, transpose, conjugate transpose (Hermitian), Moore-Penrose inverse, and stationary point, respectively.
	$[N]$ is a shortcut for $\{1,2,\ldots,N\}$.
	$(\cdot)_{[x:y]}$ is a shortcut for $(\cdot)_x,(\cdot)_{x+1},\ldots,(\cdot)_y$.
	$\odot$ denotes the Hadamard product.
	$\mathcal{O}(\cdot)$ is the big-O notation.
	$\mathcal{N}_{\mathbb{C}}(\mathbf{0}, \mathbf{\Sigma})$ is the multivariate \gls{cscg} distribution with mean $\mathbf{0}$ and covariance $\mathbf{\Sigma}$.
	$\sim$ means ``distributed as''.
\end{section}

\begin{section}{System Model}
	We model the \gls{bd}-\gls{ris} as an $N_\mathrm{S}$-port network that divides into $G$ individual groups, where group $g \in [G]$ contains $N_g$ scattering elements interconnected by real-time reconfigurable components \cite{Shen2020a} satisfying $N_\mathrm{S} = \sum_{g=1}^G N_g$.
	For the ease of analysis, we assume no mutual coupling and equal group size $N_g = L \triangleq N_\mathrm{S} / G, \ \forall g$.
	The overall scattering matrix of an asymmetric \gls{bd}-\gls{ris} is block-diagonal
	\begin{equation}
		\label{eq:ris}
		\mathbf{\Theta} = \diag(\mathbf{\Theta}_1,\ldots,\mathbf{\Theta}_G),
	\end{equation}
	where $\mathbf{\Theta}_g \in \mathbb{U}^{L \times L}$ is the $g$-th unitary block modeling the response of group $g$.
	\gls{d}-\gls{ris} can be seen an extreme case of \eqref{eq:ris} with group size $L=1$.
	Some viable architectures of \gls{bd}-\gls{ris} are illustrated in \cite[Fig.~3]{Shen2020a}, \cite[Fig.~5]{Li2023c}, \cite[Fig.~2]{Nerini2024} where the circuit topology have been modeled in the scattering matrix.

	Consider a \gls{bd}-\gls{ris}-aided \gls{mimo} point-to-point channel with $N_\mathrm{T}$ and $N_\mathrm{R}$ transmit and receive antennas, respectively, and $N_\mathrm{S}$ scattering elements at the \gls{bd}-\gls{ris}.
	This configuration is denoted as $N_\mathrm{T} \times N_\mathrm{S} \times N_\mathrm{R}$ throughout this paper.
	Let $\mathbf{H}_\mathrm{D} \in \mathbb{C}^{N_\mathrm{R} \times N_\mathrm{T}}$, $\mathbf{H}_\mathrm{B} \in \mathbb{C}^{N_\mathrm{R} \times N_\mathrm{S}}$, $\mathbf{H}_\mathrm{F} \in \mathbb{C}^{N_\mathrm{S} \times N_\mathrm{T}}$ denote the direct (i.e., transmitter-receiver), backward (i.e., \gls{ris}-receiver), and forward (i.e., transmitter-\gls{ris}) channels, respectively.
	The equivalent channel is
	\begin{equation}
		\label{eq:channel}
		\mathbf{H} = \mathbf{H}_\mathrm{D} + \mathbf{H}_\mathrm{B} \mathbf{\Theta} \mathbf{H}_\mathrm{F} = \mathbf{H}_\mathrm{D} + \sum_g \mathbf{H}_{\mathrm{B},g} \mathbf{\Theta}_g \mathbf{H}_{\mathrm{F},g},
	\end{equation}
	where $\mathbf{H}_{\mathrm{B},g} \in \mathbb{C}^{N_\mathrm{R} \times L}$ and $\mathbf{H}_{\mathrm{F},g} \in \mathbb{C}^{L \times N_\mathrm{T}}$ are the backward and forward channels associated with group $g$, corresponding to the $(g{-}1)L{+}1$ to $gL$ columns of $\mathbf{H}_\mathrm{B}$ and rows of $\mathbf{H}_\mathrm{F}$, respectively.
	Since unitary matrices constitute an algebraic group with respect to multiplication, we can decompose the scattering matrix of group $g$ as
	\begin{equation}
		\label{eq:ris_decompose_group}
		\mathbf{\Theta}_g = \mathbf{L}_g \mathbf{X}_g \mathbf{R}_g^\mathsf{H},
	\end{equation}
	where $\mathbf{L}_g, \mathbf{R}_g \in \mathbb{U}^{L \times L}$ are unitary matrices and $\mathbf{X}_g \in \mathbb{P}^{n \times n}$ is a permutation matrix.
	Let $\mathbf{H}_g \triangleq \mathbf{H}_{\mathrm{B},g} \mathbf{\Theta}_g \mathbf{H}_{\mathrm{F},g}$ be the indirect channel via group $g$ and $\mathbf{H}_{\mathrm{B/F},g} = \mathbf{U}_{\mathrm{B/F},g} \mathbf{\Sigma}_{\mathrm{B/F},g} \mathbf{V}_{\mathrm{B/F},g}^\mathsf{H}$ be the \gls{svd} of the backward and forward channels, respectively.
	The equivalent channel is
	\begin{equation}
		\label{eq:channel_svd}
		\mathbf{H} = \overbrace{\mathbf{H}_\mathrm{D} + \sum_g \mathbf{U}_{\mathrm{B},g} \mathbf{\Sigma}_{\mathrm{B},g} \underbrace{\mathbf{V}_{\mathrm{B},g}^\mathsf{H} \mathbf{L}_g \mathbf{X}_g \mathbf{R}_g^\mathsf{H} \mathbf{U}_{\mathrm{F},g}}_\text{backward-forward} \mathbf{\Sigma}_{\mathrm{F},g} \mathbf{V}_{\mathrm{F},g}^\mathsf{H}}^\text{direct-indirect}.
	\end{equation}

	\begin{remark}
		In \eqref{eq:channel_svd}, the \gls{bd}-\gls{ris} performs a blockwise unitary transformation to combine the backward-forward (intra-group, multiplicative) channels and direct-indirect (inter-group, additive) channels.
		These two attributes are refined respectively as:
		\begin{itemize}
			\item \emph{Branch matching:} To pair and combine the branches (i.e., entries) of $\mathbf{H}_{\mathrm{B},g}$ and $\mathbf{H}_{\mathrm{F},g}$ through $\mathbf{\Theta}_g$;
			\item \emph{Mode alignment:} To align and order the modes (i.e., singular vectors) of $\{\mathbf{H}_g\}_{g \in [G]}$ with those of $\mathbf{H}_\mathrm{D}$ through $\mathbf{\Theta}$.
		\end{itemize}
	\end{remark}

	\begin{example}[\gls{siso} channel gain maximization]
		\label{eg:siso}
		Denote the direct, backward, forward channels as $h_\mathrm{D}$, $\mathbf{h}_\mathrm{B} \in \mathbb{C}^{N_\mathrm{S} \times 1}$, and $\mathbf{h}_\mathrm{F}^\mathsf{H} \in \mathbb{C}^{1 \times N_\mathrm{S}}$, respectively.
		In this case, mode alignment boils down to phase matching and the optimal \gls{bd}-\gls{ris} structure is
		\begin{equation}
			\mathbf{\Theta}_{\textnormal{P-max},g}^\textnormal{SISO} = \frac{h_\mathrm{D}}{\lvert h_\mathrm{D} \rvert} \mathbf{V}_{\mathrm{B},g} \mathbf{U}_{\mathrm{F},g}^\mathsf{H}, \quad \forall g,
		\end{equation}
		where $\mathbf{V}_{\mathrm{B},g} = \bigl[\mathbf{h}_{\mathrm{B},g}/\lVert \mathbf{h}_{\mathrm{B},g} \rVert, \mathbf{N}_{\mathrm{B},g}\bigr] \in \mathbb{U}^{L \times L}$, $\mathbf{U}_{\mathrm{F},g} = \bigl[\mathbf{h}_{\mathrm{F},g}/\lVert \mathbf{h}_{\mathrm{F},g} \rVert, \mathbf{N}_{\mathrm{F},g}\bigr] \in \mathbb{U}^{L \times L}$, and $\mathbf{N}_{\mathrm{B/F},g} \in \mathbb{C}^{L \times (L-1)}$ are the orthonormal bases of kernels of $\mathbf{h}_{\mathrm{B/F},g}$.
		Evidently, any group size $L$ (including \gls{d}-\gls{ris} $L=1$ with empty kernels) suffices for perfect phase matching.
		The maximum channel gain still depends on $L$
		\begin{equation}
			\ \lvert h \rvert = \lvert h_\mathrm{D} \rvert + \sum_g \sum_l \lvert h_{\mathrm{B},g,\pi_{\mathrm{B},g}(l)} \rvert \lvert h_{\mathrm{F},g,\pi_{\mathrm{F},g}(l)} \rvert,
		\end{equation}
		where $h_{\mathrm{B/F},g,l}$ are the $l$-th entries of $\mathbf{h}_{\mathrm{B/F},g}$, and $\pi_{\mathrm{B/F},g}$ are permutations of $[L]$ sorting their magnitude in similar orders.
		That is, the maximum \gls{siso} channel gain is attained when each \gls{bd}-\gls{ris} group, apart from phase shifting, matches the $l$-th strongest backward and forward channel branches.
		Increasing $L$ improves the branch matching flexibility and boosts the channel gain.
	\end{example}

	Example~\ref{eg:siso} clarifies the difference between branch matching and mode alignment as well as their impacts on channel shaping.
	When it comes to \gls{mimo}, the advantage of \gls{bd}-\gls{ris} in branch matching improves since the number of available branches is proportional to $N_\mathrm{T}$ and $N_\mathrm{R}$. On the other hand, the limitation of \gls{d}-\gls{ris} in mode alignment intensifies since each element can only apply a scalar phase shift to the indirect channel of $\min(N_\mathrm{T}, N_\mathrm{S}, N_\mathrm{R})$ modes.
\end{section}

\begin{section}{Channel Shaping}
	In this section, we first provide an example demonstrating the \gls{mimo} channel shaping advantages of \gls{bd}-\gls{ris} over \gls{d}-\gls{ris}, then derive some analytical bounds on singular values, power gain, and capacity under specific channel conditions.
	Finally, we propose a numerical method to optimize the \gls{bd}-\gls{ris} for a broad class of singular value functions.

	\begin{example}[$2 \times 2 \times 2$ shaping]
		\label{eg:shaping_potential}
		Here \gls{d}-\gls{ris} and fully-connected \gls{bd}-\gls{ris} can be modeled by 2 and 4 independent angular parameters, respectively:
		\begin{equation*}
			\mathbf{\Theta}_\mathrm{D} = \diag(e^{\jmath \theta_1}, e^{\jmath \theta_2}), \quad
			\mathbf{\Theta}_\mathrm{BD} = e^{\jmath \phi} \begin{bmatrix}
				e^{\jmath \alpha} \cos \psi  & e^{\jmath \beta} \sin \psi   \\
				-e^{-\jmath \beta} \sin \psi & e^{-\jmath \alpha} \cos \psi
			\end{bmatrix},
		\end{equation*}
		We consider a special case where the \gls{bd}-\gls{ris} is symmetric (i.e., $\beta = \pi / 2$) and the direct channel is negligible such that $\phi$ has no impact on $\sv(\mathbf{H})$, since $\sv(e^{\jmath \phi} \mathbf{A}) = \sv(\mathbf{A})$.
		The singular value shaping capabilities of $\mathbf{\Theta}_\mathrm{D}$ and $\mathbf{\Theta}_\mathrm{BD}$ can thus be visualized over 2 tunable parameters.
		\begin{figure}
			\centering
			\includegraphics[width=0.7\columnwidth]{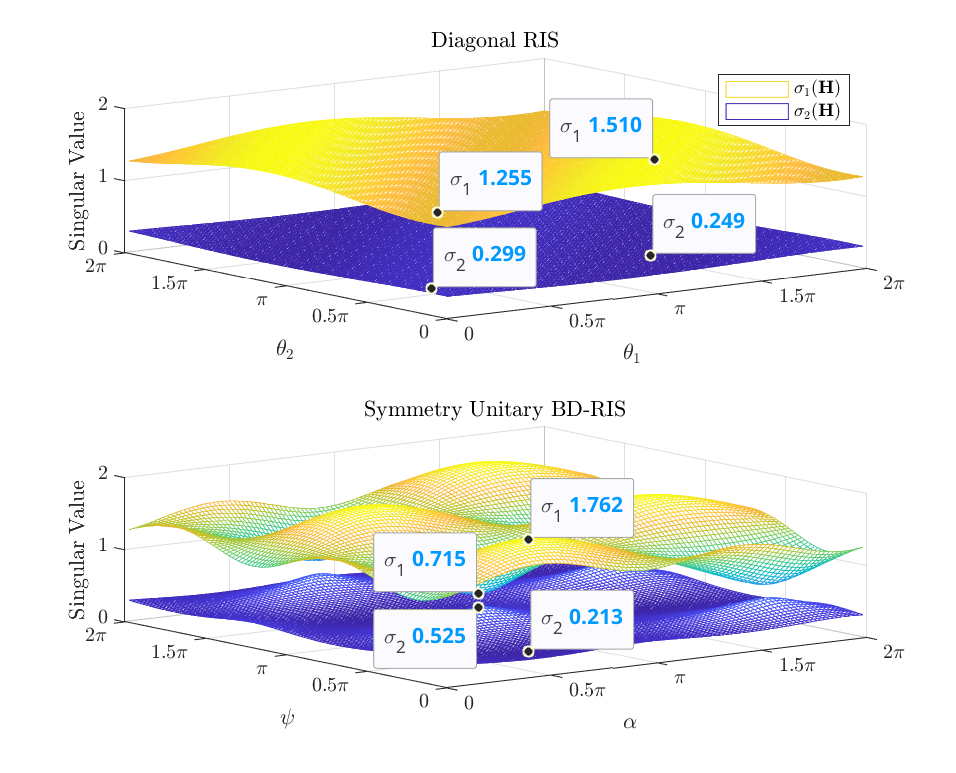}
			\caption{$2 \times 2 \times 2$ singular value shaping by \gls{d}-\gls{ris} and symmetric fully-connected \gls{bd}-\gls{ris} when the direct channel is negligible.
				The maximum and minimum of both singular values are marked explicitly on the plot.}
			\label{fg:shaping_potential}
		\end{figure}
		With an exhaustive grid search over $(\theta_1, \theta_2)$ and $(\alpha, \psi)$, Fig.~\ref{fg:shaping_potential} shows the achievable singular values of a specific channel instance
		$
			\mathbf{H}_\mathrm{B} =
			\begin{bsmallmatrix}
				-0.2059 + 0.5914 \jmath & -0.0909 + 0.5861 \jmath \\
				0.4131 + 0.2651 \jmath  & -0.1960 + 0.4650 \jmath
			\end{bsmallmatrix}, \
			\mathbf{H}_\mathrm{F} =
			\begin{bsmallmatrix}
				-0.6362 + 0.1332 \jmath & -0.1572 + 1.5538 \jmath \\
				0.0196 + 0.4011 \jmath  & -0.3170 - 0.2303 \jmath
			\end{bsmallmatrix}.
		$
		In this example, both singular values can be manipulated up to\footnote{The percentage for manipulating $\sigma_n(\mathbf{H})$ is calculated by $\eta_n^+ = \frac{\max \sigma_n(\mathbf{H}) - \mathrm{avg} \sigma_n(\mathbf{H})}{\mathrm{avg} \sigma_n(\mathbf{H})} \times 100\%$ and  $\eta_n^- = \frac{\min \sigma_n(\mathbf{H}) - \mathrm{avg} \sigma_n(\mathbf{H})}{\mathrm{avg} \sigma_n(\mathbf{H})} \times 100\%$.} $\pm 9\%$ by \gls{d}-\gls{ris} (using 2 reconfigurable components) and $\pm 42\%$ by symmetric fully-connected \gls{bd}-\gls{ris} (using 3 reconfigurable components).
	\end{example}

	Example~\ref{eg:shaping_potential} suggests that the physical interconnection of \gls{ris} elements, even if using symmetric circuit components, can create a ``cooperation effect'' that significantly enhances the dynamic range of channel singular values.
	This motivates the analytical and numerical shaping studies in Sections~\ref{sc:shaping_analytical} and \ref{sc:shaping_numerical}.

	\begin{subsection}{Analytical Shaping Bounds}
		\label{sc:shaping_analytical}
		\begin{definition}[\gls{dof}]
			\gls{dof} refers to the maximum number of streams that can be transmitted in parallel over a \gls{mimo} channel in the asymptotic high-\gls{snr} regime
			\begin{equation}
				\textnormal{DoF}(\mathbf{H}) = \lim_{\rho \to \infty} \frac{\log \det(\mathbf{I} + \rho \mathbf{H} \mathbf{H}^\mathsf{H})}{\log \rho},
			\end{equation}
			where $\rho$ is the \gls{snr}.
		\end{definition}

		\begin{definition}[Negligible direct channel]
			A direct channel is considered negligible when its contribution to the received signal is substantially weaker than that of the \gls{ris}-induced indirect channels. Mathematically, this can be defined as
			\begin{equation*}
				\frac{\lVert \mathbf{H}_\mathrm{D} \rVert _\mathrm{F}^2}{\lVert \sum_g \mathbf{H}_{\mathrm{B},g} \mathbf{\Theta}_g \mathbf{H}_{\mathrm{F},g} \rVert _\mathrm{F}^2} < \epsilon,
			\end{equation*}
			where $\epsilon$ is a small positive threshold.
			This can result from a very large number of \gls{ris} elements or physical obstacles in the propagation path.

		\end{definition}

		The main results of this subsection are summarized in the following Propositions and Corollaries.
		\begin{proposition}[\gls{dof}]
			\label{pp:dof}
			\gls{bd}-\gls{ris} may achieve a larger or smaller \gls{mimo} \gls{dof} than \gls{d}-\gls{ris}.
		\end{proposition}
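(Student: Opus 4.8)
The plan is to reduce the \gls{dof} to a matrix rank and then settle the claim by two explicit channels, since it is a pure existence statement. Because $\mathbf{H}\mathbf{H}^\mathsf{H}$ has eigenvalues $\sigma_n^2$, we have $\log\det(\mathbf{I}+\rho\mathbf{H}\mathbf{H}^\mathsf{H})=\sum_n\log(1+\rho\sigma_n^2)$; dividing by $\log\rho$ and letting $\rho\to\infty$ shows that each $\sigma_n>0$ contributes exactly one while each $\sigma_n=0$ contributes zero, so $\textnormal{DoF}(\mathbf{H})=\rank(\mathbf{H})$. The question thus reduces to which ranks of $\mathbf{H}=\mathbf{H}_\mathrm{D}+\mathbf{H}_\mathrm{B}\mathbf{\Theta}\mathbf{H}_\mathrm{F}$ are reachable under the block-diagonal constraint \eqref{eq:ris} versus the diagonal (\gls{d}-\gls{ris}) constraint.

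For the larger direction I would make the off-diagonal pairing exposed in \eqref{eq:channel_svd} indispensable by giving $\mathbf{H}_\mathrm{B}$ and $\mathbf{H}_\mathrm{F}$ disjoint support over the \gls{ris} ports. With negligible $\mathbf{H}_\mathrm{D}$, take $N_\mathrm{S}=4$ and $N_\mathrm{T}=N_\mathrm{R}=2$ with $\mathbf{H}_\mathrm{B}=[\,\mathbf{I}_2\ \ \mathbf{0}\,]$ and $\mathbf{H}_\mathrm{F}=[\,\mathbf{0}\ \ \mathbf{I}_2\,]^\mathsf{T}$, so that $\mathbf{H}_\mathrm{B}$ lives on ports $\{1,2\}$ and $\mathbf{H}_\mathrm{F}$ on ports $\{3,4\}$. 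A diagonal $\mathbf{\Theta}$ couples port $i$ only to itself, so every term of $\sum_i\theta_i\mathbf{h}_{\mathrm{B},i}\mathbf{h}_{\mathrm{F},i}^\mathsf{T}$ vanishes and $\textnormal{DoF}_\mathrm{D}=0$; the port-routing unitary $\mathbf{\Theta}=\bigl[\begin{smallmatrix}\mathbf{0}&\mathbf{I}_2\\\mathbf{I}_2&\mathbf{0}\end{smallmatrix}\bigr]$ instead gives $\mathbf{H}=\mathbf{I}_2$ and $\textnormal{DoF}_\mathrm{BD}=2$. This certifies that \gls{bd}-\gls{ris} can strictly exceed \gls{d}-\gls{ris}, and I expect this half to be routine once the disjoint-support trick is in place.

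The smaller direction is the delicate one and is where the real work lies. The obstacle is that any block-diagonal $\mathbf{\Theta}$ specializes to diagonal blocks, so every \gls{d}-\gls{ris} response is a feasible \gls{bd}-\gls{ris} response; hence the rank-maximizing \gls{dof} of \gls{bd}-\gls{ris} can never fall strictly below that of \gls{d}-\gls{ris}, and the statement cannot be read under free \gls{dof} maximization. I would therefore compare the two architectures under a fixed design criterion: optimizing each for received power as in Example~\ref{eg:siso}, the extra intra-group freedom lets \gls{bd}-\gls{ris} pack energy onto a single dominant mode and collapse the effective rank toward one, whereas the per-element \gls{d}-\gls{ris} cannot concentrate as aggressively and keeps further modes active, so that $\rank(\mathbf{H})$ evaluated at the two power-optimal designs yields $\textnormal{DoF}_\mathrm{BD}<\textnormal{DoF}_\mathrm{D}$. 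An alternative, equally valid reading fixes the reconfigurable-component budget, under which a \gls{d}-\gls{ris} affords many more ports than a fully-connected \gls{bd}-\gls{ris} and can thereby reach a higher rank. Pinning down which comparison the statement intends, rather than the ensuing algebra, is the crux.
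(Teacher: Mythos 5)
Your reduction of DoF to $\rank(\mathbf{H})$ is sound, and your disjoint-support example settles the \emph{larger} direction correctly; that half is a legitimate, more elementary alternative to the paper's argument, which instead derives the general rank bounds $\max(r_\mathrm{B}+r_\mathrm{F}-N_\mathrm{S},0)\le\rank(\mathbf{H}_\mathrm{B}\mathbf{\Theta}\mathbf{H}_\mathrm{F})\le\min(r_\mathrm{B},r_\mathrm{F})$ and exhibits closed-form QR-based fully-connected BD-RIS matrices attaining either end, which a D-RIS generally cannot. The genuine gap is the \emph{smaller} direction: you declared the intended comparison ambiguous and offered two readings, neither of which is the paper's, and neither of which you proved. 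The intended reading---explicit in the paper's proof, in Example~\ref{eg:dof}, and in the surrounding text about reducing DoF to mitigate interference---concerns the \emph{achievable set} of DoF values: for suitably chosen channels, BD-RIS can also drive the DoF strictly \emph{below} the minimum achievable by any D-RIS (i.e., DoF minimization / channel nulling), so the claim is existence in both directions, exactly parallel to your first half. Your reading (a) is moreover factually wrong: by Corollary~\ref{co:nd_power} the power-optimal fully-connected BD-RIS is $\mathbf{\Theta}=\mathbf{V}_\mathrm{B}\mathbf{U}_\mathrm{F}^\mathsf{H}$, which pairs the strongest modes in like order, coincides with the capacity-optimal solution of Corollary~\ref{co:nd_capacity_snr_general}, and generically yields a \emph{full-rank} channel---it does not collapse the rank toward one. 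Reading (b), fixing a component budget, is likewise not what the proposition asserts.

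The fix stays entirely within your example-based framework. Take $\mathbf{H}_\mathrm{D}=\mathbf{0}$, $\mathbf{H}_\mathrm{F}=\diag(1,1,0,0)$, and let $\mathbf{H}_\mathrm{B}$ have first row $(1,1,0,0)$, third row $(0,0,1,0)$, and zeros elsewhere (the paper's Example~\ref{eg:dof}). For any diagonal $\mathbf{\Theta}=\diag(e^{\jmath\theta_1},\ldots,e^{\jmath\theta_4})$ the $(1,1)$ entry of $\mathbf{H}=\mathbf{H}_\mathrm{B}\mathbf{\Theta}\mathbf{H}_\mathrm{F}$ equals $e^{\jmath\theta_1}\neq 0$, so every D-RIS leaves $\mathrm{DoF}\ge 1$. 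On the other hand, $\ker(\mathbf{H}_\mathrm{B})$ is two-dimensional (spanned by $(1,-1,0,0)^\mathsf{T}$ and $(0,0,0,1)^\mathsf{T}$), and $\ran(\mathbf{H}_\mathrm{F})$ is two-dimensional, so a fully-connected unitary $\mathbf{\Theta}$ can map $\ran(\mathbf{H}_\mathrm{F})$ into $\ker(\mathbf{H}_\mathrm{B})$, giving $\mathbf{H}=\mathbf{0}$ and $\mathrm{DoF}=0<1$. Together with your disjoint-support example for the larger direction, this completes the proposition; without it, your proposal proves only half of the statement and replaces the other half with an incorrect heuristic.
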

		\begin{proof}
			Please refer to Appendix~\ref{ap:dof}.
		\end{proof}


		\begin{example}[\gls{dof} of $4 \times 4 \times 4$ shaping]
			\label{eg:dof}
			Consider a $4 \times 4 \times 4$ shaping with $\mathbf{H}_\mathrm{D} = \mathbf{0}$, $\mathbf{H}_\mathrm{B} =
				\begin{bsmallmatrix}
					1 & 1 & 0 & 0 \\
					0 & 0 & 0 & 0 \\
					0 & 0 & 1 & 0 \\
					0 & 0 & 0 & 0
				\end{bsmallmatrix}$, and $\mathbf{H}_\mathrm{F} = \diag(1, 1, 0, 0).$
			Evidently, any \gls{d}-\gls{ris} $\mathbf{\Theta}_\mathrm{D} = \diag(e^{\jmath \theta_1}, e^{\jmath \theta_2}, e^{\jmath \theta_3}, e^{\jmath \theta_4})$ results in
			\begin{equation*}
				\mathbf{H} =
				\begin{bsmallmatrix}
					\scriptscriptstyle e^{\jmath \theta_1} & \scriptscriptstyle e^{\jmath \theta_2} & 0 & 0 \\
					0 & 0 & 0 & 0 \\
					0 & 0 & 0 & 0 \\
					0 & 0 & 0 & 0
				\end{bsmallmatrix}
			\end{equation*}
			with 1 \gls{dof}.
			On the other hand, a fully-connected \gls{bd}-\gls{ris} can perfectly align or misalign the kernels of $\mathbf{H}_\mathrm{B}$ and $\mathbf{H}_\mathrm{F}$ using the closed-form solutions \eqref{eq:ris_dof_max} or \eqref{eq:ris_dof_min} in Appendix~\ref{ap:dof}. That is,
			$\mathbf{\Theta}_{\textnormal{DoF-max}}^{\textnormal{MIMO}} =
				\begin{bsmallmatrix}
					0 & \scriptscriptstyle{\frac{1}{\sqrt{2}}} & 0 & \scriptscriptstyle{-\frac{1}{\sqrt{2}}} \\
					0 & \scriptscriptstyle{\frac{1}{\sqrt{2}}} & 0 & \scriptscriptstyle{\frac{1}{\sqrt{2}}} \\
					-1 & 0 & 0 & 0 \\
					0 & 0 & 1 & 0
				\end{bsmallmatrix}$
			and
			$\mathbf{\Theta}_{\textnormal{DoF-min}}^{\textnormal{MIMO}} =
				\begin{bsmallmatrix}
					\scriptscriptstyle{-\frac{1}{\sqrt{2}}} & 0 & \scriptscriptstyle{\frac{1}{\sqrt{2}}} & 0\\
					\scriptscriptstyle{\frac{1}{\sqrt{2}}} & 0 & \scriptscriptstyle{\frac{1}{\sqrt{2}}} & 0 \\
					0 & 0 & 0 & 1 \\
					0 & -1 & 0 & 0
				\end{bsmallmatrix}$,
			which correspond to
			\begin{equation*}
				\mathbf{H} =
				\begin{bsmallmatrix}
					0 & \sqrt{2} & 0 & 0 \\
					0 & 0 & 0 & 0 \\
					-1 & 0 & 0 & 0 \\
					0 & 0 & 0 & 0
				\end{bsmallmatrix}, \quad
				\mathbf{H} = \mathbf{0},
			\end{equation*}
			and a \gls{dof} of 2 and 0, respectively.
		\end{example}

		Proposition~\ref{pp:dof} and Example~\ref{eg:dof} suggest that we can expect more parallel data streams or less crosstalk when shaping the channel with \gls{bd}-\gls{ris}.
		Increasing the \gls{dof} can improve the asymptotic rate performance for point-to-point transmission.
		Conversely, reducing the \gls{dof} can help orthogonalize channels in multi-user networks for the interest of interference alignment and physical layer security.
		Next, we progress to quantify the limits of singular value redistribution in rank-deficient channels.

		\begin{proposition}[Rank-deficient channel]
			\label{pp:rd}
			If the minimum rank of backward and forward channels is $k$ ($k \le N \triangleq \min(N_\mathrm{T}, N_\mathrm{R})$),
			then for \gls{d}-\gls{ris} or \gls{bd}-\gls{ris} of arbitrary number of elements, the $n$-th singular value of the equivalent channel is bounded above and below respectively by
			\begin{subequations}
				\label{iq:sv_rd}
				\begin{align}
					\sigma_n(\mathbf{H}) & \le \sigma_{n-k}(\mathbf{T}), &  & \text{if } n > k, \label{iq:sv_rd_max}         \\
					\sigma_n(\mathbf{H}) & \ge \sigma_n(\mathbf{T}),     &  & \text{if } n < N - k + 1, \label{iq:sv_rd_min}
				\end{align}
			\end{subequations}
			where $\mathbf{T}$ is any auxiliary matrix satisfying
			\begin{equation}
				\label{eq:auxiliary_rd}
				\mathbf{T} \mathbf{T}^\mathsf{H} =
				\begin{cases}
					\mathbf{H}_\mathrm{D} (\mathbf{I} - \mathbf{V}_\mathrm{F} \mathbf{V}_\mathrm{F}^\mathsf{H}) \mathbf{H}_\mathrm{D}^\mathsf{H}, & \text{if } \rank(\mathbf{H}_\mathrm{F}) = k, \\
					\mathbf{H}_\mathrm{D}^\mathsf{H} (\mathbf{I} - \mathbf{U}_\mathrm{B} \mathbf{U}_\mathrm{B}^\mathsf{H}) \mathbf{H}_\mathrm{D}, & \text{if } \rank(\mathbf{H}_\mathrm{B}) = k,
				\end{cases}
			\end{equation}
			and $\mathbf{V}_\mathrm{F}$ and $\mathbf{U}_\mathrm{B}$ are any right and left singular matrices of $\mathbf{H}_\mathrm{F}$ and $\mathbf{H}_\mathrm{B}$, respectively.
		\end{proposition}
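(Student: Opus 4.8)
The plan is to exploit the fact that, for \emph{any} scattering matrix $\mathbf{\Theta}$ (diagonal or block-diagonal unitary alike), the indirect channel $\mathbf{H}_\mathrm{B}\mathbf{\Theta}\mathbf{H}_\mathrm{F}$ has rank at most $k$, so its controllable contribution lives in a fixed $k$-dimensional subspace. Projecting $\mathbf{H}$ onto the orthogonal complement of that subspace annihilates the $\mathbf{\Theta}$-dependent term and leaves a matrix determined solely by $\mathbf{H}_\mathrm{D}$, whose Gram matrix is exactly $\mathbf{T}\mathbf{T}^\mathsf{H}$. The two one-sided bounds then follow from Weyl's inequalities applied to the resulting Hermitian decomposition. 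I would prove the case $\rank(\mathbf{H}_\mathrm{F}) = k$ in full; the case $\rank(\mathbf{H}_\mathrm{B}) = k$ follows by applying the same argument to $\mathbf{H}^\mathsf{H}$.

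First I would let $\mathbf{P} = \mathbf{I} - \mathbf{V}_\mathrm{F}\mathbf{V}_\mathrm{F}^\mathsf{H}$ be the orthogonal projector onto $\ker(\mathbf{H}_\mathrm{F})$, where the columns of $\mathbf{V}_\mathrm{F}$ span the row space of $\mathbf{H}_\mathrm{F}$. Because $\mathbf{H}_\mathrm{F}\mathbf{P} = \mathbf{0}$, post-multiplying \eqref{eq:channel} by $\mathbf{P}$ kills the indirect term, so $\mathbf{H}\mathbf{P} = \mathbf{H}_\mathrm{D}\mathbf{P}$ independently of $\mathbf{\Theta}$. Since $\mathbf{P}$ is Hermitian and idempotent, $\mathbf{H}\mathbf{P}\mathbf{H}^\mathsf{H} = \mathbf{H}_\mathrm{D}\mathbf{P}\mathbf{H}_\mathrm{D}^\mathsf{H} = \mathbf{T}\mathbf{T}^\mathsf{H}$, whence $\mathbf{H}\mathbf{P}$ and $\mathbf{T}$ share the same singular values (any square root of the fixed PSD matrix $\mathbf{T}\mathbf{T}^\mathsf{H}$ serves as $\mathbf{T}$).

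The core step is to insert $\mathbf{I} = \mathbf{P} + (\mathbf{I}-\mathbf{P})$ inside the Gram matrix, giving $\mathbf{H}\mathbf{H}^\mathsf{H} = \mathbf{H}\mathbf{P}\mathbf{H}^\mathsf{H} + \mathbf{H}\mathbf{V}_\mathrm{F}\mathbf{V}_\mathrm{F}^\mathsf{H}\mathbf{H}^\mathsf{H} = \mathbf{T}\mathbf{T}^\mathsf{H} + \mathbf{C}$, where $\mathbf{C} \triangleq (\mathbf{H}\mathbf{V}_\mathrm{F})(\mathbf{H}\mathbf{V}_\mathrm{F})^\mathsf{H} \succeq \mathbf{0}$ satisfies $\rank(\mathbf{C}) \le k$. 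For the upper bound, Weyl's inequality $\lambda_{i+j-1}(\mathbf{T}\mathbf{T}^\mathsf{H} + \mathbf{C}) \le \lambda_i(\mathbf{T}\mathbf{T}^\mathsf{H}) + \lambda_j(\mathbf{C})$ with $j = k+1$ and $\lambda_{k+1}(\mathbf{C}) = 0$ gives $\lambda_n(\mathbf{H}\mathbf{H}^\mathsf{H}) \le \lambda_{n-k}(\mathbf{T}\mathbf{T}^\mathsf{H})$ for $n > k$; taking square roots yields \eqref{iq:sv_rd_max}. For the lower bound, $\mathbf{C} \succeq \mathbf{0}$ alone forces $\mathbf{H}\mathbf{H}^\mathsf{H} \succeq \mathbf{T}\mathbf{T}^\mathsf{H}$, so the Courant–Fischer monotonicity principle gives $\lambda_n(\mathbf{H}\mathbf{H}^\mathsf{H}) \ge \lambda_n(\mathbf{T}\mathbf{T}^\mathsf{H})$ and hence $\sigma_n(\mathbf{H}) \ge \sigma_n(\mathbf{T})$, which specializes to \eqref{iq:sv_rd_min} over the index range $n < N-k+1$ in which $\sigma_n(\mathbf{T})$ can be nonzero.

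For the remaining case I would observe that $(\mathbf{I} - \mathbf{U}_\mathrm{B}\mathbf{U}_\mathrm{B}^\mathsf{H})\mathbf{H}_\mathrm{B} = \mathbf{0}$, so pre-multiplying by this left projector again removes the $\mathbf{\Theta}$-dependence; inserting $\mathbf{I} = (\mathbf{I} - \mathbf{U}_\mathrm{B}\mathbf{U}_\mathrm{B}^\mathsf{H}) + \mathbf{U}_\mathrm{B}\mathbf{U}_\mathrm{B}^\mathsf{H}$ into $\mathbf{H}^\mathsf{H}\mathbf{H}$ produces the analogous split $\mathbf{H}^\mathsf{H}\mathbf{H} = \mathbf{T}\mathbf{T}^\mathsf{H} + \mathbf{C}'$ with $\mathbf{C}' \succeq \mathbf{0}$ of rank at most $k$, after which the identical Weyl argument applies. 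The main obstacle I anticipate is purely bookkeeping: checking that each perturbation term is genuinely PSD with rank at most $k$, identifying the correct (row- versus column-side) projector and Gram matrix for each case, and tracking the index shift $j = k+1$ in Weyl's inequality so that the two bounds land precisely on the ranges $n > k$ and $n < N-k+1$. Everything else rests on the architecture-independent fact that a rank-$k$ factor confines the tunable part of $\mathbf{H}$ to a $k$-dimensional subspace, which is why the bounds hold uniformly over all \gls{d}- and \gls{bd}-\gls{ris} configurations.
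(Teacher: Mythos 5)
Your proof is correct and follows essentially the same route as the paper: you arrive at the identical decomposition $\mathbf{H}\mathbf{H}^\mathsf{H} = \mathbf{T}\mathbf{T}^\mathsf{H} + \mathbf{Z}\mathbf{Z}^\mathsf{H}$ (your $\mathbf{C} = (\mathbf{H}\mathbf{V}_\mathrm{F})(\mathbf{H}\mathbf{V}_\mathrm{F})^\mathsf{H}$ equals the paper's $\mathbf{Z}\mathbf{Z}^\mathsf{H}$), merely deriving it by projector insertion rather than by expanding and regrouping cross terms. Your use of Weyl's inequality for the upper bound and PSD monotonicity for the lower bound is an interchangeable substitute for the paper's single appeal to the Cauchy interlacing theorem, and the reduction of the backward-channel case to $\mathbf{H}^\mathsf{H}$ matches the paper's remark as well.
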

		\begin{proof}
			Please refer to Appendix~\ref{ap:rank_deficient}.
		\end{proof}

		Inequality \eqref{iq:sv_rd_max} states that
		if $\mathbf{H}_\mathrm{B}$ and $\mathbf{H}_\mathrm{F}$ are at least rank $k$, then
		using a \gls{d}-\gls{ris} or \gls{bd}-\gls{ris} of \emph{sufficiently large} $N_\mathrm{S}$,
		the $n$-th singular value of $\mathbf{H}$ can be enlarged to the $(n-k)$-th singular value of $\mathbf{T}$, or suppressed to the $n$-th singular value of $\mathbf{T}$.
		Moreover, the first $k$ channel singular values are unbounded above\footnote{The energy conservation law $\sum_{n=1}^N \sigma_n^2(\mathbf{H}) \le 1$ still has to be respected in all cases.} while the last $k$ channel singular values can be suppressed to zero.
		A special case of \gls{los} channel is presented below\footnote{A similar eigenvalue result has been derived for \gls{d}-\gls{ris} only \cite{Semmler2023}.}.

		\begin{corollary}[\gls{los} channel]
			\label{co:los}
			If at least one of backward and forward channels is \gls{los}, then a \gls{d}-\gls{ris} or \gls{bd}-\gls{ris} can at most enlarge the $n$-th ($n \ge 2$) channel singular value to the $(n-1)$-th singular value of $\mathbf{T}$, or suppress the $n$-th channel singular value to the $n$-th singular value of $\mathbf{T}$.
			That is,
			\begin{equation}
				\label{iq:sv_los}
				\sigma_1(\mathbf{H}) \ge \sigma_1(\mathbf{T}) \ge {\sigma_2(\mathbf{H})} \ge \ldots \ge \sigma_{N-1}(\mathbf{T}) \ge {\sigma_N(\mathbf{H})} \ge \sigma_N(\mathbf{T}).
			\end{equation}
		\end{corollary}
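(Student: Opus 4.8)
The plan is to recognize this as the $k=1$ instance of Proposition~\ref{pp:rd}. A \gls{los} link is the outer product of the receive and transmit array responses and hence has unit rank, so if either $\mathbf{H}_\mathrm{B}$ or $\mathbf{H}_\mathrm{F}$ is \gls{los} then $\min(\rank(\mathbf{H}_\mathrm{B}),\rank(\mathbf{H}_\mathrm{F}))=1$ and the hypothesis of Proposition~\ref{pp:rd} holds with $k=1$. Setting $k=1$ in \eqref{iq:sv_rd_max} gives $\sigma_n(\mathbf{H})\le\sigma_{n-1}(\mathbf{T})$ for all $n\ge 2$, and in \eqref{iq:sv_rd_min} gives $\sigma_n(\mathbf{H})\ge\sigma_n(\mathbf{T})$ for all $n\le N-1$.

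Next I would assemble the interleaving chain from these two families. Re-indexing the upper bounds as $\sigma_m(\mathbf{T})\ge\sigma_{m+1}(\mathbf{H})$ for $m=1,\ldots,N-1$ supplies the descending steps $\sigma_1(\mathbf{T})\ge\sigma_2(\mathbf{H})$, \ldots, $\sigma_{N-1}(\mathbf{T})\ge\sigma_N(\mathbf{H})$, while the lower bounds supply the ascending steps $\sigma_n(\mathbf{H})\ge\sigma_n(\mathbf{T})$ for $n=1,\ldots,N-1$. Alternating the two reproduces \eqref{iq:sv_los} up to its penultimate term $\sigma_{N-1}(\mathbf{T})\ge\sigma_N(\mathbf{H})$. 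The only inequality not delivered by Proposition~\ref{pp:rd} is therefore the terminal one, $\sigma_N(\mathbf{H})\ge\sigma_N(\mathbf{T})$, since the Proposition asserts its lower bound only for $n<N$.

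The hard part is exactly this last inequality, and I expect it to be the main obstacle: one cannot merely claim $\sigma_N(\mathbf{T})=0$, because when $N=N_\mathrm{R}<N_\mathrm{T}$ (forward \gls{los}) the auxiliary matrix $\mathbf{T}$ can attain full rank $N$ and $\sigma_N(\mathbf{T})>0$. To close it I would unpack the structure of $\mathbf{T}$. In the forward-\gls{los} case let $\mathbf{P}=\mathbf{I}-\mathbf{V}_\mathrm{F}\mathbf{V}_\mathrm{F}^\mathsf{H}$ be the orthogonal projector onto $\ker(\mathbf{H}_\mathrm{F})$; then $\mathbf{H}_\mathrm{F}\mathbf{P}=\mathbf{0}$ annihilates the entire indirect channel, so $\mathbf{H}\mathbf{P}=\mathbf{H}_\mathrm{D}\mathbf{P}$, and using $\mathbf{P}=\mathbf{P}^\mathsf{H}=\mathbf{P}^2$ one gets $(\mathbf{H}\mathbf{P})(\mathbf{H}\mathbf{P})^\mathsf{H}=\mathbf{H}_\mathrm{D}\mathbf{P}\mathbf{H}_\mathrm{D}^\mathsf{H}=\mathbf{T}\mathbf{T}^\mathsf{H}$. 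Thus $\mathbf{T}$ is simply $\mathbf{H}$ post-multiplied by a projector, and since a projector is a contraction the submultiplicative bound $\sigma_n(\mathbf{H}\mathbf{P})\le\sigma_n(\mathbf{H})\lVert\mathbf{P}\rVert=\sigma_n(\mathbf{H})$ holds for every $n$, in particular $n=N$. The backward-\gls{los} case is identical after replacing $\mathbf{P}$ by the left projector $\mathbf{I}-\mathbf{U}_\mathrm{B}\mathbf{U}_\mathrm{B}^\mathsf{H}$ and pre-multiplying $\mathbf{H}$. Feeding this terminal inequality into the alternating chain completes \eqref{iq:sv_los}.
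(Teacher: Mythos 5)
Your proof is correct, and its core is exactly the specialization the paper itself uses: Corollary~\ref{co:los} is read off from Proposition~\ref{pp:rd} with $k=1$. Where you go beyond the paper is the terminal link $\sigma_N(\mathbf{H}) \ge \sigma_N(\mathbf{T})$, and your observation there is sharp: read literally, \eqref{iq:sv_rd_min} only covers $n < N-k+1 = N$, so the paper's one-line proof (``a direct result of \eqref{iq:sv_rd} with $k=1$'') does not formally deliver the last inequality in \eqref{iq:sv_los}; nor can one dismiss it as trivial, since $\sigma_N(\mathbf{T})>0$ is possible when $N_\mathrm{R}<N_\mathrm{T}$ in the forward-LoS case. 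Your patch is valid: with $\mathbf{P} = \mathbf{I} - \mathbf{V}_\mathrm{F}\mathbf{V}_\mathrm{F}^\mathsf{H}$ one has $\mathbf{H}_\mathrm{F}\mathbf{P}=\mathbf{0}$, hence $\mathbf{H}\mathbf{P} = \mathbf{H}_\mathrm{D}\mathbf{P}$ and $(\mathbf{H}\mathbf{P})(\mathbf{H}\mathbf{P})^\mathsf{H} = \mathbf{T}\mathbf{T}^\mathsf{H}$, so $\sigma_n(\mathbf{T}) = \sigma_n(\mathbf{H}\mathbf{P}) \le \sigma_n(\mathbf{H})\,\sigma_1(\mathbf{P}) \le \sigma_n(\mathbf{H})$ for every $n$, in particular $n=N$, and symmetrically with the left projector $\mathbf{I}-\mathbf{U}_\mathrm{B}\mathbf{U}_\mathrm{B}^\mathsf{H}$ in the backward-LoS case. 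Two remarks for comparison. First, the same conclusion also falls out of the paper's own proof of Proposition~\ref{pp:rd}: there $\mathbf{G} = \mathbf{H}\mathbf{H}^\mathsf{H} = \mathbf{T}\mathbf{T}^\mathsf{H} + \mathbf{Z}\mathbf{Z}^\mathsf{H} \succeq \mathbf{T}\mathbf{T}^\mathsf{H}$, and Weyl monotonicity under a positive semidefinite perturbation gives $\lambda_n(\mathbf{G}) \ge \lambda_n(\mathbf{T}\mathbf{T}^\mathsf{H})$ for \emph{all} $n$, so the restriction $n < N-k+1$ in \eqref{iq:sv_rd_min} is more conservative than what the underlying argument proves; once that is noted, the Corollary really is ``direct.'' Second, your projector-contraction route has the advantage of being self-contained at the statement level (it uses only $\mathbf{T}$, $\mathbf{V}_\mathrm{F}$, $\mathbf{U}_\mathrm{B}$, never the internals of the appendix), and in fact it re-derives all of the lower bounds $\sigma_n(\mathbf{H})\ge\sigma_n(\mathbf{T})$ at once, not just the missing one.
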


		\begin{proof}
			This is a direct result of \eqref{iq:sv_rd} with $k = 1$.
		\end{proof}

		We emphasize that Proposition~\ref{pp:rd} and Corollary~\ref{co:los} apply to both \gls{d}- and \gls{bd}-\gls{ris} configurations regardless of the status of the direct channel.
		Out of $2N$ bounds in \eqref{iq:sv_rd} or \eqref{iq:sv_los}, $N$ of them can be \emph{simultaneously} tight as $N_\mathrm{S} \to \infty$, namely when the direct channel becomes negligible.
		For a finite $N_\mathrm{S}$, the \gls{ris} may prioritize a subset of those by aligning the corresponding modes.
		We will show by simulation that \gls{bd}-\gls{ris} outperforms \gls{d}-\gls{ris} on this purpose.
		Proposition~\ref{pp:rd} complements the \gls{dof} result in Proposition~\ref{pp:dof} by quantifying the dynamic range of extreme singular values in low-multipath scenarios.
		They reveal a diminishing return of increasing the number of \gls{bd}-\gls{ris} elements and group size in enhancing channel shaping capability. Therefore, the bounds can be used to guide practical \gls{ris} configurations, especially in millimeter-wave and terahertz systems under sparse propagation environment, for a balanced performance-complexity tradeoff.
		Next, we progress to quantify the limits of singular value redistribution when the direct channel is negligible.

		\begin{proposition}[Negligible direct channel]
			\label{pp:nd}
			If the direct channel is negligible, then a fully-connected \gls{bd}-\gls{ris} of arbitrary number of elements can manipulate the channel singular values up to
			\begin{equation}
				\sv(\mathbf{H}) = \sv(\mathbf{BF}),
			\end{equation}
			where $\mathbf{B}$ and $\mathbf{F}$ are any matrices satisfying $\sv(\mathbf{B})=\sv(\mathbf{H}_\mathrm{B})$ and $\sv(\mathbf{F})=\sv(\mathbf{H}_\mathrm{F})$.
		\end{proposition}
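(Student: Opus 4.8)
The plan is to reduce the claim to the unitary invariance of singular values, exploiting that a fully-connected BD-RIS scattering matrix is an \emph{arbitrary} element of $\mathbb{U}^{N_\mathrm{S} \times N_\mathrm{S}}$. First I would set $\mathbf{H}_\mathrm{D} = \mathbf{0}$ and $G = 1$, so that $\mathbf{H} = \mathbf{H}_\mathrm{B} \mathbf{\Theta} \mathbf{H}_\mathrm{F}$ with $\mathbf{\Theta}$ ranging freely over $\mathbb{U}^{N_\mathrm{S} \times N_\mathrm{S}}$. Writing the SVDs $\mathbf{H}_\mathrm{B} = \mathbf{U}_\mathrm{B} \mathbf{\Sigma}_\mathrm{B} \mathbf{V}_\mathrm{B}^\mathsf{H}$ and $\mathbf{H}_\mathrm{F} = \mathbf{U}_\mathrm{F} \mathbf{\Sigma}_\mathrm{F} \mathbf{V}_\mathrm{F}^\mathsf{H}$ and cancelling the outer unitary factors $\mathbf{U}_\mathrm{B}$ and $\mathbf{V}_\mathrm{F}$ (which leave $\sv(\cdot)$ unchanged), I obtain $\sv(\mathbf{H}) = \sv(\mathbf{\Sigma}_\mathrm{B} \mathbf{W} \mathbf{\Sigma}_\mathrm{F})$ with $\mathbf{W} \triangleq \mathbf{V}_\mathrm{B}^\mathsf{H} \mathbf{\Theta} \mathbf{U}_\mathrm{F}$.

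The crux is that $\mathbf{\Theta} \mapsto \mathbf{W}$ is a bijection of $\mathbb{U}^{N_\mathrm{S} \times N_\mathrm{S}}$ onto itself, with inverse $\mathbf{W} \mapsto \mathbf{V}_\mathrm{B} \mathbf{W} \mathbf{U}_\mathrm{F}^\mathsf{H}$; hence as $\mathbf{\Theta}$ sweeps all unitaries, so does $\mathbf{W}$, and the achievable set of profiles is exactly $\{ \sv(\mathbf{\Sigma}_\mathrm{B} \mathbf{W} \mathbf{\Sigma}_\mathrm{F}) : \mathbf{W} \in \mathbb{U}^{N_\mathrm{S} \times N_\mathrm{S}} \}$.

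It then remains to identify this set with $\{ \sv(\mathbf{BF}) \}$ taken over all $\mathbf{B}, \mathbf{F}$ matching the prescribed singular values, which I would do by two inclusions. For the converse, any admissible $\mathbf{\Theta}$ already gives $\sv(\mathbf{H}) = \sv(\mathbf{BF})$ under the explicit choice $\mathbf{B} = \mathbf{\Sigma}_\mathrm{B}$, $\mathbf{F} = \mathbf{W} \mathbf{\Sigma}_\mathrm{F}$, whose singular values coincide with those of $\mathbf{H}_\mathrm{B}$ and $\mathbf{H}_\mathrm{F}$. For achievability, given any $\mathbf{B} = \mathbf{P}_\mathrm{B} \mathbf{\Sigma}_\mathrm{B} \mathbf{Q}_\mathrm{B}^\mathsf{H}$ and $\mathbf{F} = \mathbf{P}_\mathrm{F} \mathbf{\Sigma}_\mathrm{F} \mathbf{Q}_\mathrm{F}^\mathsf{H}$, the choice $\mathbf{\Theta} = \mathbf{V}_\mathrm{B} \mathbf{Q}_\mathrm{B}^\mathsf{H} \mathbf{P}_\mathrm{F} \mathbf{U}_\mathrm{F}^\mathsf{H} \in \mathbb{U}^{N_\mathrm{S} \times N_\mathrm{S}}$ forces $\mathbf{W} = \mathbf{Q}_\mathrm{B}^\mathsf{H} \mathbf{P}_\mathrm{F}$ and yields $\sv(\mathbf{H}) = \sv(\mathbf{\Sigma}_\mathrm{B} \mathbf{Q}_\mathrm{B}^\mathsf{H} \mathbf{P}_\mathrm{F} \mathbf{\Sigma}_\mathrm{F}) = \sv(\mathbf{BF})$.

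I do not anticipate a genuine obstacle: the argument is essentially bookkeeping with SVDs plus the bijection observation. The only care points are (i) the rectangular, possibly zero-padded shapes of $\mathbf{\Sigma}_\mathrm{B}$ and $\mathbf{\Sigma}_\mathrm{F}$ when $N_\mathrm{S}$ exceeds $N_\mathrm{R}$ or $N_\mathrm{T}$, which do not disturb invariance under the square unitaries $\mathbf{U}_\mathrm{B}, \mathbf{V}_\mathrm{F}, \mathbf{W}$; and (ii) reading ``negligible'' as the exact limit $\mathbf{H}_\mathrm{D} = \mathbf{0}$, so that the stated identity is the bound attained as the direct channel vanishes. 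The explicit singular-value, power-gain, and capacity bounds promised afterward would then follow by invoking the classical (Horn-type) multiplicative-majorization characterization of $\sv(\mathbf{BF})$ in terms of $\sv(\mathbf{H}_\mathrm{B})$ and $\sv(\mathbf{H}_\mathrm{F})$.
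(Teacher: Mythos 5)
Your proof is correct and takes essentially the same route as the paper's Appendix proof: both parameterize the fully-connected BD-RIS as $\mathbf{\Theta} = \mathbf{V}_\mathrm{B} \mathbf{X} \mathbf{U}_\mathrm{F}^\mathsf{H}$ (your $\mathbf{W}$ is the paper's $\mathbf{X}$) and reduce the claim to the unitary invariance $\sv(\mathbf{H}) = \sv(\mathbf{\Sigma}_\mathrm{B} \mathbf{X} \mathbf{\Sigma}_\mathrm{F})$ with $\mathbf{X}$ sweeping all of $\mathbb{U}^{N_\mathrm{S} \times N_\mathrm{S}}$. Your explicit two-inclusion bookkeeping simply spells out what the paper compresses into its final chain of equalities with the arbitrary unitaries $\bar{\mathbf{U}}_\mathrm{B}, \bar{\mathbf{V}}_\mathrm{B}, \bar{\mathbf{U}}_\mathrm{F}, \bar{\mathbf{V}}_\mathrm{F}$.
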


		\begin{proof}
			Please refer to Appendix~\ref{ap:nd}.
		\end{proof}

		Proposition~\ref{pp:nd} says that if the direct channel is negligible and the \gls{bd}-\gls{ris} is fully-connected, the only singular value bounds on the equivalent channel are those on the product of unitary-transformed backward and forward channels.
		It is \emph{not necessarily} an asymptotic result and does \emph{not} depend on any relationship between $N_\mathrm{T}$, $N_\mathrm{S}$, and $N_\mathrm{R}$.
		Its importance lies in that our channel shaping question can be recast as a well-studied linear algebra question: \emph{How the singular values of matrix product are bounded by the singular values of its individual factors?}
		The question is partially answered in Corollaries \ref{co:nd_sv_prod_subset} -- \ref{co:nd_sv_indl} over definitions $\bar{N} = \max(N_\mathrm{T},N_\mathrm{S},N_\mathrm{R})$ and $\sigma_n(\mathbf{H})=\sigma_n(\mathbf{H}_\mathrm{F})=\sigma_n(\mathbf{H}_\mathrm{B})=0, \ \forall n \in [\bar{N}] \setminus [N]$.
		This is equivalent to padding zero blocks at the end of $\mathbf{H}, \mathbf{H}_\mathrm{B}, \mathbf{H}_\mathrm{F}$ to make square matrices of dimension $\bar{N}$.
		The results are by no means complete and interested readers are referred to \cite[Chapter 16, 24]{Hogben2013} and \cite[Chapter 3]{Horn1994} for more information.

		\begin{corollary}[Product of subset of singular values]
			\label{co:nd_sv_prod_subset}
			If the direct channel is negligible,
			then the product of subset of singular values of $\mathbf{H}$ is bounded from above by those of $\mathbf{H}_\mathrm{B}$ and $\mathbf{H}_\mathrm{F}$, that is,
			\begin{equation}
				\label{iq:horn}
				\prod_{k \in {K}} \sigma_k(\mathbf{H}) \le \prod_{i \in {I}} \sigma_i(\mathbf{H}_\mathrm{B}) \prod_{j \in {J}} \sigma_j(\mathbf{H}_\mathrm{F}),
			\end{equation}
			for all admissible triples $(I, J, K) \in T_r^{\bar{N}}$ with $r < \bar{N}$, where
			\begin{equation*}
				\begin{split}
					T_r^{\bar{N}} \triangleq \Bigl\{(I, J, K) \in U_r^{\bar{N}} \bigm\vert & \forall p < r, \ \forall (F, G, H) \in T_p^r,                                              \\
					                                                                       & \sum_{f \in F} i_f + \sum_{g \in G} j_g \le \sum_{h \in H} k_h + \frac{p(p+1)}{2} \Bigr\},
				\end{split}
			\end{equation*}
			\begin{equation*}
				U_r^{\bar{N}} \triangleq \Bigl\{(I, J, K) \subseteq [\bar{N}]^3 \bigm\vert \sum_{i \in I} i + \sum_{j \in J} j = \sum_{k \in K} k + \frac{r(r+1)}{2}\Bigr\}.
			\end{equation*}
		\end{corollary}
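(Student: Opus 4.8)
The plan is to reduce the statement to the classical \emph{multiplicative Horn problem} for singular values of a matrix product and then invoke the known inequalities. First I would exploit the product structure that the negligible direct channel induces: with $\mathbf{H}_\mathrm{D}$ negligible, the equivalent channel in \eqref{eq:channel} reduces to $\mathbf{H} = \mathbf{H}_\mathrm{B}\mathbf{\Theta}\mathbf{H}_\mathrm{F}$. Since $\mathbf{\Theta}$ is a (block-diagonal) unitary matrix, $\mathbf{H}_\mathrm{B}\mathbf{\Theta}$ shares its singular values with $\mathbf{H}_\mathrm{B}$, so $\mathbf{H}$ is a product of two factors whose singular-value tuples are exactly $\sv(\mathbf{H}_\mathrm{B})$ and $\sv(\mathbf{H}_\mathrm{F})$. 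By Proposition~\ref{pp:nd}, characterizing the achievable $\sv(\mathbf{H})$ is therefore equivalent to characterizing $\sv(\mathbf{B}\mathbf{F})$ over all matrices $\mathbf{B},\mathbf{F}$ with $\sv(\mathbf{B})=\sv(\mathbf{H}_\mathrm{B})$ and $\sv(\mathbf{F})=\sv(\mathbf{H}_\mathrm{F})$. After the zero-padding convention that makes all three matrices square of size $\bar{N}$, this is precisely the multiplicative Horn problem in dimension $\bar{N}$.

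Next I would pass to logarithms to connect the multiplicative problem with the additive Hermitian eigenvalue problem. Writing $x_n=\log\sigma_n(\mathbf{B})$, $y_n=\log\sigma_n(\mathbf{F})$, and $z_n=\log\sigma_n(\mathbf{B}\mathbf{F})$, the sought inequality \eqref{iq:horn} is just the exponential of the additive statement $\sum_{k\in K} z_k \le \sum_{i\in I} x_i + \sum_{j\in J} y_j$. The key fact---due to Klyachko and Thompson--Freede, and tabulated in \cite[Chapter 16]{Hogben2013} and \cite[Chapter 3]{Horn1994}---is that the log-singular-value tuples of $\mathbf{B}$, $\mathbf{F}$, and $\mathbf{B}\mathbf{F}$ obey exactly the same family of linear inequalities that govern the eigenvalues of a sum of two Hermitian matrices. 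Hence the admissible index triples are those of the additive Horn problem, namely the recursively defined Horn set $T_r^{\bar{N}}$, and each such triple yields a valid upper bound on $\prod_{k\in K}\sigma_k(\mathbf{H})$.

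Finally I would verify that the sets $U_r^{\bar{N}}$ and $T_r^{\bar{N}}$ stated in the Corollary coincide with the standard Horn recursion: $U_r^{\bar{N}}$ enforces the cardinality-$r$ normalization $\sum_{i\in I} i + \sum_{j\in J} j = \sum_{k\in K} k + r(r+1)/2$, while $T_r^{\bar{N}}$ retains only those triples that also satisfy all lower-order constraints indexed by $T_p^r$ for every $p<r$. Matching these to the literature's definitions confirms that \eqref{iq:horn} holds for every $(I,J,K)\in T_r^{\bar{N}}$ with $r<\bar{N}$, completing the argument.

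I expect the main obstacle to be the additive--multiplicative correspondence itself. The logarithm does \emph{not} naively turn a product $\mathbf{B}\mathbf{F}$ into a Hermitian sum, so the equality of the two inequality families is a nontrivial transfer theorem (Klyachko's passage between the symmetric space $GL_{\bar{N}}/U_{\bar{N}}$ and its tangent space of Hermitian matrices) rather than a formal manipulation. The delicate part of the write-up is therefore not any computation but the precise invocation of this transfer principle and the check that the Horn recursion reproduced in the Corollary is exactly the one for which the transfer is established; the product reduction and the exponentiation are routine.
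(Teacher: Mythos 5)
Your proposal is correct and follows essentially the same route as the paper: reduce via Proposition~\ref{pp:nd} (unitary invariance of singular values and the zero-padding convention) to the multiplicative Horn problem for $\sv(\mathbf{B}\mathbf{F})$, and then invoke the known Horn-type inequalities for singular values of matrix products, which is exactly what the paper does by citing \cite[Theorem~8]{Fulton2000}. Your additional commentary on the additive--multiplicative transfer (Klyachko's correspondence) is accurate background on how that cited result is established, but it is not a different proof strategy.
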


		\begin{proof}
			Please refer to \cite[Theorem~8]{Fulton2000}.
		\end{proof}


		Corollary~\ref{co:nd_sv_prod_subset} applies to arbitrary number of \gls{ris} elements as inherited from Proposition~\ref{pp:nd}.
		The set \eqref{iq:horn}, also recognized as a variation of Horn's inequality \cite{Bhatia2001}, provides a comprehensive analytical answer to the shaping question -- it can be interpreted as the \emph{outer bounds} of the achievable singular value region of the \gls{bd}-\gls{ris}-aided \gls{mimo} channel.
		An example is given by \eqref{iq:sv_3_bounds} and their visualization in Fig.~\ref{fg:singular_region}.
		Remarkably, the number of inequalities in \eqref{iq:horn} increases exponentially with $N_\mathrm{S}$\footnote{For example, the number of inequalities described by \eqref{iq:horn} grows from 12 to 2062 when $N_\mathrm{S}$ increases from 3 to 7.}.
		At a first glance the results may seem excessive to be useful; but they are given in this form to be general and one can pick any \emph{subset} of them for specific applications.
		Below we showcase how to induce some ready-to-use wireless performance bounds with closed-form \gls{bd}-\gls{ris} solutions from Corollary~\ref{co:nd_sv_prod_subset}.
		The applications mentioned therein are non-exhaustive; we really hope our ingenious readers can discover more results specific to their research.

		\begin{corollary}[Product of some largest or smallest singular values]
			\label{co:nd_sv_prod_tail}
			If the direct channel is negligible,
			then the product of the first (resp. last) $k$ singular values of $\mathbf{H}$ is bounded from above (resp. below) by those of $\mathbf{H}_\mathrm{B}$ and $\mathbf{H}_\mathrm{F}$, that is,
			\begin{subequations}
				\begin{align}
					\prod_{n=1}^k \sigma_n(\mathbf{H})                              & \le \prod_{n=1}^k \sigma_n(\mathbf{H}_\mathrm{B}) \sigma_n(\mathbf{H}_\mathrm{F}), \label{iq:sv_nd_prod_largest}                               \\
					\prod_{n=\bar{N}}^{\mathclap{\bar{N}-k+1}} \sigma_n(\mathbf{H}) & \ge \prod_{n=\bar{N}}^{\mathclap{\bar{N}-k+1}} \sigma_n(\mathbf{H}_\mathrm{B}) \sigma_n(\mathbf{H}_\mathrm{F}). \label{iq:sv_nd_prod_smallest}
				\end{align}
			\end{subequations}
		\end{corollary}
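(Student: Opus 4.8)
The plan is to leverage Proposition~\ref{pp:nd}, which already reduces the negligible-direct-channel problem to bounding the singular values of a matrix product: since $\sv(\mathbf{H}) = \sv(\mathbf{B}\mathbf{F})$ for matrices with $\sv(\mathbf{B}) = \sv(\mathbf{H}_\mathrm{B})$ and $\sv(\mathbf{F}) = \sv(\mathbf{H}_\mathrm{F})$, both claims become statements purely about $\sigma_n(\mathbf{B}\mathbf{F})$ versus $\sigma_n(\mathbf{B})$ and $\sigma_n(\mathbf{F})$. I would treat the upper bound \eqref{iq:sv_nd_prod_largest} and the lower bound \eqref{iq:sv_nd_prod_smallest} separately, deriving the latter from the former by a determinant/duality argument.

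For the upper bound, I would specialize the Horn-type inequality of Corollary~\ref{co:nd_sv_prod_subset} to the diagonal triple $I = J = K = [k]$ with $r = k$. First I would verify membership in $U_k^{\bar{N}}$: since $\sum_{n=1}^k n = \frac{k(k+1)}{2}$, both sides of the defining identity equal $k(k+1)$, so $([k],[k],[k]) \in U_k^{\bar{N}}$. I would then invoke the standard fact that the diagonal triple is the canonical admissible Horn triple, i.e. $([k],[k],[k]) \in T_k^{\bar{N}}$ for every $k < \bar{N}$. Equivalently, one may bypass Corollary~\ref{co:nd_sv_prod_subset} entirely and argue directly through the $k$-th compound matrix $C_k(\cdot)$, using the Cauchy--Binet identity $C_k(\mathbf{B}\mathbf{F}) = C_k(\mathbf{B}) C_k(\mathbf{F})$ together with submultiplicativity of the spectral norm, since $\sigma_1(C_k(\cdot)) = \prod_{n=1}^k \sigma_n(\cdot)$. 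Either route yields \eqref{iq:sv_nd_prod_largest} immediately.

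For the lower bound I would split along the footnoted dichotomy. When $\bar{N} = N$ (i.e. $N_\mathrm{T} = N_\mathrm{S} = N_\mathrm{R}$), all three matrices are genuinely square and the full products coincide: $\prod_{n=1}^{N} \sigma_n(\mathbf{H}) = \lvert \det(\mathbf{B}\mathbf{F}) \rvert = \lvert \det \mathbf{B} \rvert \, \lvert \det \mathbf{F} \rvert = \prod_{n=1}^{N}\sigma_n(\mathbf{H}_\mathrm{B})\sigma_n(\mathbf{H}_\mathrm{F})$. Applying the upper bound \eqref{iq:sv_nd_prod_largest} with $k$ replaced by $N-k$ and dividing this full-product identity by it yields \eqref{iq:sv_nd_prod_smallest} with $\bar{N} = N$. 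When $\bar{N} \ne N$, the zero-padding convention $\sigma_n(\cdot) = 0$ for $n \in [\bar{N}] \setminus [N]$ forces $\sigma_{\bar{N}}(\mathbf{H}) = 0$ and $\sigma_{\bar{N}}(\mathbf{H}_\mathrm{B})\sigma_{\bar{N}}(\mathbf{H}_\mathrm{F}) = 0$, so both sides of \eqref{iq:sv_nd_prod_smallest} vanish and the bound holds trivially.

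I expect the main subtlety to lie not in the algebra but in the rank-deficient corner of the square case, where the division step is formally illegitimate. The fix I would use is to observe that whenever $\prod_{n=1}^{N}\sigma_n(\mathbf{H}_\mathrm{B})\sigma_n(\mathbf{H}_\mathrm{F}) = 0$, at least one of $\sigma_N(\mathbf{H}_\mathrm{B})$, $\sigma_N(\mathbf{H}_\mathrm{F})$ vanishes, so the right-hand side of \eqref{iq:sv_nd_prod_smallest} contains a zero factor; simultaneously the full-product identity forces $\sigma_N(\mathbf{H}) = 0$, so the left-hand side also vanishes and the inequality degenerates to $0 \ge 0$. Thus the division argument may be applied on the full-rank locus and extended by this direct check on the rank-deficient locus, completing the proof.
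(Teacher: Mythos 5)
Your proposal is correct and follows essentially the same route as the paper's own proof: the upper bound \eqref{iq:sv_nd_prod_largest} is the diagonal special case of Horn's inequality \eqref{iq:horn} with $r=k$, and the lower bound \eqref{iq:sv_nd_prod_smallest} is obtained by dividing the full-product determinant identity $\prod_{n=1}^{\bar{N}}\sigma_n(\mathbf{H}) = \prod_{n=1}^{\bar{N}}\sigma_n(\mathbf{H}_\mathrm{B})\sigma_n(\mathbf{H}_\mathrm{F})$ by the complementary upper bound. Your explicit treatment of the corner cases (the $\bar{N}\ne N$ dichotomy and the rank-deficient locus, where both sides degenerate to $0 \ge 0$) is slightly more careful than the paper's brief appeal to ``non-negativity of singular values,'' and your compound-matrix alternative via Cauchy--Binet is a valid shortcut, but neither changes the substance of the argument.
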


		\begin{proof}
			Please refer to Appendix~\ref{ap:nd_sv_prod_tail}.
		\end{proof}

		Corollary~\ref{co:nd_sv_prod_tail} reveals the shaping limits on the product of some extreme channel singular values.
		The lower bounds \eqref{iq:sv_nd_prod_smallest} coincide at zero when $\bar{N} \ne N$ (i.e., $N_\mathrm{T} = N_\mathrm{S} = N_\mathrm{R}$ being false).
		These bounds can be applied, for instance, as a shortcut to establish the capacity of \gls{bd}-\gls{ris}-aided \gls{mimo} channels at extreme \gls{snr}, as shown in Corollary~\ref{co:nd_capacity_snr_extreme}.
		In the special case $k=1$, we arrive at the upper bound on the largest channel singular value $\sigma_1(\mathbf{H}) {\le} \sigma_1(\mathbf{H}_\mathrm{B}) \sigma_1(\mathbf{H}_\mathrm{F})$.
		This is particularly useful for \gls{mimo} wireless power transfer with \gls{rf} combining where the harvested power depends merely on, and is a quartic function of, the largest channel singular value \cite{Shen2021}.
		A closed-form \gls{bd}-\gls{ris} solution to attain this upper bound can be found below in \eqref{eq:ris_nd_sv_indl_max}.

		\begin{corollary}[Individual singular value]
			\label{co:nd_sv_indl}
			If the direct channel is negligible,
			then the $n$-th channel singular value can be manipulated within the range of
			\begin{equation}
				\label{iq:sv_nd_indl}
				\max_{\mathclap{i+j=n+N_\mathrm{S}}} \ \sigma_i(\mathbf{H}_\mathrm{B}) \sigma_j(\mathbf{H}_\mathrm{F}) \le \sigma_n(\mathbf{H}) \le \min_{\mathclap{i+j=n+1}} \ \sigma_i(\mathbf{H}_\mathrm{B}) \sigma_j(\mathbf{H}_\mathrm{F}),
			\end{equation}
			where $(i, j) \in [N_\mathrm{S}]^2$.
			The upper and lower bounds are attained respectively at
			\begin{subequations}
				\label{eq:ris_nd_sv_indl}
				\begin{align}
					\mathbf{\Theta}_{\textnormal{sv-}n\textnormal{-max}}^\textnormal{MIMO-ND} & = \mathbf{V}_\mathrm{B} \mathbf{P} \mathbf{U}_\mathrm{F}^\mathsf{H}, \label{eq:ris_nd_sv_indl_max} \\
					\mathbf{\Theta}_{\textnormal{sv-}n\textnormal{-min}}^\textnormal{MIMO-ND} & = \mathbf{V}_\mathrm{B} \mathbf{Q} \mathbf{U}_\mathrm{F}^\mathsf{H}, \label{eq:ris_nd_sv_indl_min}
				\end{align}
			\end{subequations}
			where $\mathbf{V}_\mathrm{B}, \mathbf{U}_\mathrm{F} \in \mathbb{U}^{N_\mathrm{S} \times N_\mathrm{S}}$ are any right and left singular matrices of $\mathbf{H}_\mathrm{B}$ and $\mathbf{H}_\mathrm{F}$, respectively,
			and $\mathbf{P},\mathbf{Q} \in \mathbb{P}^{n \times n}$ are any permutation matrices of dimension $N_\mathrm{S}$ satisfying:
			\begin{itemize}
				\item The $(i, j)$-th entry is $1$, where
				\begin{subnumcases}{(i, j) =}
					\ \underset{\mathclap{i+j=n+1}}{\arg\min} \ \sigma_i(\mathbf{H}_\mathrm{B}) \sigma_j(\mathbf{H}_\mathrm{F}) & for $\mathbf{P}$, \label{eq:idx_nd_sv_indl_max} \\
					\ \underset{\mathclap{i+j=n+N_\mathrm{S}}}{\arg\max} \ \sigma_i(\mathbf{H}_\mathrm{B}) \sigma_j(\mathbf{H}_\mathrm{F}) & for $\mathbf{Q}$, \label{eq:idx_nd_sv_indl_min}
				\end{subnumcases}
				and ties may be broken arbitrarily;
				\item After deleting the $i$-th row and $j$-th column, the resulting submatrix $\mathbf{Y} \in \mathbb{P}^{(N_\mathrm{S}-1) \times (N_\mathrm{S}-1)}$ is any permutation matrix satisfying
				\begin{subequations}
					\begin{alignat}{2}
						\sigma_{n{-}1}(\hat{\mathbf{\Sigma}}_{\mathrm{B}} \mathbf{Y} \hat{\mathbf{\Sigma}}_{\mathrm{F}}) & {\ge} \ \ \min_{\mathclap{i+j=n+1}} \ \sigma_i(\mathbf{H}_\mathrm{B}) \sigma_j(\mathbf{H}_\mathrm{F})            &  & \text{ for } \mathbf{P}, \label{eq:perm_nd_sv_indl_max} \\
						\sigma_{n{+}1}(\hat{\mathbf{\Sigma}}_{\mathrm{B}} \mathbf{Y} \hat{\mathbf{\Sigma}}_{\mathrm{F}}) & {\le} \ \ \max_{\mathclap{i+j=n+N_\mathrm{S}}} \ \sigma_i(\mathbf{H}_\mathrm{B}) \sigma_j(\mathbf{H}_\mathrm{F}) &  & \text{ for } \mathbf{Q}, \label{eq:perm_nd_sv_indl_min}
					\end{alignat}
				\end{subequations}
				where $\hat{\mathbf{\Sigma}}_{\mathrm{B}}$ and $\hat{\mathbf{\Sigma}}_{\mathrm{F}}$ are diagonal singular value matrices of $\mathbf{H}_\mathrm{B}$ and $\mathbf{H}_\mathrm{F}$ with both $i$-th row and $j$-th column deleted, respectively.
			\end{itemize}
		\end{corollary}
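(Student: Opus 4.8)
The plan is to lean on Proposition~\ref{pp:nd}, which collapses the achievable set to $\{\sv(\mathbf{B}\mathbf{F}) : \sv(\mathbf{B})=\sv(\mathbf{H}_\mathrm{B}),\ \sv(\mathbf{F})=\sv(\mathbf{H}_\mathrm{F})\}$, so that \eqref{iq:sv_nd_indl} becomes the classical problem of locating one order statistic of a matrix-product spectrum. For the upper bound I would invoke the singleton case of Corollary~\ref{co:nd_sv_prod_subset}: taking $I=\{i\}$, $J=\{j\}$, $K=\{n\}$ makes $U_1^{\bar{N}}$ (hence $T_1^{\bar{N}}$, the constraint over $p<1$ being vacuous) reduce to $i+j=n+1$, so \eqref{iq:horn} gives $\sigma_n(\mathbf{H})\le\sigma_i(\mathbf{H}_\mathrm{B})\sigma_j(\mathbf{H}_\mathrm{F})$, and minimising over the anti-diagonal yields the right-hand inequality. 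For the lower bound I would apply this same inequality to the factorisation $\mathbf{F}=\mathbf{B}^{-1}(\mathbf{B}\mathbf{F})$ for invertible factors, use $\sigma_s(\mathbf{B}^{-1})=1/\sigma_{N_\mathrm{S}-s+1}(\mathbf{B})$ to rearrange into $\sigma_{i+j-N_\mathrm{S}}(\mathbf{B}\mathbf{F})\ge\sigma_i(\mathbf{H}_\mathrm{B})\sigma_j(\mathbf{H}_\mathrm{F})$, then extend to rank-deficient factors through the zero-padding convention $\sigma_n(\cdot)=0,\ \forall n\in[\bar{N}]\setminus[N]$ together with a continuity argument; maximising over $i+j=n+N_\mathrm{S}$ gives the left-hand inequality.

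For the achievability claims I would first restrict the RIS to the permutation form \eqref{eq:ris_nd_sv_indl}, i.e.\ $\mathbf{\Theta}=\mathbf{V}_\mathrm{B}\mathbf{P}\mathbf{U}_\mathrm{F}^\mathsf{H}$, which after cancelling the outer unitary factors gives $\mathbf{H}=\mathbf{U}_\mathrm{B}\mathbf{\Sigma}_\mathrm{B}\mathbf{P}\mathbf{\Sigma}_\mathrm{F}\mathbf{V}_\mathrm{F}^\mathsf{H}$. The core $\mathbf{\Sigma}_\mathrm{B}\mathbf{P}\mathbf{\Sigma}_\mathrm{F}$ is a monomial matrix whose single nonzero entry in row $i$ equals $\sigma_i(\mathbf{H}_\mathrm{B})\sigma_{\pi(i)}(\mathbf{H}_\mathrm{F})$, where $\pi$ is the permutation encoded by $\mathbf{P}$; hence $\sv(\mathbf{H})$ is exactly the sorted multiset $\{\sigma_i(\mathbf{H}_\mathrm{B})\sigma_{\pi(i)}(\mathbf{H}_\mathrm{F})\}_{i\in[N_\mathrm{S}]}$. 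Each permutation RIS therefore turns the problem into choosing a bijection and reading off a prescribed order statistic of these products.

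The upper bound is then reached by a matching argument. Writing $c=\min_{i+j=n+1}\sigma_i(\mathbf{H}_\mathrm{B})\sigma_j(\mathbf{H}_\mathrm{F})$ attained at $(i^*,j^*)$, the $n$ anti-diagonal pairs $\{(i,n+1-i):i\in[n]\}$ all have product $\ge c$ by definition of the minimum and use distinct rows and columns, so they form a size-$n$ partial matching. Fixing $P_{i^*j^*}=1$ as in \eqref{eq:idx_nd_sv_indl_max} and completing $\pi$ arbitrarily off the anti-diagonal yields at least $n$ products $\ge c$, forcing $\sigma_n(\mathbf{H})\ge c$, which combined with the upper-bound inequality gives equality. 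The $n-1$ leftover anti-diagonal pairs persist after deleting row $i^*$ and column $j^*$, which is precisely why a submatrix permutation $\mathbf{Y}$ satisfying \eqref{eq:perm_nd_sv_indl_max} exists. The lower bound follows dually from the anti-diagonal $\{(i,n+N_\mathrm{S}-i):i=n,\ldots,N_\mathrm{S}\}$ of size $N_\mathrm{S}-n+1$, whose products are all $\le c'=\max_{i+j=n+N_\mathrm{S}}\sigma_i(\mathbf{H}_\mathrm{B})\sigma_j(\mathbf{H}_\mathrm{F})$; this forces at least $N_\mathrm{S}-n+1$ products $\le c'$, hence $\sigma_n(\mathbf{H})\le c'$, matching the lower-bound inequality and giving \eqref{eq:perm_nd_sv_indl_min}.

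The main obstacle is not the inequalities but the achievability bookkeeping. I expect the delicate points to be (i) confirming that the anti-diagonal partial matchings always extend to genuine $N_\mathrm{S}$-dimensional permutations compatible with the zero-padding, so that padded zero singular values never inflate the count of products above or below the threshold, and (ii) ensuring that ties in the $\arg\min$/$\arg\max$ and among equal products still leave enough freedom for the recursive submatrix conditions \eqref{eq:perm_nd_sv_indl_max}--\eqref{eq:perm_nd_sv_indl_min} to hold. Both I would dispatch by observing that only the $n$ (resp.\ $N_\mathrm{S}-n+1$) controlled products fix the relevant order statistic, so any completion of $\pi$ is admissible and any tie-breaking rule preserves the leftover anti-diagonal pairs that certify $\mathbf{Y}$.
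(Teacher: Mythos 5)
Your proposal is correct and its skeleton coincides with the paper's own proof: the upper bound is exactly the singleton case of Corollary~\ref{co:nd_sv_prod_subset}, and attainability follows by taking $\mathbf{\Theta}=\mathbf{V}_\mathrm{B}\mathbf{P}\mathbf{U}_\mathrm{F}^\mathsf{H}$, observing that $\sv(\mathbf{H})=\sv(\mathbf{\Sigma}_\mathrm{B}\mathbf{P}\mathbf{\Sigma}_\mathrm{F})$ is the multiset of matched products, and counting how many products sit above (resp.\ below) the candidate value. You deviate in two places. For the lower half of \eqref{iq:sv_nd_indl}, the paper never inverts a factor: it induces the lower bounds from the upper-bound family together with the product identity \eqref{eq:sv_product_all} and the non-negativity of singular values, whereas you use the dual-Horn factorization $\mathbf{F}=\mathbf{B}^{-1}(\mathbf{B}\mathbf{F})$ with $\sigma_s(\mathbf{B}^{-1})=1/\sigma_{N_\mathrm{S}-s+1}(\mathbf{B})$ plus a continuity limit to cover singular factors; both routes are legitimate, yours is self-contained but needs the extra limiting step, while the paper's recycles an identity it has already proved for Corollary~\ref{co:nd_sv_prod_tail}. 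For achievability, the paper argues directly from the stated conditions: the entry $(i^{*},j^{*})$ makes $c=\min_{i+j=n+1}\sigma_i(\mathbf{H}_\mathrm{B})\sigma_j(\mathbf{H}_\mathrm{F})$ a singular value of $\mathbf{H}$, and \eqref{eq:perm_nd_sv_indl_max} supplies $n-1$ further singular values at least $c$, so $\sigma_n(\mathbf{H})\ge c$ and equality follows from the upper bound; your anti-diagonal matching instead exhibits one concrete permutation satisfying those conditions. That establishes that the bounds are attained, which is the substantive claim, but the corollary as stated asserts that \emph{any} $\mathbf{P}$, $\mathbf{Q}$ meeting \eqref{eq:idx_nd_sv_indl_max}--\eqref{eq:perm_nd_sv_indl_min} works, so you should close the loop the way the paper does: deleting row $i^{*}$ and column $j^{*}$ of the monomial core splits $\sv(\mathbf{H})$ into $\{c\}\cup\sv(\hat{\mathbf{\Sigma}}_\mathrm{B}\mathbf{Y}\hat{\mathbf{\Sigma}}_\mathrm{F})$, after which the submatrix condition performs your same count for an arbitrary admissible completion --- a one-line addition to your argument.
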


		\begin{proof}
			Please refer to Appendix~\ref{ap:nd_sv_indl}.
		\end{proof}

		\begin{remark}
			\label{rm:svd}
			We emphasize that the singular matrices in the \gls{svd} are not uniquely defined.
			When a singular value has multiplicity $k$, the corresponding singular vectors can be any orthonormal basis of the $k$-dimensional subspace. Even if all singular values are distinct, the singular vectors of each can be scaled by a phase factor of choice.
			Consequently, all \gls{svd}-based scattering matrices in this paper are inherently non-unique.
		\end{remark}

		Corollary~\ref{co:nd_sv_indl} and Proposition~\ref{pp:rd} both reveal the shaping limits of the $n$-th largest channel singular value.
		The two results are derived under different assumptions are not special cases of each other.
		Importantly, Corollary~\ref{co:nd_sv_indl} establishes upper and lower bounds for \emph{each} channel singular value (c.f. first and last $k$ in Proposition~\ref{pp:rd}) and provides general solutions for fully-connected \gls{bd}-\gls{ris} of arbitrary (c.f. sufficiently large) size to attain the equalities.
		These bounds enable closed-form passive beamforming, and hence fixed channel and closed-form active beamforming, for spatial multiplexing with a limited number $n$ of \gls{rf} chains.
		We emphasize that in \eqref{eq:ris_nd_sv_indl} the mode alignment is realized by $\mathbf{V}_\mathrm{B}$ and $\mathbf{U}_\mathrm{F}$ while the ordering is enabled by permutation matrices $\mathbf{P}$ and $\mathbf{Q}$, which are special cases of $\mathbf{X}$ defined in \eqref{eq:ris_decompose_group}.
		Specially, the extreme channel singular values can be manipulated within the range of
		\begin{subequations}
			\label{iq:sv_nd_extreme}
			\begin{gather}
				\max_{\mathclap{i+j=N_\mathrm{S}+1}} \ \sigma_i(\mathbf{H}_\mathrm{B}) \sigma_j(\mathbf{H}_\mathrm{F}) {\le} \sigma_1(\mathbf{H}) {\le} \sigma_1(\mathbf{H}_\mathrm{B}) \sigma_1(\mathbf{H}_\mathrm{F}), \label{iq:sv_nd_largest} \\
				\min_{\mathclap{i+j=\bar{N}+1}} \ \sigma_i(\mathbf{H}_\mathrm{B}) \sigma_j(\mathbf{H}_\mathrm{F}) {\ge} \sigma_{\bar{N}}(\mathbf{H}) {\ge} \sigma_{\bar{N}}(\mathbf{H}_\mathrm{B}) \sigma_{\bar{N}}(\mathbf{H}_\mathrm{F}). \label{iq:sv_nd_smallest}
			\end{gather}
		\end{subequations}
		We notice that the right halves in \eqref{iq:sv_nd_largest} and \eqref{iq:sv_nd_smallest} are also special cases of \eqref{iq:sv_nd_prod_largest} and \eqref{iq:sv_nd_prod_smallest} with $k=1$.

		\begin{example}[Bounds on $3 \times 3 \times 3$ shaping]
			\label{eg:shaping_bounds}
			Consider a $3 \times 3 \times 3$ setup with $\mathbf{H}_\mathrm{D} = \mathbf{0}$, $\mathbf{H}_\mathrm{B} = \diag(3, 2, 1)$, and $\mathbf{H}_\mathrm{F} = \diag(4, 0, 5)$.
			\begin{itemize}
				\item \gls{d}-\gls{ris}: It is evident that any \gls{d}-\gls{ris} can only achieve $\sv(\mathbf{H}) = [12, 5, 0]^\mathsf{T}$ due to limited branch matching and mode alignment capabilities;
				\item \gls{bd}-\gls{ris}: According to \eqref{iq:sv_nd_indl}, a fully-connected \gls{bd}-\gls{ris} can manipulate the singular values within the range of
				\begin{equation*}
					8 \le \sigma_1(\mathbf{H}) \le 15, \quad 4 \le \sigma_2(\mathbf{H}) \le 10, \quad 0 \le \sigma_3(\mathbf{H}) \le 0.
				\end{equation*}
				To attain the upper and lower bounds, $(i,j)$ in \eqref{eq:ris_nd_sv_indl_max} and \eqref{eq:ris_nd_sv_indl_min} takes $(1, 1)$ and $(2, 2)$ when $n=1$, and $(2, 1)$ and $(3, 2)$ when $n=2$, respectively.
			\end{itemize}
		\end{example}

		We conclude from Example~\ref{eg:shaping_bounds} that a fully-connected \gls{bd}-\gls{ris} can widen the dynamic range of channel singular values by properly aligning and ordering the modes of $\mathbf{H}_\mathrm{B}$ and $\mathbf{H}_\mathrm{F}$.
		However, the individual bounds \eqref{iq:sv_nd_indl} may not be simultaneously tight when the problem of interest is a function of multiple singular values.
		Some case studies are presented below.

		\begin{corollary}[Channel power gain]
			\label{co:nd_power}
			If the direct channel is negligible, then the channel power gain is bounded from above (resp. below) by the inner product of squared singular values of $\mathbf{H}_\mathrm{B}$ and $\mathbf{H}_\mathrm{F}$ when they are sorted similarly (resp. oppositely), that is,
			\begin{equation}
				\label{iq:power_nd}
				\sum_{n=1}^N \sigma_n^2(\mathbf{H}_\mathrm{B}) \sigma_{N_\mathrm{S}-n+1}^2(\mathbf{H}_\mathrm{F}) \le \lVert \mathbf{H} \rVert _\mathrm{F}^2 \le \sum_{n=1}^N \sigma_n^2(\mathbf{H}_\mathrm{B}) \sigma_n^2(\mathbf{H}_\mathrm{F}),
			\end{equation}
			whose upper and lower bounds are attained respectively at
			\begin{subequations}
				\label{eq:ris_nd_power}
				\begin{align}
					\mathbf{\Theta}_\textnormal{P-max}^\textnormal{MIMO-ND} & = \mathbf{V}_\mathrm{B} \mathbf{U}_\mathrm{F}^\mathsf{H}, \label{eq:ris_nd_power_max}            \\
					\mathbf{\Theta}_\textnormal{P-min}^\textnormal{MIMO-ND} & = \mathbf{V}_\mathrm{B} \mathbf{J} \mathbf{U}_\mathrm{F}^\mathsf{H} \label{eq:ris_nd_power_min},
				\end{align}
			\end{subequations}
			where $\mathbf{J}$ is the exchange (a.k.a. backward identity) matrix of dimension $N_\mathrm{S}$.
		\end{corollary}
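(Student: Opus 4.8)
The plan is to reduce the power gain to a linear optimization over doubly stochastic matrices and then invoke the rearrangement inequality. Since the direct channel is negligible, Proposition~\ref{pp:nd} lets me write $\mathbf{H} = \mathbf{H}_\mathrm{B} \mathbf{\Theta} \mathbf{H}_\mathrm{F}$ with a fully-connected $\mathbf{\Theta}$ ranging over all $N_\mathrm{S} \times N_\mathrm{S}$ unitary matrices. Substituting the \gls{svd} factorizations $\mathbf{H}_\mathrm{B} = \mathbf{U}_\mathrm{B} \mathbf{\Sigma}_\mathrm{B} \mathbf{V}_\mathrm{B}^\mathsf{H}$ and $\mathbf{H}_\mathrm{F} = \mathbf{U}_\mathrm{F} \mathbf{\Sigma}_\mathrm{F} \mathbf{V}_\mathrm{F}^\mathsf{H}$ and using the invariance of the Frobenius norm under left/right unitary multiplication, I would obtain $\lVert \mathbf{H} \rVert_\mathrm{F}^2 = \lVert \mathbf{\Sigma}_\mathrm{B} \mathbf{W} \mathbf{\Sigma}_\mathrm{F} \rVert_\mathrm{F}^2$, where $\mathbf{W} \triangleq \mathbf{V}_\mathrm{B}^\mathsf{H} \mathbf{\Theta} \mathbf{U}_\mathrm{F}$ is again an arbitrary $N_\mathrm{S} \times N_\mathrm{S}$ unitary matrix as $\mathbf{\Theta}$ varies.

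Next I would expand the right-hand side entrywise. Writing $a_i \triangleq \sigma_i^2(\mathbf{H}_\mathrm{B})$ and $b_j \triangleq \sigma_j^2(\mathbf{H}_\mathrm{F})$ (both sorted in descending order, with zero padding to dimension $N_\mathrm{S}$ as in the convention preceding Corollary~\ref{co:nd_sv_prod_subset}), the $(i,j)$ entry of $\mathbf{\Sigma}_\mathrm{B} \mathbf{W} \mathbf{\Sigma}_\mathrm{F}$ equals $\sigma_i(\mathbf{H}_\mathrm{B}) W_{ij} \sigma_j(\mathbf{H}_\mathrm{F})$, so that
\begin{equation*}
	\lVert \mathbf{H} \rVert_\mathrm{F}^2 = \sum_{i,j} a_i \lvert W_{ij} \rvert^2 b_j = \mathbf{a}^\mathsf{T} \mathbf{S} \mathbf{b},
\end{equation*}
where $\mathbf{S} \triangleq (\lvert W_{ij} \rvert^2)$. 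Because $\mathbf{W}$ is unitary, each row and column of $\mathbf{S}$ sums to one, i.e., $\mathbf{S}$ is doubly stochastic, and the objective $\mathbf{a}^\mathsf{T} \mathbf{S} \mathbf{b}$ is \emph{linear} in $\mathbf{S}$.

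The key step is to optimize this linear functional. The set of matrices $\mathbf{S}$ induced by unitary $\mathbf{W}$ (the unistochastic matrices) is contained in the Birkhoff polytope of doubly stochastic matrices, whose extreme points are the permutation matrices by the Birkhoff--von Neumann theorem; hence the maximum and minimum of the linear objective over the polytope are attained at permutations. Since every permutation matrix is itself unitary, those extremal permutations are realizable by an admissible $\mathbf{W}$, so the constrained and relaxed optima coincide. I would then apply the rearrangement inequality: as $\mathbf{a}$ and $\mathbf{b}$ are nonnegative and sorted in descending order, $\max_\pi \sum_i a_i b_{\pi(i)}$ is attained by the identity permutation and $\min_\pi \sum_i a_i b_{\pi(i)}$ by the order-reversing permutation, yielding exactly the upper bound $\sum_n \sigma_n^2(\mathbf{H}_\mathrm{B}) \sigma_n^2(\mathbf{H}_\mathrm{F})$ and the lower bound $\sum_n \sigma_n^2(\mathbf{H}_\mathrm{B}) \sigma_{N_\mathrm{S}-n+1}^2(\mathbf{H}_\mathrm{F})$. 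Finally, I would recover the stated solutions by back-substituting $\mathbf{\Theta} = \mathbf{V}_\mathrm{B} \mathbf{W} \mathbf{U}_\mathrm{F}^\mathsf{H}$: the identity choice $\mathbf{W} = \mathbf{I}$ gives $\mathbf{\Theta}_\textnormal{P-max}^\textnormal{MIMO-ND}$ in \eqref{eq:ris_nd_power_max}, and the exchange-matrix choice $\mathbf{W} = \mathbf{J}$ gives $\mathbf{\Theta}_\textnormal{P-min}^\textnormal{MIMO-ND}$ in \eqref{eq:ris_nd_power_min}.

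The main obstacle I anticipate is the subtlety that not every doubly stochastic matrix is unistochastic once $N_\mathrm{S} \ge 3$, so I cannot simply identify the two feasible sets; the argument must exploit that only the extremal permutation matrices --- which happen to lie in both sets --- are needed, so the relaxation is tight without requiring the full unistochastic characterization. A minor bookkeeping point is aligning the zero-padded index ranges so that the reversed-order sum in the lower bound, which uses $\sigma_{N_\mathrm{S}-n+1}(\mathbf{H}_\mathrm{F})$, correctly collapses to the $n = 1, \ldots, N$ range after the vanishing singular values are accounted for.
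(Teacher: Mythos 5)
Your proof is correct, but the key bounding step takes a genuinely different route from the paper's own proof. Both arguments begin identically: substitute the SVDs, exploit unitary invariance of the Frobenius norm, and absorb the scattering matrix into a single unitary $\mathbf{W} = \mathbf{V}_\mathrm{B}^\mathsf{H}\mathbf{\Theta}\mathbf{U}_\mathrm{F}$ (the paper calls it $\mathbf{X}$). From there, the paper writes the power gain as $\tr\bigl(\tilde{\mathbf{B}}\tilde{\mathbf{F}}\bigr)$ with $\tilde{\mathbf{B}} = \mathbf{\Sigma}_\mathrm{B}^\mathsf{H}\mathbf{\Sigma}_\mathrm{B}$ and $\tilde{\mathbf{F}} = \mathbf{X}\mathbf{\Sigma}_\mathrm{F}\mathbf{\Sigma}_\mathrm{F}^\mathsf{H}\mathbf{X}^\mathsf{H}$, and then invokes Ruhe's trace inequality for positive semi-definite matrices \cite[(H.1.g) and (H.1.h)]{Marshall2010} as a black box, verifying attainment at $\mathbf{X}=\mathbf{I}$ and $\mathbf{X}=\mathbf{J}$. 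You instead expand the same quantity entrywise as the bilinear form $\mathbf{a}^\mathsf{T}\mathbf{S}\mathbf{b}$ with $\mathbf{S} = (\lvert W_{ij}\rvert^2)$ unistochastic, relax to the Birkhoff polytope, reduce to permutation matrices by Birkhoff--von Neumann, and finish with the rearrangement inequality --- in effect re-proving from first principles exactly the special case of Ruhe's inequality needed here (one factor diagonal). Your handling of the unistochastic-versus-doubly-stochastic gap is the right observation: the relaxation is tight precisely because its extreme points are permutations, which are themselves unitary and hence feasible. What the paper's route buys is brevity via citation; what yours buys is a self-contained elementary argument that makes the attainment claims transparent and extends immediately to any objective linear in $\mathbf{S}$. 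The index-range caveat you flag for the reversed lower-bound sum (truncating $\sum_{n=1}^{N_\mathrm{S}}$ to $\sum_{n=1}^{N}$ when $\bar{N} \ne N$) is present in the paper's statement and proof as well, so your argument loses nothing relative to the original.
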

		\begin{proof}
			Please refer to Appendix~\ref{ap:nd_power}.
		\end{proof}

		We notice that \eqref{eq:ris_nd_power_max} and \eqref{eq:ris_nd_power_min} are special cases of \eqref{eq:ris_nd_sv_indl_max} and \eqref{eq:ris_nd_sv_indl_min} with $\mathbf{P} = \mathbf{I}$ and $\mathbf{Q} = \mathbf{J}$, which also attain the right and left halves of \eqref{iq:sv_nd_extreme}, respectively.
		That is to say, there exists a closed-form \gls{bd}-\gls{ris} solution \eqref{eq:ris_nd_power_max} maximizing the channel power gain that is also optimal for wireless power transfer.
		We will shortly see that this solution also achieves the channel capacity.
		The upper bound \eqref{eq:ris_nd_power_max} is also reminiscent of the optimal \gls{af} relay beamforming design \cite[(16), (17)]{Rong2009a} where the diagonal power allocation matrices boil down to $\mathbf{I}$ due to the passive nature of \gls{ris}.
		As a side note, when both $\mathbf{H}_\mathrm{B}$ and $\mathbf{H}_\mathrm{F}$ follow Rayleigh fading, the expectation of maximum channel power gain can be numerically evaluated as
		\begin{equation}
			\label{eq:power_nd_rayleigh}
			\mathbb{E}\bigl\{ \lVert \mathbf{H} \rVert _ \mathrm{F}^2 \bigr\} =
			\sum_{n=1}^N \iint_0^\infty xy
			f_{\lambda_n^{\min(N_\mathrm{R},N_\mathrm{S})}}(x)
			f_{\lambda_n^{\min(N_\mathrm{S},N_\mathrm{T})}}(y) \, dx \, dy,
		\end{equation}
		where $\lambda_n^{K}$ is the $n$-th eigenvalue of the complex $K \times K$ Wishart matrix with probability density function $f_{\lambda_n^{K}}(\cdot)$ given by \cite[(51)]{Zanella2009}.
		\eqref{eq:power_nd_rayleigh} generalizes the \gls{siso} channel power gain aided by \gls{bd}-\gls{ris} \cite[(58)]{Shen2020a} to \gls{mimo} but a closed-form expression is non-trivial.
		The next corollary has been derived in \cite{Bartoli2023} independently of Proposition~\ref{pp:nd} and we include it here for the completeness of results.
		\begin{corollary}[Channel capacity at general \gls{snr}]
			\label{co:nd_capacity_snr_general}
			If the direct channel is negligible, then the \gls{bd}-\gls{ris}-aided \gls{mimo} channel capacity is
			\begin{equation}
				\label{eq:capacity_nd}
				C^\textnormal{MIMO-ND} = \sum_{n=1}^N \log \left(1 + \frac{s_n \sigma_n^2(\mathbf{H}_\mathrm{B}) \sigma_n^2(\mathbf{H}_\mathrm{F})}{\eta}\right),
			\end{equation}
			where $\eta$ is the average noise power, $s_n = \mu - \frac{\eta}{\sigma_n^2(\mathbf{H}_\mathrm{B}) \sigma_n^2(\mathbf{H}_\mathrm{F})}$ is the power allocated to the $n$-th mode obtainable by the water-filling algorithm \cite{Clerckx2013}.
			The capacity-achieving \gls{bd}-\gls{ris} scattering matrix is
			\begin{equation}
				\label{eq:ris_nd_rate_max}
				\mathbf{\Theta}_\textnormal{R-max}^\textnormal{MIMO-ND} = \mathbf{V}_\mathrm{B} \mathbf{U}_\mathrm{F}^\mathsf{H}.
			\end{equation}
		\end{corollary}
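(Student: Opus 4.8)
The plan is to combine the achievability expression furnished by Proposition~\ref{pp:nd} with an optimality argument based on log-majorization. First I recall the standard fact that, for a fixed channel $\mathbf{H}$ and transmit power budget, the point-to-point \gls{mimo} capacity is attained by eigenmode transmission with water-filling, so that the capacity depends on $\mathbf{H}$ only through the channel gains $\gamma_n \triangleq \sigma_n^2(\mathbf{H})$ \cite{Clerckx2013}. Since the direct channel is negligible, Proposition~\ref{pp:nd} identifies the feasible set of $\sv(\mathbf{H})$ with the set of singular values realizable by $\mathbf{BF}$ subject to $\sv(\mathbf{B})=\sv(\mathbf{H}_\mathrm{B})$ and $\sv(\mathbf{F})=\sv(\mathbf{H}_\mathrm{F})$. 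The capacity maximization over $\mathbf{\Theta}$ thus reduces to selecting the best admissible gain profile $(\gamma_1,\ldots,\gamma_N)$, and it remains to show (i) that $\mathbf{\Theta}_\textnormal{R-max}^\textnormal{MIMO-ND}=\mathbf{V}_\mathrm{B}\mathbf{U}_\mathrm{F}^\mathsf{H}$ realizes the ordered product $\gamma_n=\sigma_n^2(\mathbf{H}_\mathrm{B})\sigma_n^2(\mathbf{H}_\mathrm{F})$, and (ii) that this profile is capacity-optimal.

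For achievability, I would substitute $\mathbf{\Theta}=\mathbf{V}_\mathrm{B}\mathbf{U}_\mathrm{F}^\mathsf{H}$ together with the factorizations $\mathbf{H}_\mathrm{B}=\mathbf{U}_\mathrm{B}\mathbf{\Sigma}_\mathrm{B}\mathbf{V}_\mathrm{B}^\mathsf{H}$ and $\mathbf{H}_\mathrm{F}=\mathbf{U}_\mathrm{F}\mathbf{\Sigma}_\mathrm{F}\mathbf{V}_\mathrm{F}^\mathsf{H}$ into \eqref{eq:channel} with $\mathbf{H}_\mathrm{D}=\mathbf{0}$; the inner unitary factors cancel and yield $\mathbf{H}=\mathbf{U}_\mathrm{B}\mathbf{\Sigma}_\mathrm{B}\mathbf{\Sigma}_\mathrm{F}\mathbf{V}_\mathrm{F}^\mathsf{H}$. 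As $\mathbf{\Sigma}_\mathrm{B}\mathbf{\Sigma}_\mathrm{F}$ is diagonal with the (descending) entries $\sigma_n(\mathbf{H}_\mathrm{B})\sigma_n(\mathbf{H}_\mathrm{F})$, this is already a valid \gls{svd} of $\mathbf{H}$, so $\sigma_n(\mathbf{H})=\sigma_n(\mathbf{H}_\mathrm{B})\sigma_n(\mathbf{H}_\mathrm{F})$; inserting these gains into the water-filling capacity reproduces exactly \eqref{eq:capacity_nd}. The candidate is a legitimate fully-connected \gls{bd}-\gls{ris} because $\mathbf{V}_\mathrm{B},\mathbf{U}_\mathrm{F}\in\mathbb{U}^{N_\mathrm{S}\times N_\mathrm{S}}$ make $\mathbf{V}_\mathrm{B}\mathbf{U}_\mathrm{F}^\mathsf{H}$ unitary.

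For optimality I would argue by majorization. Corollary~\ref{co:nd_sv_prod_tail}, specifically \eqref{iq:sv_nd_prod_largest}, gives $\prod_{n=1}^{k}\sigma_n(\mathbf{H})\le\prod_{n=1}^{k}\sigma_n(\mathbf{H}_\mathrm{B})\sigma_n(\mathbf{H}_\mathrm{F})$ for every $k$; taking logarithms shows that any achievable log-gain vector $(\log\gamma_n)$ is weakly majorized by the log of the ordered-product profile. It then suffices to prove that the water-filling capacity, regarded as a function of the log-gains $\ell_n\triangleq\log\gamma_n$, is increasing and Schur-convex, because an increasing Schur-convex function is monotone with respect to weak majorization (Marshall--Olkin) and hence attains its maximum at the majorizing (ordered-product) profile. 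Monotonicity is immediate. For Schur-convexity I would work on a region where the water-filling active set is fixed: each active mode contributes $\log(\mu e^{\ell_n}/\eta)$ with a common water level $\mu$ fixed by the power constraint, which after a short calculation gives $\partial C/\partial\ell_n=s_n/\mu$ with $s_n$ increasing in $\ell_n$. The Schur--Ostrowski condition $(\ell_i-\ell_j)(\partial C/\partial\ell_i-\partial C/\partial\ell_j)\ge 0$ then follows since more strongly loaded modes carry more power, while the inactive-mode cases are trivial.

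The main obstacle is precisely this Schur-convexity step: the water-filling map renders $C$ only piecewise smooth in the log-gains because the active set jumps across thresholds, so the Schur--Ostrowski argument must be carried out region-by-region and then extended across boundaries using the continuity of $C$. Once Schur-convexity is secured, the resulting upper bound coincides with the value attained in the achievability step, thereby establishing both the capacity formula \eqref{eq:capacity_nd} and the optimality of the scattering matrix \eqref{eq:ris_nd_rate_max}.
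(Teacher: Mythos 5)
Your proof is correct, but it is not the route the paper takes: the paper supplies no self-contained argument for this corollary at all, instead deferring entirely to \cite[Appendix~A]{Bartoli2023} (and explicitly noting the result was derived there independently of Proposition~\ref{pp:nd}). What you have done is reconstruct the result inside the paper's own framework: achievability by plugging $\mathbf{\Theta}=\mathbf{V}_\mathrm{B}\mathbf{U}_\mathrm{F}^\mathsf{H}$ into \eqref{eq:channel_nd} so that $\mathbf{H}=\mathbf{U}_\mathrm{B}\mathbf{\Sigma}_\mathrm{B}\mathbf{\Sigma}_\mathrm{F}\mathbf{V}_\mathrm{F}^\mathsf{H}$ is already an \gls{svd} with gains $\sigma_n^2(\mathbf{H}_\mathrm{B})\sigma_n^2(\mathbf{H}_\mathrm{F})$, and a converse that upgrades the multiplicative bounds \eqref{iq:sv_nd_prod_largest} of Corollary~\ref{co:nd_sv_prod_tail} to weak log-majorization of the gain profile, then invokes monotonicity plus Schur-convexity of the water-filling value function. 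This is a natural strengthening of exactly the technique the paper itself uses for the special cases: Appendix~\ref{ap:nd_capacity} handles the extreme-\gls{snr} regimes with $k=1$ and $k=N$ of \eqref{iq:sv_nd_prod_largest} precisely because fixed (dominant-mode or uniform) power allocation avoids the water-filling coupling, and your Schur-convexity argument is what removes that restriction and covers all \gls{snr} at once. The payoff of your route is a self-contained proof that also makes transparent why the same scattering matrix is optimal at every \gls{snr} (log-majorization dominance is \gls{snr}-independent), which dovetails with the paper's remark that \eqref{eq:ris_nd_power_max} and \eqref{eq:ris_nd_rate_max} coincide; the payoff of the paper's route is brevity. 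Two points in your converse deserve explicit care but do not break it: (i) the Schur--Ostrowski verification is only piecewise, as you note, yet the value function is locally Lipschitz (a maximum of uniformly Lipschitz functions over a compact symmetric feasible set), so the a.e.\ derivative condition $\partial C/\partial\ell_n=s_n/\mu$ integrates along the segments used in the standard majorization proof and continuity across active-set boundaries suffices; (ii) rank-deficient $\mathbf{H}_\mathrm{B}$ or $\mathbf{H}_\mathrm{F}$ produces $\log 0=-\infty$ entries, so the log-majorization step should be stated on the positive part of the spectrum (zero-gain modes contribute nothing to capacity on either side), a cosmetic fix rather than a gap.
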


		\begin{proof}
			Please refer to \cite[Appendix~A]{Bartoli2023}.
		\end{proof}

		One can observe from \eqref{eq:ris_nd_power_max} and \eqref{eq:ris_nd_rate_max} that the optimal channel shaping solution for channel power gain maximization, wireless power transfer, and wireless communication coincide with each other when the direct channel is negligible and the \gls{bd}-\gls{ris} is fully-connected.
		If either condition is false, the active and passive beamforming would be coupled and the rate-optimal solution involves numerical optimization.
		In such case, the power gain-optimal \gls{ris} can still provide a low-complexity decoupled solution and the details will be discussed in Section~\ref{sc:rate}.

		\begin{corollary}[Channel capacity at extreme \gls{snr}]
			\label{co:nd_capacity_snr_extreme}
			If the direct channel is negligible, then the channel capacity at extremely low and high \gls{snr} $\rho$ are approximately bounded from above by
			\begin{subequations}
				\label{iq:capacity_nd_snr_extreme}
				\begin{align}
					C_{\rho_\downarrow} & \lessapprox \rho \sigma_1^2(\mathbf{H}_\mathrm{B}) \sigma_1^2(\mathbf{H}_\mathrm{F}), \label{iq:capacity_nd_snr_low}                                      \\
					C_{\rho_\uparrow}   & \lessapprox N \log \frac{\rho}{N} + 2 \log \prod_{n=1}^N \sigma_n(\mathbf{H}_\mathrm{B}) \sigma_n(\mathbf{H}_\mathrm{F}). \label{iq:capacity_nd_snr_high}
				\end{align}
			\end{subequations}
		\end{corollary}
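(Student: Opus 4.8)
The plan is to start from the exact capacity expression \eqref{eq:capacity_nd} of Corollary~\ref{co:nd_capacity_snr_general}, which for the capacity-achieving fully-connected \gls{bd}-\gls{ris} depends only on the mode gains $\gamma_n \triangleq \sigma_n^2(\mathbf{H}_\mathrm{B})\sigma_n^2(\mathbf{H}_\mathrm{F})$, and to evaluate its water-filling form in the two limiting \gls{snr} regimes. I write $\rho$ for the transmit \gls{snr}, so that the per-mode signal-to-noise ratio appearing in \eqref{eq:capacity_nd} is $s_n\gamma_n/\eta$ with $\sum_n s_n$ fixed by the power budget.

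First I would treat the low-\gls{snr} branch. As $\rho\to 0$ the water-filling level drops below all but the strongest mode, so \eqref{eq:capacity_nd} reduces to the single-stream capacity $C_{\rho_\downarrow}\approx\log\bigl(1+\rho\,\sigma_1^2(\mathbf{H})\bigr)$. The elementary inequality $\log(1+x)\le x$ then gives $C_{\rho_\downarrow}\lessapprox\rho\,\sigma_1^2(\mathbf{H})$, and the right half of \eqref{iq:sv_nd_largest}, i.e.\ $\sigma_1(\mathbf{H})\le\sigma_1(\mathbf{H}_\mathrm{B})\sigma_1(\mathbf{H}_\mathrm{F})$, upgrades this to the claimed bound \eqref{iq:capacity_nd_snr_low}.

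Next I would treat the high-\gls{snr} branch. As $\rho\to\infty$ the water-filling allocation flattens to equal power across all $N$ active modes, so that \eqref{eq:capacity_nd} becomes $C_{\rho_\uparrow}\approx\sum_{n=1}^N\log\bigl(1+\tfrac{\rho}{N}\sigma_n^2(\mathbf{H})\bigr)$. Writing $\log(1+x)=\log x+\log(1+1/x)$ and bounding the second summand by $1/x$ isolates the leading behaviour $N\log\tfrac{\rho}{N}+\sum_{n=1}^N\log\sigma_n^2(\mathbf{H})$ up to a term that vanishes as $\rho\to\infty$. Collecting the logarithms into $2\log\prod_{n=1}^N\sigma_n(\mathbf{H})$ and invoking the $k=N$ instance of \eqref{iq:sv_nd_prod_largest}, namely $\prod_{n=1}^N\sigma_n(\mathbf{H})\le\prod_{n=1}^N\sigma_n(\mathbf{H}_\mathrm{B})\sigma_n(\mathbf{H}_\mathrm{F})$, then yields \eqref{iq:capacity_nd_snr_high}.

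The hardest part will be making the two water-filling limits precise. At low \gls{snr} I must verify that only the leading mode receives power once $\rho$ is small relative to the gaps among the $\gamma_n$, and at high \gls{snr} I must ensure that every one of the $N$ modes is eventually activated, which requires the effective channel to have full rank $N$ so that $\prod_n\sigma_n(\mathbf{H})>0$ and the equal-power approximation is legitimate. Together with controlling the $O(1/\rho)$ residual of the $\log(1+x)\approx\log x$ step, these activation and rank conditions are exactly what make the statement an approximate rather than exact bound, justifying the $\lessapprox$ notation.
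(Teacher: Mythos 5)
Your proposal is correct and follows essentially the same route as the paper's proof: dominant eigenmode transmission with $\log(1+x)\approx x$ at low SNR, uniform power allocation over $N$ modes with $\log(1+x)\approx\log x$ at high SNR, and the $k=1$ and $k=N$ instances of \eqref{iq:sv_nd_prod_largest} to pass from $\sigma_n(\mathbf{H})$ to $\sigma_n(\mathbf{H}_\mathrm{B})\sigma_n(\mathbf{H}_\mathrm{F})$. The only content the paper includes beyond your argument is a closing paragraph verifying that both bounds are attained simultaneously by the rate-optimal scattering matrix \eqref{eq:ris_nd_rate_max}, a tightness claim that the statement itself does not strictly require.
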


		\begin{proof}
			Please refer to Appendix~\ref{ap:nd_capacity}.
		\end{proof}

		The ergodic capacity \eqref{eq:capacity_nd} and \eqref{iq:capacity_nd_snr_extreme} when both $\mathbf{H}_\mathrm{B}$ and $\mathbf{H}_\mathrm{F}$ follow Rayleigh fading can be evaluated similarly to \eqref{eq:power_nd_rayleigh} and the details are omitted here.

		Proposition~\ref{pp:dof} -- \ref{pp:nd} and the resulting Corollaries provide a partial answer to the channel shaping question.
		These analyses provide a theoretical foundation for understanding exactly \emph{how} and \emph{to what extent} a \gls{bd}-\gls{ris} can shape the \gls{mimo} channel regarding typical singular value functions under specific channel conditions.
		Extending the results to more general setups is challenging due to the limited branch matching capability and the inherent tradeoff in mode alignment.
	\end{subsection}

	\begin{subsection}{Numerical Shaping Solution}
		\label{sc:shaping_numerical}
		Below we propose a numerical method to optimize \gls{bd}-\gls{ris} for a broad class of singular value functions.
		\begin{definition}[Locally Lipschitz]
			A function $f: \mathbb{R}^N \to \mathbb{R}$ is locally Lipschitz if for any compact set $\mathcal{S} \subset \mathbb{R}^N$, there exists a constant $L > 0$ such that $\lvert f(\mathbf{x}) - f(\mathbf{y}) \rvert \le L \lVert \mathbf{x} - \mathbf{y} \rVert$ for all $\mathbf{x}, \mathbf{y} \in \mathcal{S}$.
		\end{definition}

		\begin{proposition}
			\label{pp:shaping}
			Consider channel shaping problems of the form
			\begin{maxi!}
				{\scriptstyle{\mathbf{\Theta}}}{f\bigl(\sv(\mathbf{H})\bigr)}{\label{op:shaping}}{\label{ob:shaping}}
				\addConstraint{\mathbf{\Theta}_g^\mathsf{H} \mathbf{\Theta}_g=\mathbf{I},}{\quad \forall g,}{\label{cn:shaping_unitary}}
			\end{maxi!}
			where $f: \mathbb{R}^{N} \to \mathbb{R}$ is arbitrary locally Lipschitz function of channel singular values.
			The Clarke subdifferential of \eqref{ob:shaping} with respect to \gls{bd}-\gls{ris} block $g$ is
			\begin{equation}
				\label{eq:shaping_subdiff}
				\partial_{\mathbf{\Theta}_g^*} f\bigl(\sv(\mathbf{H})\bigr) = \conv \bigl\{ \mathbf{H}_{\mathrm{B},g}^\mathsf{H} \mathbf{U} \mathbf{D} \mathbf{V}^\mathsf{H} \mathbf{H}_{\mathrm{F},g}^\mathsf{H} \bigr\},
			\end{equation}
			where $\mathbf{D} \in \mathbb{C}^{N_\mathrm{R} \times N_\mathrm{T}}$ is a rectangular diagonal matrix with $[\mathbf{D}]_{n,n} \in \partial_{\sigma_n(\mathbf{H})} f\bigl(\sv(\mathbf{H})\bigr)$, $\forall n \in [N]$, and $\mathbf{U}$, $\mathbf{V}$ are any left and right singular matrices of $\mathbf{H}$.
		\end{proposition}

		\begin{proof}
			Please refer to Appendix~\ref{ap:shaping}.
		\end{proof}

		Proposition~\ref{pp:shaping} enables subgradient-based optimization for arbitrary locally Lipschitz function of channel singular values (e.g., Pareto frontier, power gain, capacity, condition number) via Clarke subdifferential \eqref{eq:shaping_subdiff}.
		Next, we introduce a \emph{geodesic\footnote{A geodesic is a curve representing the shortest path between two points in a Riemannian manifold, whose tangent vectors remain parallel when transporting along the curve.} \gls{rcg}} method modified from \cite{Abrudan2008,Abrudan2009} for the optimization of \gls{bd}-\gls{ris}.
		Our contribution is an extension to the block-unitary case with sequential, parallel, or unified updates for accelerated convergence.
		The steps for updating $\mathbf{\Theta}_g$ at iteration $r$ are summarized below, where the gradients are replaced by Clarke subgradients for non-smooth $f$.
		\begin{enumerate}[label=(\roman*)]
			\item \emph{Compute the Euclidean gradient at $\mathbf{\Theta}_g^{(r)}$:} The gradient of $f$ with respect to $\mathbf{\Theta}_g$ in the Euclidean space is
			\begin{equation}
				\label{eq:gradient_eucl}
				\nabla_{\mathrm{E},g}^{(r)} = 2 \frac{\partial f(\mathbf{\Theta}_g^{(r)})}{\partial \mathbf{\Theta}_g^*};
			\end{equation}
			\item \emph{Translate to the Riemannian gradient at $\mathbf{\Theta}_g^{(r)}$:} At point $\mathbf{\Theta}_g^{(r)}$, the Riemannian gradient gives the steepest ascent direction on the manifold. It lies in the tangent space of the manifold $\mathcal{T}_{\smash{\mathbf{\Theta}_g^{(r)}}}\mathbb{U}^{L \times L} \triangleq \{\mathbf{M} \in \mathbb{C}^{L \times L} \mid \mathbf{M}^\mathsf{H} \mathbf{\Theta}_g^{(r)} + {\mathbf{\Theta}_g^{(r)\mathsf{H}}} \mathbf{M} = \mathbf{0}\}$ and is obtainable by projection:
			\begin{equation}
				\label{eq:gradient_riem}
				\nabla_{\mathrm{R},g}^{(r)} = \nabla_{\mathrm{E},g}^{(r)} - \mathbf{\Theta}_g^{(r)} {\nabla_{\mathrm{E},g}^{(r)\mathsf{H}}} \mathbf{\Theta}_g^{(r)};
			\end{equation}
			\item \emph{Translate to the Riemannian gradient at the identity:} The Riemannian gradient should be translated back to the identity for exploiting the Lie algebra\footnote{Lie algebra refers to the tangent space of the Lie group at the identity element. A Lie group is simultaneously a continuous group and a differentiable manifold. In this example, $\mathbb{U}^{L \times L}$ formulates a Lie group and the corresponding Lie algebra consists of skew-Hermitian matrices $\mathfrak{u}(L) \triangleq \mathcal{T}_{\mathbf{I}}\mathbb{U}^{L \times L} = \{\mathbf{M} \in \mathbb{C}^{L \times L} \mid \mathbf{M}^\mathsf{H} + \mathbf{M} = \mathbf{0}\}$.}:
			\begin{equation}
				\label{eq:gradient_riem_tran}
				\tilde{\nabla}_{\mathrm{R},g}^{(r)} = \nabla_{\mathrm{R},g}^{(r)} \mathbf{\Theta}_g^{(r)\mathsf{H}} = \nabla_{\mathrm{E},g}^{(r)} \mathbf{\Theta}_g^{(r)\mathsf{H}} - \mathbf{\Theta}_g^{(r)} {\nabla_{\mathrm{E},g}^{(r)\mathsf{H}}}.
			\end{equation}
			\item \emph{Determine the conjugate direction:} The conjugate direction is obtained over the Riemannian gradient and the previous direction as
			\begin{equation}
				\label{eq:conjugate_dirn_geod}
				{\mathbf{D}}_g^{(r)} = \tilde{\nabla}_{\mathrm{R},g}^{(r)} + {\gamma}_g^{(r)} {\mathbf{D}}_g^{(r-1)},
			\end{equation}
			where $\gamma_g^{(r)}$ deviates the conjugate direction from the tangent space for accelerated convergence. A popular choice is the Polak-Ribi\`{e}re formula \cite{Hager2006}
			\begin{equation}
				\label{eq:conjugate_parm_geod}
				{\gamma}_g^{(r)} = \frac{\tr\bigl((\tilde{\nabla}_{\mathrm{R},g}^{(r)} - \tilde{\nabla}_{\mathrm{R},g}^{(r-1)}) {\tilde{\nabla}_{\mathrm{R},g}^{(r)\mathsf{H}}}\bigr)}{\tr\bigl(\tilde{\nabla}_{\mathrm{R},g}^{(r-1)} {\tilde{\nabla}_{\mathrm{R},g}^{(r-1)\mathsf{H}}}\bigr)}. 
			\end{equation}
			\item \emph{Evaluate the geodesic at the identity:} The geodesic emanating from the identity with velocity $\mathbf{D} \in \mathfrak{u}(L)$ is described by
			\begin{equation}
				\label{eq:geodesic_iden}
				\mathbf{G}_\mathbf{I}(\mu) = \exp(\mu \mathbf{D}),
			\end{equation}
			where $\exp(\mathbf{A}) = \sum_{k=0}^\infty (\mathbf{A}^k/k!)$ is the matrix exponential and $\mu$ is the step size (i.e., magnitude of the tangent vector).
			\item \emph{Translate to the geodesic at $\mathbf{\Theta}_g^{(r)}$:} The geodesic emanating from $\mathbf{\Theta}_g^{(r)}$ terminates at $\mathbf{\Theta}_g^{(r+1)}$ by multiplicative updates
			\begin{equation}
				\label{eq:geodesic_tran}
				\mathbf{\Theta}_g^{(r+1)} {=} \mathbf{G}_{\smash{\mathbf{\Theta}_g^{(r)}}}(\mu) {=} \mathbf{G}_\mathbf{I}(\mu) \mathbf{\Theta}_g^{(r)} {=} \exp(\mu \mathbf{D}_g^{(r)}) \mathbf{\Theta}_g^{(r)},
			\end{equation}
			where $\mu$ is the step size refinable\footnote{To double the step size, one can simply square the rotation matrix instead of recomputing the matrix exponential, that is, $\exp^2(\mu \mathbf{D}_g^{(r)}) = \exp(2 \mu \mathbf{D}_g^{(r)})$.} by the Armijo rule \cite{Armijo1966}.
		\end{enumerate}

		\begin{algorithm}[!t]
			\footnotesize
			\caption{Geodesic \gls{rcg} for \gls{bd}-\gls{ris} design}
			\label{ag:rcg}
			\begin{algorithmic}[1]
				\Require $f(\mathbf{\Theta})$, $G$
				\Ensure $\mathbf{\Theta}^\star$
				\Initialize {$r \gets 0$, $\mathbf{\Theta}^{(0)}$}
				\Repeat
				\For {$g \gets 1$ to $G$}
				\State $\nabla_{\mathrm{E},g}^{(r)} \gets$ \eqref{eq:gradient_eucl}, $\tilde{\nabla}_{\mathrm{R},g}^{(r)} \gets$ \eqref{eq:gradient_riem_tran}, ${\gamma}_g^{(r)} \gets$ \eqref{eq:conjugate_parm_geod}, $\mathbf{D}_g^{(r)} \gets$ \eqref{eq:conjugate_dirn_geod}
				\If {$\Re\bigl\{\tr({\mathbf{D}_g^{(r)\mathsf{H}}} \tilde{\nabla}_{\mathrm{R},g}^{(r)})\bigr\} < 0$} \Comment{Not ascent}
				\State $\mathbf{D}_g^{(r)} \gets \tilde{\nabla}_{\mathrm{R},g}^{(r)}$
				\EndIf
				\State $\mu \gets 0.1$, $\mathbf{G}_{\smash{\mathbf{\Theta}_g^{(r)}}}(\mu) \gets$ \eqref{eq:geodesic_tran}
				\While {$ f\bigl(\mathbf{G}_{\smash{\mathbf{\Theta}_g^{(r)}}}(2\mu)\bigr) - f(\mathbf{\Theta}_g^{(r)}) \ge \mu \cdot \smash{\frac{\tr(\mathbf{D}_g^{(r)} {\mathbf{D}_g^{(r)\mathsf{H}}})}{2}}$}\label{ln:armijo_start}
				\State $\mu \gets 2 \mu$
				\EndWhile
				\While {$f\bigl(\mathbf{G}_{\smash{\mathbf{\Theta}_g^{(r)}}}(\mu)\bigr) - f(\mathbf{\Theta}_g^{(r)}) < \frac{\mu}{2} \cdot \smash{\frac{\tr(\mathbf{D}_g^{(r)} {\mathbf{D}_g^{(r)\mathsf{H}}})}{2}}$}
				\State $\mu \gets \mu / 2$
				\EndWhile \label{ln:armijo_end}
				\State $\mathbf{\Theta}_g^{(r+1)} \gets$ \eqref{eq:geodesic_tran}
				\EndFor
				\State $r \gets r+1$
				\Until $\lvert f(\mathbf{\Theta}^{(r)}) - f(\mathbf{\Theta}^{(r-1)}) \rvert / f(\mathbf{\Theta}^{(r-1)}) \le \epsilon$
			\end{algorithmic}
		\end{algorithm}

		Algorithm~\ref{ag:rcg} summarizes the proposed geodesic \gls{rcg} method with sequential group-wise updates.
		Each iteration leverages Lie algebra to perform a multiplicative update \eqref{eq:geodesic_tran} along the geodesics of the Stiefel manifold.
		This appropriate parameter space leads to faster convergence and easier step size tuning.
		We remark that the additive update of the non-geodesic \gls{rcg} can be interpreted a first-order Taylor approximation to the multiplicative update of the proposed geodesic \gls{rcg}, thus necessitating a retraction step to remain on the manifold.
		The group-wise updates can be performed in parallel to facilitate large-scale \gls{bd}-\gls{ris} design problems.
		One may also operate on $\mathbf{\Theta}$ and pinching (i.e., keeping the main block diagonal and nulling the rest) \eqref{eq:gradient_eucl} to unify the step size selection for further acceleration.

		We now analyze the computational complexity of solving singular value shaping problem \eqref{op:shaping} by Algorithm~\ref{ag:rcg}.
		To update each \gls{bd}-\gls{ris} group, \gls{svd} of $\mathbf{H}$ requires $\mathcal{O}(N N_\mathrm{T} N_\mathrm{R})$ flops, Euclidean subgradient \eqref{eq:shaping_subdiff} requires $\mathcal{O}\bigl(L N (N_\mathrm{T}+N_\mathrm{R}+L) \bigr)$ flops, Riemannian subgradient translation \eqref{eq:gradient_riem_tran} requires $\mathcal{O}(L^3)$ flops, deviation parameter \eqref{eq:conjugate_parm_geod} and conjugate direction \eqref{eq:conjugate_dirn_geod} together require $\mathcal{O}(L^2)$ flops, and matrix exponential \eqref{eq:geodesic_tran} requires $\mathcal{O}(L^3)$ flops \cite{Moler2003}.
		The overall complexity is thus $\mathcal{O}\bigl(I_\text{RCG} G \bigl(N N_\mathrm{T} N_\mathrm{R} + L N (N_\mathrm{T}+N_\mathrm{R}+L) + I_\text{BLS} L^3\bigr)\bigr)$, where $I_\text{RCG}$ and $I_\text{BLS}$ are the number of iterations for geodesic \gls{rcg} and backtracking line search (i.e., lines \ref{ln:armijo_start} -- \ref{ln:armijo_end} of Algorithm \ref{ag:rcg}), respectively.
		That is, $\mathcal{O}\bigl(N_\mathrm{S}\bigr)$ for \gls{d}-\gls{ris} and $\mathcal{O}\bigl(N_\mathrm{S}^3\bigr)$ for fully-connected \gls{bd}-\gls{ris}.

		To validate Algorithm~\ref{ag:rcg} and quantify the shaping capability of \gls{bd}-\gls{ris}, we aim to characterize the achievable singular value region of \gls{bd}-\gls{ris}-aided \gls{mimo} channel by considering the Pareto optimization problem
		\begin{maxi!}
			{\scriptstyle{\mathbf{\Theta}}}{\sum_{n=1}^N \rho_n \sigma_n(\mathbf{H})}{\label{op:pareto}}{\label{ob:pareto}}
			\addConstraint{\mathbf{\Theta}_g^\mathsf{H} \mathbf{\Theta}_g=\mathbf{I},}{\quad \forall g,}{\label{cn:pareto_unitary}}
		\end{maxi!}
		where $\rho_n \ge 0$ is the weight associated with the $n$-th channel singular value.
		Varying those weights help to characterize the Pareto frontier that encloses the achievable singular value region.
		While the objective \eqref{ob:pareto} may seem obscure, a larger quantity translates to a stronger singular value redistribution capability and thus better wireless performance (e.g., channel capacity for communication \cite{Clerckx2013}, detection probability for sensing \cite{Liu2022c}, and harvested power for power transfer \cite{Shen2021}).
		Problem~\eqref{op:pareto} also generalizes the \gls{dof} problem in Proposition~\ref{pp:dof} and the individual singular value shaping problem in Proposition~\ref{pp:rd} and Corollary~\ref{co:nd_sv_indl}.
		It can be solved optimally by Algorithm~\ref{ag:rcg} with $[\mathbf{D}]_{n,n} = \rho_n$ in \eqref{eq:shaping_subdiff}.
	\end{subsection}
\end{section}

\begin{section}{Rate Maximization}
	\label{sc:rate}
	In this section, we first solve the \gls{bd}-\gls{ris}-aided \gls{mimo} rate maximization problem optimally by joint beamforming design, and then exploit channel shaping for a low-complexity two-stage solution.
	The problem is formulated as
	\begin{maxi!}
		{\scriptstyle{\mathbf{W},\mathbf{\Theta}}}{R = \log \det \biggl(\mathbf{I} + \frac{\mathbf{W}^\mathsf{H}\mathbf{H}^\mathsf{H}\mathbf{H}\mathbf{W}}{\eta}\biggr)}{\label{op:rate}}{\label{ob:rate}}
		\addConstraint{\lVert \mathbf{W} \rVert _\mathrm{F}^2}{\le P}
		\addConstraint{\mathbf{\Theta}_g^\mathsf{H} \mathbf{\Theta}_g}{=\mathbf{I}, \quad \forall g,\label{cn:rate_unitary}}
	\end{maxi!}
	where $\mathbf{W}$ is the transmit precoder, $R$ is the achievable rate, $\eta$ is the average noise power, and $P$ is maximum average transmit power.
	Problem~\eqref{op:rate} is non-convex due to the block-unitary constraint \eqref{cn:rate_unitary} and the coupling between variables.

	\begin{subsection}{Alternating Optimization}
		\label{sc:rate_ao}
		This approach updates $\mathbf{\Theta}$ and $\mathbf{W}$ iteratively until convergence.
		For a given $\mathbf{W}$, the passive beamforming subproblem is
		\begin{maxi!}
			{\scriptstyle{\mathbf{\Theta}}}{\log \det \biggl(\mathbf{I} + \frac{\mathbf{H} \mathbf{Q} \mathbf{H}^\mathsf{H}}{\eta}\biggr)}{\label{op:rate_ris}}{\label{ob:rate_ris}}
			\addConstraint{\mathbf{\Theta}_g^\mathsf{H} \mathbf{\Theta}_g=\mathbf{I}, \quad \forall g,}{}{}
		\end{maxi!}
		where $\mathbf{Q} \triangleq \mathbf{W} \mathbf{W}^\mathsf{H}$ is the transmit covariance matrix.
		Problem~\eqref{op:rate_ris} can be solved optimally by Algorithm \ref{ag:rcg} with the partial derivative given in Lemma~\ref{lm:rate}.
		\begin{lemma}
			\label{lm:rate}
			The partial derivative of \eqref{ob:rate_ris} with respect to \gls{bd}-\gls{ris} block $g$ is
			\begin{equation}
				\label{eq:gradient_eucl_rate}
				\frac{\partial R}{\partial \mathbf{\Theta}_g^*} = \frac{1}{\eta} \mathbf{H}_{\mathrm{B},g}^\mathsf{H} \biggl(\mathbf{I} + \frac{\mathbf{H}\mathbf{Q}\mathbf{H}^\mathsf{H}}{\eta}\biggr)^{-1} \mathbf{H} \mathbf{Q} \mathbf{H}_{\mathrm{F},g}^\mathsf{H},
			\end{equation}
			which is a special case of \eqref{eq:shaping_subdiff}.
		\end{lemma}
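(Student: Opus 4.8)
The plan is to treat $\mathbf{\Theta}_g$ and its conjugate $\mathbf{\Theta}_g^*$ as independent variables (Wirtinger calculus) and to extract $\partial R/\partial\mathbf{\Theta}_g^*$ from the first-order differential $dR$. First I would apply the identity $d\log\det(\mathbf{A})=\tr(\mathbf{A}^{-1}\,d\mathbf{A})$ to the objective \eqref{ob:rate_ris} with $\mathbf{A}\triangleq\mathbf{I}+\mathbf{H}\mathbf{Q}\mathbf{H}^\mathsf{H}/\eta$, yielding
\[
dR=\frac{1}{\eta}\tr\bigl(\mathbf{A}^{-1}\bigl((d\mathbf{H})\mathbf{Q}\mathbf{H}^\mathsf{H}+\mathbf{H}\mathbf{Q}\,(d\mathbf{H}^\mathsf{H})\bigr)\bigr).
\]
Since the subproblem perturbs only block $g$ and $\mathbf{\Theta}$ is block-diagonal, the channel variation is $d\mathbf{H}=\mathbf{H}_{\mathrm{B},g}\,(d\mathbf{\Theta}_g)\,\mathbf{H}_{\mathrm{F},g}$, so that $d\mathbf{H}^\mathsf{H}=\mathbf{H}_{\mathrm{F},g}^\mathsf{H}\,(d\mathbf{\Theta}_g^\mathsf{H})\,\mathbf{H}_{\mathrm{B},g}^\mathsf{H}$.

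Next I would separate the holomorphic and anti-holomorphic contributions: the term $(d\mathbf{H})\mathbf{Q}\mathbf{H}^\mathsf{H}$ carries $d\mathbf{\Theta}_g$, whereas $\mathbf{H}\mathbf{Q}\,(d\mathbf{H}^\mathsf{H})$ carries $d\mathbf{\Theta}_g^\mathsf{H}=(d\mathbf{\Theta}_g^*)^\mathsf{T}$, and only the latter feeds $\partial R/\partial\mathbf{\Theta}_g^*$. Isolating it and applying the cyclic property of the trace gives
\[
\frac{1}{\eta}\tr\bigl(\mathbf{A}^{-1}\mathbf{H}\mathbf{Q}\mathbf{H}_{\mathrm{F},g}^\mathsf{H}\,(d\mathbf{\Theta}_g^\mathsf{H})\,\mathbf{H}_{\mathrm{B},g}^\mathsf{H}\bigr)=\tr\bigl(\mathbf{M}\,d\mathbf{\Theta}_g^\mathsf{H}\bigr),\quad \mathbf{M}\triangleq\tfrac{1}{\eta}\mathbf{H}_{\mathrm{B},g}^\mathsf{H}\mathbf{A}^{-1}\mathbf{H}\mathbf{Q}\mathbf{H}_{\mathrm{F},g}^\mathsf{H}.
\]
Matching $\tr(\mathbf{M}\,d\mathbf{\Theta}_g^\mathsf{H})=\tr(\mathbf{M}(d\mathbf{\Theta}_g^*)^\mathsf{T})$ against the canonical pairing that defines the conjugate gradient under the convention of \eqref{eq:gradient_eucl} then reads off $\partial R/\partial\mathbf{\Theta}_g^*=\mathbf{M}$, which is precisely \eqref{eq:gradient_eucl_rate}.

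The main obstacle I anticipate is bookkeeping rather than conceptual: I must track the transpose and conjugate-transpose carefully so that $\tr(\mathbf{M}(d\mathbf{\Theta}_g^*)^\mathsf{T})$ is identified with $\mathbf{M}$ itself (and not $\mathbf{M}^\mathsf{T}$ or $\mathbf{M}^*$) under the chosen convention. A useful safeguard is that $R$ is real-valued, so the holomorphic trace term is the complex conjugate of the anti-holomorphic one; computing either one suffices, and the two provide a mutual consistency check. Finally, because the blocks decouple through the block-diagonal structure of $\mathbf{\Theta}$, the same derivation applies verbatim to every $g$, completing the proof.
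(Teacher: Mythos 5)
Your proposal is correct and follows essentially the same route as the paper's proof: both are Wirtinger-calculus computations that apply $d\log\det(\mathbf{A})=\tr(\mathbf{A}^{-1}\,d\mathbf{A})$, chain through $d\mathbf{H}$ via the block-diagonal structure, and use the cyclic trace property to read off the coefficient of $d\mathbf{\Theta}_g^*$. The only cosmetic difference is that the paper pre-conjugates the objective so that only the $\partial\mathbf{\Theta}_g^*$ term appears, whereas you compute the full differential and isolate the anti-holomorphic part (your observation that the two trace terms are complex conjugates of each other, since $\mathbf{Q}$ and $\mathbf{I}+\mathbf{H}\mathbf{Q}\mathbf{H}^\mathsf{H}/\eta$ are Hermitian, correctly justifies discarding the holomorphic one).
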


		\begin{proof}
			Please refer to Appendix~\ref{ap:rate}.
		\end{proof}
		For a given $\mathbf{\Theta}$, the optimal transmit precoder is given by eigenmode transmission \cite{Clerckx2013}
		\begin{equation}
			\label{eq:precoder_rate}
			\mathbf{W}^\star = \mathbf{V} {\diag(\mathbf{s}^\star)}^{1/2},
		\end{equation}
		where $\mathbf{V}$ is the right singular matrix of $\mathbf{H}$ and $\mathbf{s}^\star$ can be retrieved by the water-filling algorithm \cite{Clerckx2013}.
		The \gls{ao} algorithm is guaranteed to converge to local-optimal points of problem \eqref{op:rate} since each subproblem is solved optimally and the objective is bounded above.
		The computational complexity of solving subproblem \eqref{op:rate_ris} by geodesic \gls{rcg} is $\mathcal{O}\bigl(I_\text{RCG} G (NL^2 + L N_\mathrm{T} N_\mathrm{R} + N_\mathrm{T}^2 N_\mathrm{R} + N_\mathrm{T} N_\mathrm{R}^2 + N_\mathrm{R}^3 + I_\text{BLS} L^3)\bigr)$.
		On the other hand, the complexity of active beamforming \eqref{eq:precoder_rate} is $\mathcal{O}\bigl(N N_\mathrm{T} N_\mathrm{R}\bigr)$.
		The overall complexity is thus $\mathcal{O}\bigl(I_\text{AO}\bigl(I_\text{RCG} G (NL^2 + L N_\mathrm{T} N_\mathrm{R} + N_\mathrm{T}^2 N_\mathrm{R} + N_\mathrm{T} N_\mathrm{R}^2 + N_\mathrm{R}^3 + I_\text{BLS} L^3) + N N_\mathrm{T} N_\mathrm{R}\bigr)\bigr)$, where $I_\text{AO}$ is the number of iterations for \gls{ao}.
		That is, $\mathcal{O}\bigl(N_\mathrm{S}\bigr)$ for \gls{d}-\gls{ris} and $\mathcal{O}\bigl(N_\mathrm{S}^3\bigr)$ for fully-connected \gls{bd}-\gls{ris}.
	\end{subsection}

	\begin{subsection}{Low-Complexity Solution}
		\label{sc:rate_lc}
		To reduce computational complexity, we decouple the joint beamforming design by first shaping the \gls{mimo} channel by \gls{bd}-\gls{ris} for maximum power gain and then performing eigenmode transmission.
		The shaping subproblem is formulated as
		\begin{maxi!}
			{\scriptstyle{\mathbf{\Theta}}}{\lVert \mathbf{H}_\mathrm{D} + \mathbf{H}_\mathrm{B} \mathbf{\Theta} \mathbf{H}_\mathrm{F} \rVert _\mathrm{F}^2}{\label{op:power}}{\label{ob:power}}
			\addConstraint{\mathbf{\Theta}_g^\mathsf{H} \mathbf{\Theta}_g=\mathbf{I}, \quad \forall g.}{\label{cn:power_unitary}}{}
		\end{maxi!}
		While similar problems have been studied in single-mode cases \cite{Shen2020a,Santamaria2023}, generalizing those methods to \gls{mimo} remains non-trivial due to the tradeoff between multi-mode alignments.
		One can see that the objective function \eqref{ob:power} is equivalent to $\sum_{n=1}^N \sigma_n^2(\mathbf{H})$ and thus readily solvable by Algorithm \ref{ag:rcg}.
		Below we propose a more elegant power iteration method inspired by \cite{Nie2017} that iterates in closed-form by orthogonal projection.
		The idea is to approximate the quadratic objective \eqref{ob:power} by its first-order Taylor expansion and solve each subproblem by group-wise \gls{svd}.

		\begin{proposition}
			\label{pp:power}
			Starting from any feasible $\mathbf{\Theta}^{(0)}$, the orthogonal projection of
			\begin{equation}
				\label{eq:auxiliary_power}
				\mathbf{M}_g^{(r)} = \mathbf{H}_{\mathrm{B},g}^\mathsf{H} \Bigl(\mathbf{H}_\mathrm{D} + \mathbf{H}_\mathrm{B} \diag\bigl(\mathbf{\Theta}_{[1:g-1]}^{(r+1)},\mathbf{\Theta}_{[g:G]}^{(r)}\bigr) \mathbf{H}_\mathrm{F}\Bigr) \mathbf{H}_{\mathrm{F},g}^\mathsf{H}
			\end{equation}
			onto the Stiefel manifold, given in the closed-form \cite{Manton2002}
			\begin{equation}
				\label{eq:ris_power}
				\mathbf{\Theta}_g^{(r+1)} = \underset{\mathbf{X}_g \in \mathbb{U}^{L \times L}}{\arg\min} \lVert \mathbf{M}_g - \mathbf{X}_g \rVert _\mathrm{F} = \mathbf{U}_g^{(r)} \mathbf{V}_g^{(r)\mathsf{H}},
			\end{equation}
			monotonically increases the objective function \eqref{ob:power},
			where $\mathbf{U}_g^{(r)}$ and $\mathbf{V}_g^{(r)}$ are any left and right singular matrices of $\mathbf{M}_g^{(r)}$.
			When \eqref{eq:auxiliary_power} converges, \eqref{eq:ris_power} leads to a convergence of the objective function \eqref{ob:power} towards a stationary point.
		\end{proposition}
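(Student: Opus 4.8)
The plan is to read Proposition~\ref{pp:power} as a \emph{block minorize-maximize} (block coordinate ascent) scheme and to exploit the convexity of the objective \eqref{ob:power} in each group. First I would freeze all blocks except the $g$-th and write the equivalent channel as $\mathbf{H} = \mathbf{C}_g + \mathbf{H}_{\mathrm{B},g}\mathbf{\Theta}_g\mathbf{H}_{\mathrm{F},g}$, where $\mathbf{C}_g$ collects the direct channel together with the (already updated or still frozen) contributions of the other groups. Substituting into \eqref{ob:power} shows that $f(\mathbf{\Theta}_g) \triangleq \lVert \mathbf{C}_g + \mathbf{H}_{\mathrm{B},g}\mathbf{\Theta}_g\mathbf{H}_{\mathrm{F},g}\rVert_\mathrm{F}^2$ is a convex quadratic in the entries of $\mathbf{\Theta}_g$, being a sum of squared moduli of affine functions. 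Expanding the Frobenius norm about the current iterate yields the \emph{exact} identity
\begin{equation*}
	f(\mathbf{\Theta}_g) = f(\mathbf{\Theta}_g^{(r)}) + 2\Re\bigl\{\tr\bigl(\mathbf{M}_g^{(r)\mathsf{H}}(\mathbf{\Theta}_g - \mathbf{\Theta}_g^{(r)})\bigr)\bigr\} + \bigl\lVert \mathbf{H}_{\mathrm{B},g}(\mathbf{\Theta}_g - \mathbf{\Theta}_g^{(r)})\mathbf{H}_{\mathrm{F},g}\bigr\rVert_\mathrm{F}^2,
\end{equation*}
with $\mathbf{M}_g^{(r)} = \mathbf{H}_{\mathrm{B},g}^\mathsf{H}\mathbf{H}^{(r)}\mathbf{H}_{\mathrm{F},g}^\mathsf{H}$ exactly the auxiliary matrix of \eqref{eq:auxiliary_power}. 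Discarding the nonnegative quadratic remainder leaves a linear \emph{tangent minorizer} that is tight at $\mathbf{\Theta}_g^{(r)}$.

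Second, I would maximize this minorizer over the block-unitary feasible set. Since $\lVert \mathbf{M}_g^{(r)} - \mathbf{X}_g\rVert_\mathrm{F}^2 = \lVert \mathbf{M}_g^{(r)}\rVert_\mathrm{F}^2 + L - 2\Re\{\tr(\mathbf{M}_g^{(r)\mathsf{H}}\mathbf{X}_g)\}$ for every $\mathbf{X}_g \in \mathbb{U}^{L \times L}$, minimizing the projection distance in \eqref{eq:ris_power} is equivalent to maximizing the linear term $\Re\{\tr(\mathbf{M}_g^{(r)\mathsf{H}}\mathbf{X}_g)\}$, i.e.\ to maximizing the minorizer itself. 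By von Neumann's trace inequality the maximizer is the unitary polar factor $\mathbf{\Theta}_g^{(r+1)} = \mathbf{U}_g^{(r)}\mathbf{V}_g^{(r)\mathsf{H}}$ of $\mathbf{M}_g^{(r)}$, which is precisely the closed form in \eqref{eq:ris_power}. The monotone ascent then follows from the standard minorize-maximize chain: $f(\mathbf{\Theta}_g^{(r+1)})$ dominates the minorizer at $\mathbf{\Theta}_g^{(r+1)}$ (remainder nonnegative), which dominates the minorizer at $\mathbf{\Theta}_g^{(r)}$ (optimality of the projection over the feasible set), which equals $f(\mathbf{\Theta}_g^{(r)})$ (tightness). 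Concatenating this over $g = 1,\ldots,G$ shows that a full sweep never decreases \eqref{ob:power}, establishing the first claim.

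Finally, for the stationarity claim I would note that the iterates lie on the compact product manifold $\prod_g \mathbb{U}^{L \times L}$ and that $f$ is continuous and bounded, so the monotone sequence of objective values converges. Assuming \eqref{eq:auxiliary_power} converges, the update becomes a fixed point $\mathbf{\Theta}_g = \mathbf{U}_g\mathbf{V}_g^\mathsf{H}$ with $\mathbf{M}_g = \mathbf{U}_g\mathbf{\Sigma}_g\mathbf{V}_g^\mathsf{H}$, whence $\mathbf{\Theta}_g^\mathsf{H}\mathbf{M}_g = \mathbf{V}_g\mathbf{\Sigma}_g\mathbf{V}_g^\mathsf{H} \in \mathbb{H}_+^{L \times L}$. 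This is exactly the first-order stationarity condition for maximizing $f$ over the unitary manifold, namely $\mathbf{M}_g = \mathbf{\Theta}_g\mathbf{\Lambda}_g$ for a Hermitian multiplier $\mathbf{\Lambda}_g$, equivalently the vanishing of the Riemannian gradient $\mathbf{M}_g - \mathbf{\Theta}_g\mathbf{M}_g^\mathsf{H}\mathbf{\Theta}_g$ of \eqref{eq:gradient_riem}, so every accumulation point is a stationary point.

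The main obstacle I anticipate is this last step: monotonicity and boundedness only guarantee convergence of the objective \emph{values}, not of the iterate \emph{sequence}, which could in principle cycle among distinct points of equal objective—this is why the statement is hedged by ``when \eqref{eq:auxiliary_power} converges.'' I would also need to address the non-uniqueness of the SVD/polar factor when $\mathbf{M}_g^{(r)}$ is rank-deficient (cf.\ Remark~\ref{rm:svd}); fortunately the stationarity characterization $\mathbf{\Theta}_g^\mathsf{H}\mathbf{M}_g \in \mathbb{H}_+^{L \times L}$ holds for \emph{any} admissible choice of singular matrices, so the argument is insensitive to this ambiguity.
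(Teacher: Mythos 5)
Your proposal is correct and takes essentially the same route as the paper's proof: linearize the quadratic objective, maximize the linear surrogate over the unitary blocks via the von Neumann trace inequality (yielding the polar/SVD factor in \eqref{eq:ris_power}), and obtain monotonicity from the non-negativity of the discarded quadratic remainder --- the paper's inequality \eqref{iq:auxiliary_nonnegative} is exactly that remainder, and its KKT comparison \eqref{eq:power_ris_optimal}--\eqref{eq:power_ris_taylor_optimal} at convergence is equivalent to your polar-factor fixed-point condition $\mathbf{\Theta}_g^\mathsf{H}\mathbf{M}_g \in \mathbb{H}_+^{L \times L}$ and vanishing Riemannian gradient. The only organizational difference is that you carry out the minorize-maximize step block-by-block, which matches the Gauss--Seidel structure of \eqref{eq:auxiliary_power} more literally, whereas the paper writes the linearized subproblem \eqref{op:power_ris_taylor} jointly over all blocks evaluated at the current iterate.
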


		\begin{proof}
			Please refer to Appendix~\ref{ap:power}.
		\end{proof}

		\begin{remark}
			\label{rm:power}
			Although a rigorous convergence proof remains intricate due to the non-uniqueness of \gls{svd}, empirical evidence from extensive simulation suggests that
			\eqref{eq:auxiliary_power} converges reliably from random initializations such that \eqref{eq:ris_power} consistently provides an optimal solution to problem \eqref{op:power}.
		\end{remark}

		To update each \gls{bd}-\gls{ris} group, the matrix multiplication \eqref{eq:auxiliary_power} requires $\mathcal{O}\bigl(N_\mathrm{T} N_\mathrm{R} + NL^2+N_\mathrm{T} N_\mathrm{R} L\bigr)$ flops and its \gls{svd} requires $\mathcal{O}(L^3)$ flops.
		The overall complexity is thus $\mathcal{O}\bigl(I_\text{SAA} G \bigl(N_\mathrm{T} N_\mathrm{R} + NL^2+N_\mathrm{T} N_\mathrm{R} L + L^3\bigr)\bigr)$, where $I_\text{SAA}$ is the number iterations for successive affine approximation.
		That is, $\mathcal{O}\bigl(N_\mathrm{S}\bigr)$ for \gls{d}-\gls{ris} and $\mathcal{O}\bigl(N_\mathrm{S}^3\bigr)$ for fully-connected \gls{bd}-\gls{ris}.
		It is worth mentioning that the computational complexity for fully-connected \gls{bd}-\gls{ris} can be further reduced:
		\begin{itemize}
			\item \emph{Negligible direct channel:} The optimal solution to \eqref{op:power} has been solved in closed form by \eqref{eq:ris_nd_power_max};
			\item \emph{Non-negligible direct channel:} In terms of maximizing the inner product $\langle \mathbf{H}_\mathrm{D}, \mathbf{H}_\mathrm{B} \mathbf{\Theta} \mathbf{H}_\mathrm{F} \rangle$, \eqref{op:power} is reminiscent of the weighted orthogonal Procrustes problem \cite{Viklands2006}
			\begin{mini!}
				{\scriptstyle{\mathbf{\Theta}}}{\lVert \mathbf{H}_\mathrm{D} - \mathbf{H}_\mathrm{B} \mathbf{\Theta} \mathbf{H}_\mathrm{F} \rVert _\mathrm{F}^2}{\label{op:procrustes_wt}}{}
				\addConstraint{\mathbf{\Theta}^\mathsf{H} \mathbf{\Theta}=\mathbf{I},}{\label{cn:procrustes_wt}}{}
			\end{mini!}
			which still has no trivial solution.
			One \emph{lossy} transformation \cite{Bell2003} shifts $\mathbf{\Theta}$ to sides of the product by Moore-Penrose inverse, formulating standard orthogonal Procrustes problems
			\begin{mini!}
				{\scriptstyle{\mathbf{\Theta}}}{\lVert \mathbf{H}_\mathrm{B}^\dagger \mathbf{H}_\mathrm{D} - \mathbf{\Theta} \mathbf{H}_\mathrm{F} \rVert _\mathrm{F}^2 \text{ or } \lVert \mathbf{H}_\mathrm{D} \mathbf{H}_\mathrm{F}^\dagger - \mathbf{H}_\mathrm{B} \mathbf{\Theta} \rVert _\mathrm{F}^2}{\label{op:procrustes}}{}
				\addConstraint{\mathbf{\Theta}^\mathsf{H} \mathbf{\Theta}=\mathbf{I},}{}{}
			\end{mini!}
			with optimal solutions \cite[(6.4.1)]{Golub2013}
			\begin{equation}
				\label{eq:ris_procrustes}
				\mathbf{\Theta}_\textnormal{P-max-approx}^\textnormal{MIMO} = \mathbf{U} \mathbf{V}^\mathsf{H},
			\end{equation}
			where $\mathbf{U}$ and $\mathbf{V}$ are respectively any left and right singular matrices of $\mathbf{H}_\mathrm{B}^\dagger \mathbf{H}_\mathrm{D} \mathbf{H}_\mathrm{F}^\mathsf{H}$ or $\mathbf{H}_\mathrm{B}^\mathsf{H} \mathbf{H}_\mathrm{D} \mathbf{H}_\mathrm{F}^\dagger$.
		\end{itemize}

		Although \eqref{eq:ris_nd_power_max} and \eqref{eq:ris_procrustes} are of similar form, the latter is neither optimal nor a generalization of the former due to the lossy transformation.
		We will show by simulation that \eqref{eq:ris_procrustes} still achieves near-optimal performance on average.
		Once the channel is shaped in closed form by \eqref{eq:ris_power} or \eqref{eq:ris_nd_power_max} or \eqref{eq:ris_procrustes}, the active beamforming is retrieved in closed form by \eqref{eq:precoder_rate}.
		This two-stage solution avoids outer iterations and efficiently handles (or avoids) inner iterations.
	\end{subsection}

\end{section}

\begin{section}{Simulation Results}
	\label{sc:simulation}
	In this section, we provide numerical results to evaluate the proposed \gls{bd}-\gls{ris} designs.\footnote{Source code is available at \url{https://github.com/snowztail/channel-shaping}.}
	Consider a distance-dependent path loss model $\Lambda(d) = \Lambda_0 d^{-\gamma}$ where $\Lambda_0$ is the reference path loss at distance \qty{1}{m}, $d$ is the propagation distance, and $\gamma$ is the path loss exponent.
	We set $\Lambda_0=\qty{-30}{dB}$, $\gamma_\mathrm{D}=3$, $\gamma_\mathrm{F}=2.4$, $\gamma_\mathrm{B}=2$, $d_\mathrm{D}=\qty{14.7}{m}$, $d_\mathrm{F}=\qty{10}{m}$, $d_\mathrm{B}=\qty{6.3}{m}$, which corresponds to a typical indoor environment with $\Lambda_\mathrm{D}=\qty{-65}{dB}$, $\Lambda_\mathrm{F}=\qty{-54}{dB}$, $\Lambda_\mathrm{B}=\qty{-46}{dB}$.
	The small-scale fading model is $\mathbf{H} = \sqrt{\kappa/(1+\kappa)} \mathbf{H}_\text{LoS} + \sqrt{1/(1+\kappa)} \mathbf{H}_\text{NLoS}$, where $\kappa$ is the Rician K-factor, $\mathbf{H}_\text{LoS}$ is the deterministic \gls{los} component, and $\mathbf{H}_\text{NLoS} \sim \mathcal{N}_{\mathbb{C}}(\mathbf{0}, \mathbf{I})$ is the Rayleigh component.
	Unless otherwise specified, we assume the direct channel is present, $\kappa = 0$ (i.e., Rayleigh fading) for all channels, and $\eta = \qty{-75}{dB}$.

	\begin{subsection}{Algorithm Evaluation}
		Here are benchmarks obtained running \texttt{MATLAB~R2023a} on an octa-core \texttt{AMD~Ryzen7~5800U} processor @ \qty{4.5}{\GHz} with \qty{16}{\gibi\byte} memory.
		\begin{table}[!t]
			\caption{Performance of \gls{rcg} Algorithms on \eqref{op:pareto} with $N_\mathrm{T}=N_\mathrm{R}=4$, $L=16$}
			\label{tb:complexity_algorithm}
			\centering
			\resizebox{\columnwidth}{!}{
				\begin{tabular}{ccccccc}
					\toprule
					\multirow{2}{*}{\gls{rcg} path} & \multicolumn{2}{c}{$N_\mathrm{S}=16$} & \multicolumn{2}{c}{$N_\mathrm{S}=64$} & \multicolumn{2}{c}{$N_\mathrm{S}=256$}                                      \\
					\cmidrule(lr){2-3} \cmidrule(lr){4-5} \cmidrule(lr){6-7}
					                                & Iterations                            & Time [ms]                             & Iterations                             & Time [ms] & Iterations & Time [ms] \\
					\midrule
					Geodesic                        & 6.493                                 & 1.807                                 & 9.003                                  & 7.378     & 12.98      & 49.41     \\
					Non-geodesic (Manopt)           & 8.601                                 & 25.90                                 & 11.09                                  & 36.27     & 14.29      & 65.89     \\
					\bottomrule
				\end{tabular}
			}
		\end{table}
		Table~\ref{tb:complexity_algorithm} benchmarks two \gls{rcg} algorithms on the Pareto singular value problem \eqref{op:pareto} with $N_\mathrm{T}=N_\mathrm{R}=4$ and $L=16$.
		The geodesic \gls{rcg} is implemented with pinched gradients w.r.t. $\mathbf{\Theta}$ and unified step size selection; please refer to the discussion below \eqref{eq:geodesic_tran} for details.
		The non-geodesic \gls{rcg} is implemented by \texttt{Manopt} toolbox at commit \texttt{a879a0d} \cite{Boumal14a}.
		Both algorithms employ a stopping criterion of relative change in the objective function with a tolerance of $\epsilon = \num{1e-4}$, such that the final values are identical within reasonable precision.
		The statistics are averaged over 1000 independent channel realizations.
		We observe that the non-geodesic \gls{rcg} typically requires 1 to 2 more iterations than its geodesic counterpart.
		This is because the addition is in the tangent space of the manifold and is less effective than manifold-native updates.
		When it comes to elapsed time, the geodesic \gls{rcg} is $1333\%$ faster than the non-geodesic counterpart when $N_\mathrm{S}=16$.
		The main reason is that the geodesic \gls{rcg} avoids the retraction step from the Euclidean space to the manifold.
		According to the profiler report, around $60\%$ of the non-geodesic \gls{rcg} runtime is spent on retraction, which becomes the main bottleneck of the algorithm.
		The advantage narrows down to $391.6\%$ and $33.35\%$ when $N_\mathrm{S}=64$ and $256$, respectively.
		This is because accurately evaluating matrix exponential can be time-consuming for large $N_\mathrm{S}$.

		\begin{table}[!t]
			\caption{Performance of \gls{d}-\gls{ris} and \gls{bd}-\gls{ris} on \eqref{op:rate} with $N_\mathrm{T}=N_\mathrm{R}=4$}
			\label{tb:complexity_bond}
			\centering
			\resizebox{\columnwidth}{!}{
				\begin{tabular}{ccccccc}
					\toprule
					\multirow{2}{*}{\gls{ris} type} & \multicolumn{2}{c}{$N_\mathrm{S}=16$} & \multicolumn{2}{c}{$N_\mathrm{S}=64$} & \multicolumn{2}{c}{$N_\mathrm{S}=256$}                                      \\
					\cmidrule(lr){2-3} \cmidrule(lr){4-5} \cmidrule(lr){6-7}
					                                & Iterations                            & Time [ms]                             & Iterations                             & Time [ms] & Iterations & Time [ms] \\ \midrule
					Diagonal                        & 2.010                                 & 7.848                                 & 2.023                                  & 36.33     & 2.141      & 261.1     \\
					Fully-connected \gls{bd}        & 2.049                                 & 4.878                                 & 2.027                                  & 15.17     & 2.030      & 305.5     \\ \bottomrule
				\end{tabular}
			}
		\end{table}
		Table~\ref{tb:complexity_bond} compares the performance of \gls{d}-\gls{ris} and fully-connected \gls{bd}-\gls{ris} on rate maximization problem \eqref{op:rate} using the \gls{ao} design in Section~\ref{sc:rate_ao}, where $N_\mathrm{T}=N_\mathrm{R}=4$ and $P=\qty{20}{dB}$.
		The statistics are averaged over \num{1000} independent runs.
		Interestingly, as opposite to the asymptotic complexity analysis, {the optimization of fully-connected \gls{bd}-\gls{ris} actually takes shorter elapsed time than \gls{d}-\gls{ris} when $N_\mathrm{S}$ is not excessively large.}
		One possible reason is that fully-connected \gls{bd}-\gls{ris} only involves 1 backtracking line search per iteration while \gls{d}-\gls{ris} requires $N_\mathrm{S}$ times.
		Another reason is that the group-wise update of \gls{d}-\gls{ris} leads to slower convergence of inner iterations.
		These numerical results, together with the closed-form solutions provided in the analysis section, together suggest that designing a practically-sized \gls{bd}-\gls{ris} may be less computational expensive than expected.
	\end{subsection}

	\begin{subsection}{Channel Singular Value Redistribution}
		\begin{subsubsection}{Achievable Singular Value Region}
			\begin{figure}[!t]
				\centering
				\subfloat[Top view]{
					\label{fg:singular_region_top}
					\includegraphics[width=0.485\columnwidth]{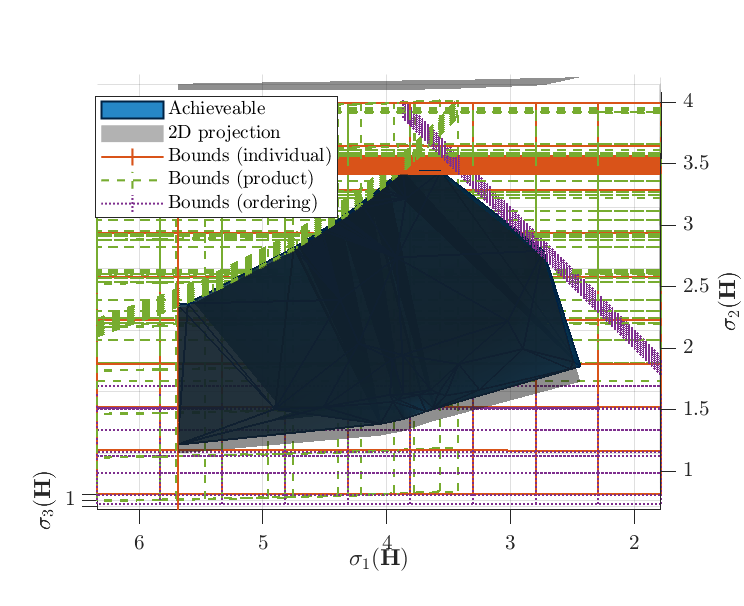}
				}
				\subfloat[Side view]{
					\label{fg:singular_region_side}
					\includegraphics[width=0.485\columnwidth]{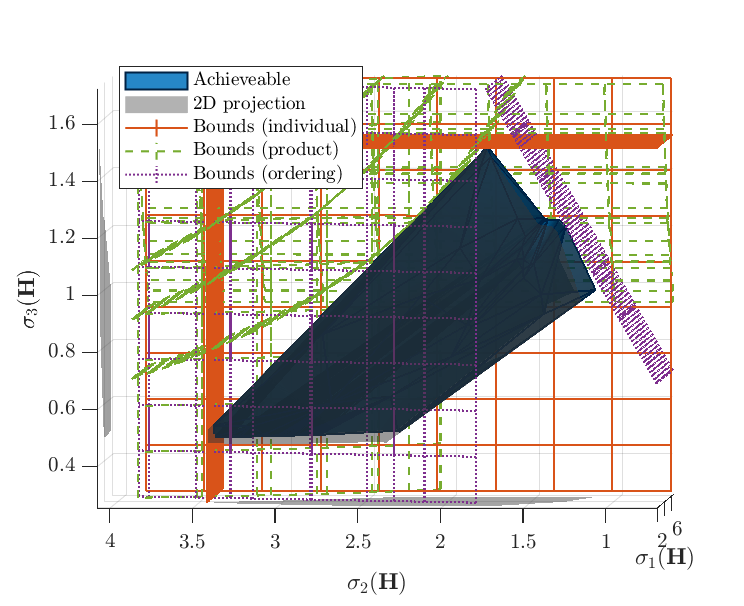}
				}
				\caption{Theoretical singular value outer bounds \eqref{iq:horn} (uniformly-spaced mesh grids) vs achievable singular value region by solving \eqref{op:pareto} (solid dark shape) for one channel realization, where $N_\mathrm{T}=N_\mathrm{S}=N_\mathrm{R}=3$, the direct channel is negligible, and the \gls{bd}-\gls{ris} is fully-connected. Small offsets are introduced on both views such that the active bounds are highlighted by densely-spaced curves/lines that marginally overlap the region from above. The achievable region lies entirely within the intersection of the bounding surfaces in the 3D space.}
				\label{fg:singular_region}
			\end{figure}

			Fig.~\ref{fg:singular_region} compares the achievable singular value region obtained by solving problem \eqref{op:pareto} and its outer bounds suggested by Corollary~\ref{co:nd_sv_prod_subset}.
			Here $\bar{N}=N_\mathrm{T}=N_\mathrm{S}=N_\mathrm{R}=3$ and the bounds are enumerated as
			\begin{subequations}
				\label{iq:sv_3_bounds}
				\small
				\begin{equation}
					\label{iq:sv_3_individual}
					\begin{gathered}
						\sigma_1(\mathbf{H}) \le \sigma_1(\mathbf{H}_\mathrm{B}) \sigma_1(\mathbf{H}_\mathrm{F}), \quad
						\sigma_2(\mathbf{H}) \le \sigma_1(\mathbf{H}_\mathrm{B}) \sigma_2(\mathbf{H}_\mathrm{F}), \\
						\sigma_2(\mathbf{H}) \le \sigma_2(\mathbf{H}_\mathrm{B}) \sigma_1(\mathbf{H}_\mathrm{F}), \quad
						\sigma_3(\mathbf{H}) \le \sigma_1(\mathbf{H}_\mathrm{B}) \sigma_3(\mathbf{H}_\mathrm{F}), \\
						\sigma_3(\mathbf{H}) \le \sigma_2(\mathbf{H}_\mathrm{B}) \sigma_2(\mathbf{H}_\mathrm{F}), \quad
						\sigma_3(\mathbf{H}) \le \sigma_3(\mathbf{H}_\mathrm{B}) \sigma_1(\mathbf{H}_\mathrm{F}),
					\end{gathered}
				\end{equation}
				\begin{equation}
					\label{iq:sv_3_product}
					\begin{gathered}
						\sigma_1(\mathbf{H}) \sigma_2(\mathbf{H}) \le \sigma_1(\mathbf{H}_\mathrm{B}) \sigma_2(\mathbf{H}_\mathrm{B}) \sigma_1(\mathbf{H}_\mathrm{F}) \sigma_2(\mathbf{H}_\mathrm{F}), \\
						\sigma_1(\mathbf{H}) \sigma_3(\mathbf{H}) \le \sigma_1(\mathbf{H}_\mathrm{B}) \sigma_3(\mathbf{H}_\mathrm{B}) \sigma_1(\mathbf{H}_\mathrm{F}) \sigma_2(\mathbf{H}_\mathrm{F}), \\
						\sigma_1(\mathbf{H}) \sigma_3(\mathbf{H}) \le \sigma_1(\mathbf{H}_\mathrm{B}) \sigma_2(\mathbf{H}_\mathrm{B}) \sigma_1(\mathbf{H}_\mathrm{F}) \sigma_3(\mathbf{H}_\mathrm{F}), \\
						\sigma_2(\mathbf{H}) \sigma_3(\mathbf{H}) \le \sigma_1(\mathbf{H}_\mathrm{B}) \sigma_2(\mathbf{H}_\mathrm{B}) \sigma_2(\mathbf{H}_\mathrm{F}) \sigma_3(\mathbf{H}_\mathrm{F}), \\
						\sigma_2(\mathbf{H}) \sigma_3(\mathbf{H}) \le \sigma_2(\mathbf{H}_\mathrm{B}) \sigma_3(\mathbf{H}_\mathrm{B}) \sigma_1(\mathbf{H}_\mathrm{F}) \sigma_2(\mathbf{H}_\mathrm{F}), \\
						\sigma_2(\mathbf{H}) \sigma_3(\mathbf{H}) \le \sigma_1(\mathbf{H}_\mathrm{B}) \sigma_3(\mathbf{H}_\mathrm{B}) \sigma_1(\mathbf{H}_\mathrm{F}) \sigma_3(\mathbf{H}_\mathrm{F}),
					\end{gathered}
				\end{equation}
				\begin{equation}
					\label{iq:sv_3_ordering}
					\sigma_1(\mathbf{H}) \ge \sigma_2(\mathbf{H}) \ge \sigma_3(\mathbf{H}),
				\end{equation}
			\end{subequations}
			where \eqref{iq:sv_3_individual}, \eqref{iq:sv_3_product} are explicit results of \eqref{iq:horn} while \eqref{iq:sv_3_ordering} denotes the ordering of singular values.
			Those are labeled respectively as `Bounds (individual)', `Bounds (product)', and `Bounds (ordering)' in Fig.~\ref{fg:singular_region}.
			The two views confirm that the theoretical outer bounds are not everywhere tight with many entries being redundant, but they provide a conservative estimate of the achievable singular value region.
			Importantly, the vertices of the region lie on the bounding surfaces and can be obtained in closed form without performing optimization.



			\begin{figure}[!t]
				\centering
				\subfloat[$2 \times 32 \times 2$ (no direct)]{
					\label{fg:singular_pareto_sx32_nd}
					\resizebox{!}{3.25cm}{
						\input{assets/simulation/pc_singular_pareto_sx32_nd.tex}

					}
				}
				\subfloat[$2 \times 32 \times 2$]{
					\label{fg:singular_pareto_sx32}
					\resizebox{!}{3.25cm}{
						\input{assets/simulation/pc_singular_pareto_sx32.tex}
					}
				}
				\\
				\subfloat[$2 \times 64 \times 2$]{
					\label{fg:singular_pareto_sx64}
					\resizebox{!}{3.25cm}{
						\input{assets/simulation/pc_singular_pareto_sx64.tex}
					}
				}
				\subfloat[$2 \times 128 \times 2$]{
					\label{fg:singular_pareto_sx128}
					\resizebox{!}{3.25cm}{
						\input{assets/simulation/pc_singular_pareto_sx128.tex}
					}
				}
				\caption{Achievable singular value regions of an $N_\mathrm{T}=N_\mathrm{R}=2$ channel shaped by \gls{bd}-\gls{ris}. The singular value pair of the direct channel are marked as baseline.
					On the Pareto frontiers, `P-max', `E-max', and `R-max' refer to the channel power gain-optimal point, wireless power transfer-optimal point, and rate-optimal arc, respectively.}
				\label{fg:singular_pareto}
			\end{figure}
			Fig.~\ref{fg:singular_pareto} illustrates the achievable regions of singular values of an $N_\mathrm{T}=N_\mathrm{R}=2$ point-to-point \gls{mimo} shaped by \gls{ris}, where the channel power gain-optimal point, wireless power transfer-optimal point, and rate-optimal arc are highlighted on the Pareto frontiers.
			The results are obtained by solving the channel shaping problem \eqref{op:pareto} merely without any application-specific optimization.
			As the \gls{snr} increases, the rate-optimal point proceeds on the arc from the east (favoring $\sigma_1(\mathbf{H})$) to the north (favoring $\sigma_2(\mathbf{H})$), which aligns with the expected behavior of water filling.
			When the direct channel is negligible, the achievable regions in Fig.~\subref*{fg:singular_pareto_sx32_nd} are shaped like pizza slices.
			This is because $\sigma_1(\mathbf{H}) \ge \sigma_2(\mathbf{H})$ and there exists a tradeoff between the alignment of two modes.
			The smallest singular value can be enhanced up to \num{2e-4} by \gls{d}-\gls{ris} and \num{3e-4} by fully-connected \gls{bd}-\gls{ris}, corresponding to a \qty{50}{\percent} gain.
			We also see that for fully-connected \gls{bd}-\gls{ris}, there exists a point that is simultaneously optimal for channel power gain, harvested power of wireless power transfer, and achievable rate of wireless communication, as indicated by \eqref{eq:ris_nd_power_max}, \eqref{eq:ris_nd_sv_indl_max}, and \eqref{eq:ris_nd_rate_max}.
			Interestingly, this observation still holds in Figs.~\subref*{fg:singular_pareto_sx32} -- \subref*{fg:singular_pareto_sx128} where the direct channel is \emph{not} negligible.
			It is a pity that we could not provide a formal proof on this due to the non-trivial solution structures.
			The shape of the singular value region depends heavily on the relative strength of the indirect channels, which increases with $N_\mathrm{S}$ from the baseline $\Lambda_\mathrm{F}\Lambda_\mathrm{B}/\Lambda_\mathrm{D}=\qty{-35}{dB}$.
			Fig.~\subref*{fg:singular_pareto_sx32} shows that a 32-element \gls{ris} is insufficient to compensate this imbalance and results in a limited singular value region that is symmetric around the direct point.
			As the group size $L$ increases, the shape of the region evolves from elliptical to square.
			This transformation not only improves the dynamic range of $\sigma_1(\mathbf{H})$ and $\sigma_2(\mathbf{H})$ by \qty{22}{\percent} and \qty{38}{\percent} respectively, but also provides a better tradeoff in manipulating both singular values.
			The observation verifies that the design flexibility of \gls{bd}-\gls{ris} allows better alignment of multiple modes simultaneously.
			As a consequence, {the optimally shaped channels for power gain, communication, and power transfer coincide,} implying that {a fully-connected \gls{bd}-\gls{ris} may be designed in closed-form for simultaneous multi-functional optimality.}
			The singular value region also enlarges as the number of scattering elements $N_\mathrm{S}$ increases.
			In particular, Fig.~\subref*{fg:singular_pareto_sx128} shows that the equivalent channel can be completely nulled (corresponding to the origin) by a 128-element \gls{bd}-\gls{ris} but not by a diagonal one.
			The effect may be leveraged for interference cancellation and covert communication.
			Those results demonstrate the superior channel shaping capability of \gls{bd}-\gls{ris} and emphasizes the importance of adding reconfigurable components between \gls{ris} elements.
		\end{subsubsection}

		\begin{subsubsection}{Analytical Bounds and Numerical Results}
			\begin{figure}[!t]
				\centering
				\subfloat[$4 \times 32 \times 4$ (rank-1)]{
					\label{fg:singular_bound_rank1_sx32}
					\resizebox{!}{3.25cm}{
%
%
\definecolor{mycolor1}{rgb}{0.00000,0.44700,0.74100}%
\definecolor{mycolor2}{rgb}{0.00000,0.44706,0.74118}%
\definecolor{mycolor3}{rgb}{0.85098,0.32549,0.09804}%
\definecolor{mycolor4}{rgb}{0.49400,0.18400,0.55600}%
\definecolor{mycolor5}{rgb}{0.46667,0.67451,0.18824}%
\makeatletter
\newcommand\resetstackedplots{
    \makeatletter
    \pgfplots@stacked@isfirstplottrue
    \makeatother
    \addplot [forget plot,draw=none] coordinates{(1,0) (5,0) (10,0) (15,0) (20,0)};
}
\makeatother
\begin{tikzpicture}

\begin{axis}[%
width=9.509cm,
height=7.5cm,
at={(0cm,0cm)},
scale only axis,
bar width=0.325,
xmin=-0.125,
xmax=5.125,
xtick={1,2,3,4},
xticklabels={{$\sigma_1(\mathbf{H})$},{$\sigma_2(\mathbf{H})$},{$\sigma_3(\mathbf{H})$},{$\sigma_4(\mathbf{H})$}},
ymin=0,
ymax=0.002,
ylabel={Amplitude},
axis background/.style={fill=white},
xmajorgrids,
ymajorgrids,
legend style={legend cell align=left, align=left, draw=white!15!black},
legend columns=4,
transpose legend,
legend style={/tikz/column 2/.style={column sep=5pt}}
]
\addplot[ybar stacked, fill=none, draw=none, forget plot] table[row sep=crcr] {%
0.8375	0.00172515536422714\\
1.8375	0.00100217806517558\\
2.8375	0.000352757538105714\\
3.8375	0.000123939298123254\\
};
\addplot[forget plot, color=white!15!black] table[row sep=crcr] {%
-0.125	0\\
5.125	0\\
};
\addplot[ybar stacked, fill=mycolor2, fill opacity=0.5, draw=black, area legend] table[row sep=crcr] {%
0.8375	1.77586704705756e-05\\
1.8375	0.000127158817727647\\
2.8375	0.0002203611436369\\
3.8375	0.000174937536612442\\
};
\addplot[forget plot, color=white!15!black] table[row sep=crcr] {%
-0.125	0\\
5.125	0\\
};
\addlegendentry{D-min}

\addplot[ybar stacked, fill=mycolor3, fill opacity=0.5, draw=black, area legend] table[row sep=crcr] {%
0.8375	0.000181872601837377\\
1.8375	0.000366753236169139\\
2.8375	0.000296256103311964\\
3.8375	3.0817049708412e-05\\
};
\addplot[forget plot, color=white!15!black] table[row sep=crcr] {%
-0.125	0\\
5.125	0\\
};
\addlegendentry{D-max}

\resetstackedplots

\addplot[ybar stacked, fill=none, draw=none, forget plot] table[row sep=crcr] {%
1.1625	0.00172516728828725\\
2.1625	0.00100306026989406\\
3.1625	0.000335153436324708\\
4.1625	0.000105743462680291\\
};
\addplot[forget plot, color=white!15!black] table[row sep=crcr] {%
-0.125	0\\
5.125	0\\
};
\addplot[ybar stacked, fill=mycolor2, draw=black, area legend] table[row sep=crcr] {%
1.1625	1.77467464104629e-05\\
2.1625	0.000126276613009161\\
3.1625	0.000237965245417907\\
4.1625	0.000193133372055405\\
};
\addplot[forget plot, color=white!15!black] table[row sep=crcr] {%
-0.125	0\\
5.125	0\\
};
\addlegendentry{BD-min}

\addplot[ybar stacked, fill=mycolor3, draw=black, area legend] table[row sep=crcr] {%
1.1625	0.000222406605491606\\
2.1625	0.000415962829151039\\
3.1625	0.000325674699988366\\
4.1625	3.0817049696797e-05\\
};
\addplot[forget plot, color=white!15!black] table[row sep=crcr] {%
-0.125	0\\
5.125	0\\
};
\addlegendentry{BD-max}

\addplot [color=mycolor5, line width=2.0pt]
  table[row sep=crcr]{%
-0.125	0.00172515536418547\\
5.125	0.00172515536418547\\
};
\addlegendentry{$\sigma_1(\mathbf{T})$}

\addplot [color=mycolor5, dashed, line width=2.0pt]
  table[row sep=crcr]{%
-0.125	0.00100217806517392\\
5.125	0.00100217806517392\\
};
\addlegendentry{$\sigma_2(\mathbf{T})$}

\addplot [color=mycolor5, dotted, line width=2.0pt]
  table[row sep=crcr]{%
-0.125	0.000329693884444316\\
5.125	0.000329693884444316\\
};
\addlegendentry{$\sigma_3(\mathbf{T})$}

\addplot [color=mycolor5, dashdotted, line width=2.0pt]
  table[row sep=crcr]{%
-0.125	1.01657134672358e-11\\
5.125	1.01657134672358e-11\\
};
\addlegendentry{$\sigma_4(\mathbf{T})$}

\end{axis}

\end{tikzpicture}%
					}
				}
				\subfloat[$4 \times 64 \times 4$ (rank-1)]{
					\label{fg:singular_bound_rank1_sx64}
					\resizebox{!}{3.25cm}{
%
%
\definecolor{mycolor1}{rgb}{0.00000,0.44700,0.74100}%
\definecolor{mycolor2}{rgb}{0.00000,0.44706,0.74118}%
\definecolor{mycolor3}{rgb}{0.85098,0.32549,0.09804}%
\definecolor{mycolor4}{rgb}{0.49400,0.18400,0.55600}%
\definecolor{mycolor5}{rgb}{0.46667,0.67451,0.18824}%
\makeatletter
\newcommand\resetstackedplots{
    \makeatletter
    \pgfplots@stacked@isfirstplottrue
    \makeatother
    \addplot [forget plot,draw=none] coordinates{(1,0) (5,0) (10,0) (15,0) (20,0)};
}
\makeatother
\begin{tikzpicture}

\begin{axis}[%
width=9.509cm,
height=7.5cm,
at={(0cm,0cm)},
scale only axis,
bar width=0.325,
xmin=-0.125,
xmax=5.125,
xtick={1,2,3,4},
xticklabels={{$\sigma_1(\mathbf{H})$},{$\sigma_2(\mathbf{H})$},{$\sigma_3(\mathbf{H})$},{$\sigma_4(\mathbf{H})$}},
ymin=0,
ymax=0.003,
ylabel={Amplitude},
axis background/.style={fill=white},
xmajorgrids,
ymajorgrids,
legend style={legend cell align=left, align=left, draw=white!15!black},
legend columns=4,
transpose legend,
legend style={/tikz/column 2/.style={column sep=5pt}}
]
\addplot[ybar stacked, fill=none, draw=none, forget plot] table[row sep=crcr] {%
0.8375	0.00195358578718362\\
1.8375	0.000905403110314898\\
2.8375	0.000478480067807894\\
3.8375	3.24792561983075e-17\\
};
\addplot[forget plot, color=white!15!black] table[row sep=crcr] {%
-0.125	0\\
5.125	0\\
};
\addplot[ybar stacked, fill=mycolor2, fill opacity=0.5, draw=black, area legend] table[row sep=crcr] {%
0.8375	4.10673634284198e-05\\
1.8375	9.08396603062443e-05\\
2.8375	0.000280905224832324\\
3.8375	0.000316345482660046\\
};
\addplot[forget plot, color=white!15!black] table[row sep=crcr] {%
-0.125	0\\
5.125	0\\
};
\addlegendentry{D-min}

\addplot[ybar stacked, fill=mycolor3, fill opacity=0.5, draw=black, area legend] table[row sep=crcr] {%
0.8375	0.00057247498440489\\
1.8375	0.000934081190202522\\
2.8375	0.00014601781767251\\
3.8375	0.000162134585145759\\
};
\addplot[forget plot, color=white!15!black] table[row sep=crcr] {%
-0.125	0\\
5.125	0\\
};
\addlegendentry{D-max}

\resetstackedplots

\addplot[ybar stacked, fill=none, draw=none, forget plot] table[row sep=crcr] {%
1.1625	0.00195358621182515\\
2.1625	0.000905405810406567\\
3.1625	0.000478484657925968\\
4.1625	3.62951302572638e-06\\
};
\addplot[forget plot, color=white!15!black] table[row sep=crcr] {%
-0.125	0\\
5.125	0\\
};
\addplot[ybar stacked, fill=mycolor2, draw=black, area legend] table[row sep=crcr] {%
1.1625	4.10669387868874e-05\\
2.1625	9.0836960214575e-05\\
3.1625	0.000280900634714249\\
4.1625	0.000312715969634352\\
};
\addplot[forget plot, color=white!15!black] table[row sep=crcr] {%
-0.125	0\\
5.125	0\\
};
\addlegendentry{BD-min}

\addplot[ybar stacked, fill=mycolor3, draw=black, area legend] table[row sep=crcr] {%
1.1625	0.000668204539596567\\
2.1625	0.000957343016500992\\
3.1625	0.000146017817609404\\
4.1625	0.000162134585134279\\
};
\addplot[forget plot, color=white!15!black] table[row sep=crcr] {%
-0.125	0\\
5.125	0\\
};
\addlegendentry{BD-max}

\addplot [color=mycolor5, line width=2.0pt]
  table[row sep=crcr]{%
-0.125	0.00195358578716264\\
5.125	0.00195358578716264\\
};
\addlegendentry{$\sigma_1(\mathbf{T})$}

\addplot [color=mycolor5, dashed, line width=2.0pt]
  table[row sep=crcr]{%
-0.125	0.000905403110313456\\
5.125	0.000905403110313456\\
};
\addlegendentry{$\sigma_2(\mathbf{T})$}

\addplot [color=mycolor5, dotted, line width=2.0pt]
  table[row sep=crcr]{%
-0.125	0.000478480067806991\\
5.125	0.000478480067806991\\
};
\addlegendentry{$\sigma_3(\mathbf{T})$}

\addplot [color=mycolor5, dashdotted, line width=2.0pt]
  table[row sep=crcr]{%
-0.125	1.20606340074858e-11\\
5.125	1.20606340074858e-11\\
};
\addlegendentry{$\sigma_4(\mathbf{T})$}

\end{axis}
\end{tikzpicture}%
					}
				}
				\\
				\subfloat[$4 \times 128 \times 4$ (rank-2)]{
					\label{fg:singular_bound_rank2_sx128}
					\resizebox{!}{3.25cm}{
%
%
\definecolor{mycolor1}{rgb}{0.00000,0.44700,0.74100}%
\definecolor{mycolor2}{rgb}{0.00000,0.44706,0.74118}%
\definecolor{mycolor3}{rgb}{0.85098,0.32549,0.09804}%
\definecolor{mycolor4}{rgb}{0.49400,0.18400,0.55600}%
\definecolor{mycolor5}{rgb}{0.46667,0.67451,0.18824}%
\makeatletter
\newcommand\resetstackedplots{
    \makeatletter
    \pgfplots@stacked@isfirstplottrue
    \makeatother
    \addplot [forget plot,draw=none] coordinates{(1,0) (5,0) (10,0) (15,0) (20,0)};
}
\makeatother
\begin{tikzpicture}

\begin{axis}[%
width=9.509cm,
height=7.5cm,
at={(0cm,0cm)},
scale only axis,
bar width=0.325,
xmin=-0.125,
xmax=5.125,
xtick={1,2,3,4},
xticklabels={{$\sigma_1(\mathbf{H})$},{$\sigma_2(\mathbf{H})$},{$\sigma_3(\mathbf{H})$},{$\sigma_4(\mathbf{H})$}},
ymin=0,
ymax=0.004,
ylabel={Amplitude},
axis background/.style={fill=white},
xmajorgrids,
ymajorgrids,
legend style={legend cell align=left, align=left, draw=white!15!black},
legend columns=4,
transpose legend,
legend style={/tikz/column 2/.style={column sep=5pt}}
]
\addplot[ybar stacked, fill=none, draw=none, forget plot] table[row sep=crcr] {%
0.8375	0.00142543267002251\\
1.8375	0.000656306214245826\\
2.8375	0.000132970460937413\\
3.8375	3.2994096547908e-20\\
};
\addplot[forget plot, color=white!15!black] table[row sep=crcr] {%
-0.125	0\\
5.125	0\\
};
\addplot[ybar stacked, fill=mycolor2, fill opacity=0.5, draw=black, area legend] table[row sep=crcr] {%
0.8375	0.000316188517375316\\
1.8375	0.000636356951614819\\
2.8375	0.000675093172956787\\
3.8375	0.000311028141103694\\
};
\addplot[forget plot, color=white!15!black] table[row sep=crcr] {%
-0.125	0\\
5.125	0\\
};
\addlegendentry{D-min}

\addplot[ybar stacked, fill=mycolor3, fill opacity=0.5, draw=black, area legend] table[row sep=crcr] {%
0.8375	0.00143103448582135\\
1.8375	0.000911508088276409\\
2.8375	0.000557699498413843\\
3.8375	0.000345278073129164\\
};
\addplot[forget plot, color=white!15!black] table[row sep=crcr] {%
-0.125	0\\
5.125	0\\
};
\addlegendentry{D-max}

\resetstackedplots

\addplot[ybar stacked, fill=none, draw=none, forget plot] table[row sep=crcr] {%
1.1625	0.00142543579700608\\
2.1625	0.00065631304524136\\
3.1625	5.27477395632166e-06\\
4.1625	4.21048388214027e-08\\
};
\addplot[forget plot, color=white!15!black] table[row sep=crcr] {%
-0.125	0\\
5.125	0\\
};
\addplot[ybar stacked, fill=mycolor2, draw=black, area legend] table[row sep=crcr] {%
1.1625	0.000316185390391755\\
2.1625	0.000636350120619285\\
3.1625	0.000802788859937878\\
4.1625	0.000310986036264873\\
};
\addplot[forget plot, color=white!15!black] table[row sep=crcr] {%
-0.125	0\\
5.125	0\\
};
\addlegendentry{BD-min}

\addplot[ybar stacked, fill=mycolor3, draw=black, area legend] table[row sep=crcr] {%
1.1625	0.00179647322615157\\
2.1625	0.00141980780340078\\
3.1625	0.000617369035963568\\
4.1625	0.000345278072973618\\
};
\addplot[forget plot, color=white!15!black] table[row sep=crcr] {%
-0.125	0\\
5.125	0\\
};
\addlegendentry{BD-max}

\addplot [color=mycolor5, line width=2.0pt]
  table[row sep=crcr]{%
-0.125	0.00142543266997377\\
5.125	0.00142543266997377\\
};
\addlegendentry{$\sigma_1(\mathbf{T})$}

\addplot [color=mycolor5, dashed, line width=2.0pt]
  table[row sep=crcr]{%
-0.125	0.000656306214237067\\
5.125	0.000656306214237067\\
};
\addlegendentry{$\sigma_2(\mathbf{T})$}

\addplot [color=mycolor5, dotted, line width=2.0pt]
  table[row sep=crcr]{%
-0.125	1.83134593569563e-11\\
5.125	1.83134593569563e-11\\
};
\addlegendentry{$\sigma_3(\mathbf{T})$}

\addplot [color=mycolor5, dashdotted, line width=2.0pt]
  table[row sep=crcr]{%
-0.125	8.07634856586464e-12\\
5.125	8.07634856586464e-12\\
};
\addlegendentry{$\sigma_4(\mathbf{T})$}

\end{axis}
\end{tikzpicture}%
					}
				}
				\subfloat[$4 \times 256 \times 4$ (rank-4)]{
					\label{fg:singular_bound_rank4_sx256}
					\resizebox{!}{3.25cm}{
%
%
\definecolor{mycolor1}{rgb}{0.00000,0.44700,0.74100}%
\definecolor{mycolor2}{rgb}{0.00000,0.44706,0.74118}%
\definecolor{mycolor3}{rgb}{0.85098,0.32549,0.09804}%
\definecolor{mycolor4}{rgb}{0.49400,0.18400,0.55600}%
\definecolor{mycolor5}{rgb}{0.46667,0.67451,0.18824}%
\makeatletter
\newcommand\resetstackedplots{
    \makeatletter
    \pgfplots@stacked@isfirstplottrue
    \makeatother
    \addplot [forget plot,draw=none] coordinates{(1,0) (5,0) (10,0) (15,0) (20,0)};
}
\makeatother
\begin{tikzpicture}

\begin{axis}[%
width=9.509cm,
height=7.5cm,
at={(0cm,0cm)},
scale only axis,
bar width=0.325,
xmin=-0.125,
xmax=5.125,
xtick={1,2,3,4},
xticklabels={{$\sigma_1(\mathbf{H})$},{$\sigma_2(\mathbf{H})$},{$\sigma_3(\mathbf{H})$},{$\sigma_4(\mathbf{H})$}},
ymin=0,
ymax=0.005,
ylabel={Amplitude},
axis background/.style={fill=white},
xmajorgrids,
ymajorgrids,
legend style={legend cell align=left, align=left, draw=white!15!black},
legend columns=4,
transpose legend,
legend style={/tikz/column 2/.style={column sep=5pt}}
]
\addplot[ybar stacked, fill=none, draw=none, forget plot] table[row sep=crcr] {%
0.8375	0.000590158480426682\\
1.8375	0.000129241592659798\\
2.8375	2.58214028253461e-05\\
3.8375	6.14844865636593e-20\\
};
\addplot[forget plot, color=white!15!black] table[row sep=crcr] {%
-0.125	0\\
5.125	0\\
};
\addplot[ybar stacked, fill=mycolor2, fill opacity=0.5, draw=black, area legend] table[row sep=crcr] {%
0.8375	0.00131538553898149\\
1.8375	0.000939081061667071\\
2.8375	0.00068542783385897\\
3.8375	0.000169352904449523\\
};
\addplot[forget plot, color=white!15!black] table[row sep=crcr] {%
-0.125	0\\
5.125	0\\
};
\addlegendentry{D-min}

\addplot[ybar stacked, fill=mycolor3, fill opacity=0.5, draw=black, area legend] table[row sep=crcr] {%
0.8375	0.0023951194667282\\
1.8375	0.00184094568953638\\
2.8375	0.00115975659942478\\
3.8375	0.000827148080235149\\
};
\addplot[forget plot, color=white!15!black] table[row sep=crcr] {%
-0.125	0\\
5.125	0\\
};
\addlegendentry{D-max}

\resetstackedplots

\addplot[ybar stacked, fill=none, draw=none, forget plot] table[row sep=crcr] {%
1.1625	0.000346385584583973\\
2.1625	7.78556027220081e-08\\
3.1625	3.09245924695688e-10\\
4.1625	3.0971797863086e-11\\
};
\addplot[forget plot, color=white!15!black] table[row sep=crcr] {%
-0.125	0\\
5.125	0\\
};
\addplot[ybar stacked, fill=mycolor2, draw=black, area legend] table[row sep=crcr] {%
1.1625	0.00155915843482419\\
2.1625	0.00106824479872415\\
3.1625	0.000711248927438391\\
4.1625	0.000169352873477725\\
};
\addplot[forget plot, color=white!15!black] table[row sep=crcr] {%
-0.125	0\\
5.125	0\\
};
\addlegendentry{BD-min}

\addplot[ybar stacked, fill=mycolor3, draw=black, area legend] table[row sep=crcr] {%
1.1625	0.002953753985213\\
2.1625	0.002904805544911\\
3.1625	0.00250318921193009\\
4.1625	0.00168645690620485\\
};
\addplot[forget plot, color=white!15!black] table[row sep=crcr] {%
-0.125	0\\
5.125	0\\
};
\addlegendentry{BD-max}

\addplot [color=mycolor5, line width=2.0pt]
  table[row sep=crcr]{%
-0.125	3.45656762152605e-11\\
5.125	3.45656762152605e-11\\
};
\addlegendentry{$\sigma_1(\mathbf{T})$}

\addplot [color=mycolor5, dashed, line width=2.0pt]
  table[row sep=crcr]{%
-0.125	1.40250773872945e-11\\
5.125	1.40250773872945e-11\\
};
\addlegendentry{$\sigma_2(\mathbf{T})$}

\addplot [color=mycolor5, dotted, line width=2.0pt]
  table[row sep=crcr]{%
-0.125	9.62701814871279e-12\\
5.125	9.62701814871279e-12\\
};
\addlegendentry{$\sigma_3(\mathbf{T})$}

\addplot [color=mycolor5, dashdotted, line width=2.0pt]
  table[row sep=crcr]{%
-0.125	3.69974075950399e-12\\
5.125	3.69974075950399e-12\\
};
\addlegendentry{$\sigma_4(\mathbf{T})$}

\end{axis}
\end{tikzpicture}%
					}
				}
				\caption{
					Achievable channel singular values: analytical bounds (lines) and numerical results (bars).
					Baselines of bars denote the singular values of the direct channel.
					Blue (resp. red) bars denote the lower (resp. upper) dynamic range of singular values obtained by solving \eqref{op:pareto} with $\rho_n/\rho_{n'} \to 0$ (resp. $\to \infty$), $\forall n, \ n' \ne n$.
					`D' means \gls{d}-\gls{ris} and `BD' refers to fully-connected \gls{bd}-\gls{ris}.
					`rank-$k$' refers to the rank of the forward channel.
				}
				\label{fg:singular_bound}
			\end{figure}
			We focus on achieving the asymptotic bounds in Proposition~\ref{pp:rd} by finite $N_\mathrm{S}$, since most results from Proposition~\ref{pp:nd} are supplied with closed-form \gls{ris} solutions.
			For a rank-$k$ forward channel, Fig.~\ref{fg:singular_bound} compares the individual singular value bounds in Proposition~\ref{pp:rd} and the numerical results obtained by solving problem \eqref{op:pareto} with proper weights.
			When the \gls{ris} is in the \gls{los} of the transmitter, Figs.~\subref*{fg:singular_bound_rank1_sx32} and \subref*{fg:singular_bound_rank1_sx64} show that the achievable channel singular values indeed satisfy Corollary~\ref{co:los}, namely $\sigma_1(\mathbf{H}) \ge \sigma_1(\mathbf{T})$, $\sigma_2(\mathbf{T}) \le \sigma_2(\mathbf{H}) \le \sigma_1(\mathbf{T})$, etc.
			It is obvious that \gls{bd}-\gls{ris} can approach those bounds better than \gls{d}-\gls{ris} with a small $N_\mathrm{S}$.
			Another example is given in Fig.~\subref*{fg:singular_bound_rank2_sx128} with rank-2 forward channel.
			The first two channel singular values are unbounded above and bounded below by the first two singular values of $\mathbf{T}$, while the last two singular values can be suppressed to zero and bounded above by the first two singular values of $\mathbf{T}$.
			Those observations align with Proposition~\ref{pp:rd}.
			Finally, Fig.~\subref*{fg:singular_bound_rank4_sx256} confirms there are no extra singular value bounds when both backward and forward channels are full-rank.
			This can be predicted from \eqref{eq:auxiliary_rd} where $\mathbf{V}_\mathrm{F}$ becomes unitary and $\mathbf{T}=\mathbf{0}$.
			The numerical results are consistent with the analytical bounds, and we conclude that the channel shaping advantage of \gls{bd}-\gls{ris} over \gls{d}-\gls{ris} scales with the rank of backward and forward channels.

			\begin{figure}[!t]
				\centering
				\subfloat[$1 \times 256 \times 1$ (no direct)]{
					\label{fg:power_bond_txrx1_nd}
					\resizebox{!}{3.25cm}{
%
%
\definecolor{mycolor1}{rgb}{0.46667,0.67451,0.18824}%
\definecolor{mycolor2}{rgb}{0.92941,0.50588,0.59608}%
\begin{tikzpicture}

\begin{axis}[%
width=9.509cm,
height=7.5cm,
at={(0cm,0cm)},
scale only axis,
bar shift auto,
xmin=0,
xmax=6,
xtick={1,2,3,4,5},
xticklabels={{1},{4},{16},{64},{256}},
xlabel={RIS Group Size},
ymin=0,
ymax=7.17504949306389e-06,
ylabel={Channel Power [W]},
axis background/.style={fill=white},
xmajorgrids,
ymajorgrids,
legend style={at={(0.97,0.03)}, anchor=south east, legend cell align=left, align=left, draw=white!15!black}
]
\addplot[ybar, bar width=0.8, fill=mycolor1, draw=black, area legend] table[row sep=crcr] {%
1	4.05032179628278e-06\\
2	5.75669800225934e-06\\
3	6.32486926088864e-06\\
4	6.48461843519388e-06\\
5	6.52277226642172e-06\\
};
\addplot[forget plot, color=white!15!black] table[row sep=crcr] {%
0	0\\
6	0\\
};
\addlegendentry{Equivalent}

\addplot [color=mycolor2, dashed, line width=2.0pt]
  table[row sep=crcr]{%
0	6.52277226642171e-06\\
6	6.52277226642171e-06\\
};
\addlegendentry{Cascaded}

\end{axis}
\end{tikzpicture}%
					}
				}
				\subfloat[$4 \times 256 \times 4$ (no direct)]{
					\label{fg:power_bond_txrx4_nd}
					\resizebox{!}{3.25cm}{
%
%
\definecolor{mycolor1}{rgb}{0.46667,0.67451,0.18824}%
\definecolor{mycolor2}{rgb}{0.92941,0.50588,0.59608}%
\begin{tikzpicture}

\begin{axis}[%
width=9.509cm,
height=7.5cm,
at={(0cm,0cm)},
scale only axis,
bar shift auto,
xmin=0,
xmax=6,
xtick={1,2,3,4,5},
xticklabels={{1},{4},{16},{64},{256}},
xlabel={RIS Group Size},
ymin=0,
ymax=2.9105133741022e-05,
ylabel={Channel Power [W]},
axis background/.style={fill=white},
xmajorgrids,
ymajorgrids,
legend style={at={(0.97,0.03)}, anchor=south east, legend cell align=left, align=left, draw=white!15!black}
]
\addplot[ybar, bar width=0.8, fill=mycolor1, draw=black, area legend] table[row sep=crcr] {%
1	7.17506228344783e-06\\
2	1.70104957555835e-05\\
3	2.42867157787661e-05\\
4	2.6078917547989e-05\\
5	2.64592124918381e-05\\
};
\addplot[forget plot, color=white!15!black] table[row sep=crcr] {%
0	0\\
6	0\\
};
\addlegendentry{Equivalent}

\addplot [color=mycolor2, dashed, line width=2.0pt]
  table[row sep=crcr]{%
0	2.65527006901199e-05\\
6	2.65527006901199e-05\\
};
\addlegendentry{Cascaded}

\end{axis}
\end{tikzpicture}%
					}
				}
				\caption{
					Average maximum channel power gain versus \gls{bd}-\gls{ris} group size and \gls{mimo} dimensions.
					`Cascaded' refers to the upper bound in \eqref{iq:power_nd}.
				}
				\label{fg:power_bond}
			\end{figure}

			Fig.~\ref{fg:power_bond} compares the analytical bounds on the channel power gain in Corollary~\ref{co:nd_power} and the numerical results obtained by solving problem \eqref{op:power} when the direct channel is negligible.
			Here, a fully-connected \gls{bd}-\gls{ris} can attain the upper bound either in closed form \eqref{eq:ris_nd_power_max} or via optimization approach \eqref{eq:ris_power}.
			For the \gls{siso} case in Fig.~\subref*{fg:power_bond_txrx1_nd}, the maximum channel power gain is approximately \num{4e-6} by \gls{d}-\gls{ris} and \num{6.5e-6} by fully-connected \gls{bd}-\gls{ris}, corresponding to a \qty{62.5}{\percent} gain.
			It comes purely from branch matching as discussed in Example~\ref{eg:siso} and confirms the asymptotic power scaling law derived in \cite[(30)]{Shen2020a}
			Interestingly, Fig.~\subref*{fg:power_bond_txrx4_nd} shows that this relative gain, inferrable from the expectation analysis \eqref{eq:power_nd_rayleigh}, surges to \qty{270}{\percent} in $N_\mathrm{T}=N_\mathrm{R}=4$ \gls{mimo}.
			We thus conclude that the power gain of \gls{bd}-\gls{ris} scales with the group size and \gls{mimo} dimensions.
		\end{subsubsection}
	\end{subsection}

	\begin{subsection}{Achievable Rate Maximization}
		\begin{figure}[!t]
			\centering
			\subfloat[$16 \times N_\mathrm{S} \times 16$ (no direct)]{
				\label{fg:power_sx_txrx16_nd}
				\resizebox{!}{3.15cm}{
%
%
\definecolor{mycolor1}{rgb}{0.00000,0.44700,0.74100}%
\definecolor{mycolor2}{rgb}{0.00000,0.44706,0.74118}%
\definecolor{mycolor3}{rgb}{0.85000,0.32500,0.09800}%
\definecolor{mycolor4}{rgb}{0.85098,0.32549,0.09804}%
\definecolor{mycolor5}{rgb}{0.92900,0.69400,0.12500}%
\definecolor{mycolor6}{rgb}{0.92941,0.69412,0.12549}%
\definecolor{mycolor7}{rgb}{0.49400,0.18400,0.55600}%
\definecolor{mycolor8}{rgb}{0.49412,0.18431,0.55686}%
\definecolor{mycolor9}{rgb}{0.46600,0.67400,0.18800}%
\definecolor{mycolor10}{rgb}{0.46667,0.67451,0.18824}%
\definecolor{mycolor11}{rgb}{0.30100,0.74500,0.93300}%
\definecolor{mycolor12}{rgb}{0.50196,0.50196,0.50196}%
\begin{tikzpicture}

\begin{axis}[%
width=9.509cm,
height=7.5cm,
at={(0cm,0cm)},
scale only axis,
xmin=1,
xmax=9,
xtick={1,2,3,4,5,6,7,8,9},
xticklabels={{$2^0$},{$2^1$},{$2^2$},{$2^3$},{$2^4$},{$2^5$},{$2^6$},{$2^7$},{$2^8$}},
xlabel={RIS Group Size},
ymin=0,
ymax=0.000116439758271769,
ylabel={Channel Power [W]},
axis background/.style={fill=white},
xmajorgrids,
ymajorgrids,
legend style={at={(0.03,0.97)}, anchor=north west, legend cell align=left, align=left, draw=white!15!black}
]
\addplot[only marks, mark=o, mark options={solid}, mark size=2pt, line width=2pt, draw=mycolor2, forget plot] table[row sep=crcr]{%
x	y\\
5	8.18421179472296e-07\\
};
\addplot[only marks, mark=o, mark options={solid}, mark size=2pt, line width=2pt, draw=mycolor4, forget plot] table[row sep=crcr]{%
x	y\\
6	2.47795062804422e-06\\
};
\addplot[only marks, mark=o, mark options={solid}, mark size=2pt, line width=2pt, draw=mycolor6, forget plot] table[row sep=crcr]{%
x	y\\
7	8.20088397806638e-06\\
};
\addplot[only marks, mark=o, mark options={solid}, mark size=2pt, line width=2pt, draw=mycolor8, forget plot] table[row sep=crcr]{%
x	y\\
8	2.94451091720999e-05\\
};
\addplot[only marks, mark=o, mark options={solid}, mark size=2pt, line width=2pt, draw=mycolor10, forget plot] table[row sep=crcr]{%
x	y\\
9	0.000111552731718914\\
};
\addplot[draw=none, only marks, mark=o, mark options={solid}, mark size=2pt, line width=2pt, draw=mycolor12] table[row sep=crcr]{%
x	y\\
0	0\\
};
\addlegendentry{Explicit}

\addplot [color=mycolor2, line width=2.0pt]
  table[row sep=crcr]{%
1	5.66270957115429e-07\\
2	6.32214073230813e-07\\
3	7.11789599032912e-07\\
4	7.83501502399535e-07\\
5	8.1582630412623e-07\\
};
\addlegendentry{$N_\mathrm{S} = 16$}

\addplot [color=mycolor4, dashed, line width=2.0pt]
  table[row sep=crcr]{%
1	1.31853309573688e-06\\
2	1.53662356650387e-06\\
3	1.82774899163227e-06\\
4	2.1412614461017e-06\\
5	2.38070294433103e-06\\
6	2.46743024417076e-06\\
};
\addlegendentry{$N_\mathrm{S} = 32$}

\addplot [color=mycolor6, dotted, line width=2.0pt]
  table[row sep=crcr]{%
1	3.16991712195604e-06\\
2	3.88139289638628e-06\\
3	4.87508772014936e-06\\
4	6.11547438007159e-06\\
5	7.2872459799269e-06\\
6	7.95147687920036e-06\\
7	8.1598841971569e-06\\
};
\addlegendentry{$N_\mathrm{S} = 64$}

\addplot [color=mycolor8, dashdotted, line width=2.0pt]
  table[row sep=crcr]{%
1	8.08840132190028e-06\\
2	1.03513925278208e-05\\
3	1.38370561575957e-05\\
4	1.85667340754753e-05\\
5	2.37528268575109e-05\\
6	2.72839029873604e-05\\
7	2.87929093680483e-05\\
8	2.92781883450922e-05\\
};
\addlegendentry{$N_\mathrm{S} = 128$}

\addplot [color=mycolor10, line width=2.0pt]
  table[row sep=crcr]{%
1	2.18463432534312e-05\\
2	2.96551238699234e-05\\
3	4.20172201243997e-05\\
4	5.99730466582087e-05\\
5	8.20696998678401e-05\\
6	9.8504151463339e-05\\
7	0.00010650814145777\\
8	0.000109789012745988\\
9	0.000110895007877875\\
};
\addlegendentry{$N_\mathrm{S} = 256$}

\end{axis}
\end{tikzpicture}%
				}
			}
			\subfloat[$16 \times N_\mathrm{S} \times 16$]{
				\label{fg:power_sx_txrx16}
				\resizebox{!}{3.15cm}{
%
%
\definecolor{mycolor1}{rgb}{0.85000,0.32500,0.09800}%
\definecolor{mycolor2}{rgb}{0.00000,0.44706,0.74118}%
\definecolor{mycolor3}{rgb}{0.92900,0.69400,0.12500}%
\definecolor{mycolor4}{rgb}{0.49400,0.18400,0.55600}%
\definecolor{mycolor5}{rgb}{0.85098,0.32549,0.09804}%
\definecolor{mycolor6}{rgb}{0.46600,0.67400,0.18800}%
\definecolor{mycolor7}{rgb}{0.30100,0.74500,0.93300}%
\definecolor{mycolor8}{rgb}{0.92941,0.69412,0.12549}%
\definecolor{mycolor9}{rgb}{0.63500,0.07800,0.18400}%
\definecolor{mycolor10}{rgb}{0.00000,0.44700,0.74100}%
\definecolor{mycolor11}{rgb}{0.49412,0.18431,0.55686}%
\definecolor{mycolor12}{rgb}{0.46667,0.67451,0.18824}%
\definecolor{mycolor13}{rgb}{0.50196,0.50196,0.50196}%
\begin{tikzpicture}

\begin{axis}[%
width=9.509cm,
height=7.5cm,
at={(0cm,0cm)},
scale only axis,
xmin=1,
xmax=9,
xtick={1,2,3,4,5,6,7,8,9},
xticklabels={{$2^0$},{$2^1$},{$2^2$},{$2^3$},{$2^4$},{$2^5$},{$2^6$},{$2^7$},{$2^8$}},
xlabel={RIS Group Size},
ymode=log,
ymin=7.71952537202106e-05,
ymax=0.000358145391898644,
yminorticks=true,
ylabel={Channel Power [W]},
axis background/.style={fill=white},
xmajorgrids,
ymajorgrids,
yminorgrids,
legend style={at={(0.03,0.97)}, anchor=north west, legend cell align=left, align=left, draw=white!15!black}
]
\addplot [color=black, line width=2.0pt]
  table[row sep=crcr]{%
1	8.12581618107481e-05\\
2	8.12581618107481e-05\\
3	8.12581618107481e-05\\
4	8.12581618107481e-05\\
5	8.12581618107481e-05\\
6	8.12581618107481e-05\\
7	8.12581618107481e-05\\
8	8.12581618107481e-05\\
9	8.12581618107481e-05\\
};
\addlegendentry{No RIS}

\addplot[only marks, mark=triangle, mark options={rotate=90}, mark size=2pt, line width=2pt, draw=mycolor2, forget plot] table[row sep=crcr]{%
x	y\\
5	8.74153313091917e-05\\
};
\addplot[only marks, mark=triangle, mark options={rotate=270}, mark size=2pt, line width=2pt, draw=mycolor2, forget plot] table[row sep=crcr]{%
x	y\\
5	8.74291033971713e-05\\
};
\addplot[only marks, mark=triangle, mark options={rotate=90}, mark size=2pt, line width=2pt, draw=mycolor5, forget plot] table[row sep=crcr]{%
x	y\\
6	9.91843205545709e-05\\
};
\addplot[only marks, mark=triangle, mark options={rotate=270}, mark size=2pt, line width=2pt, draw=mycolor5, forget plot] table[row sep=crcr]{%
x	y\\
6	9.91780210311253e-05\\
};
\addplot[only marks, mark=triangle, mark options={rotate=90}, mark size=2pt, line width=2pt, draw=mycolor8, forget plot] table[row sep=crcr]{%
x	y\\
7	0.00012405412223482\\
};
\addplot[only marks, mark=triangle, mark options={rotate=270}, mark size=2pt, line width=2pt, draw=mycolor8, forget plot] table[row sep=crcr]{%
x	y\\
7	0.000124044149009442\\
};
\addplot[only marks, mark=triangle, mark options={rotate=90}, mark size=2pt, line width=2pt, draw=mycolor11, forget plot] table[row sep=crcr]{%
x	y\\
8	0.000182844868229739\\
};
\addplot[only marks, mark=triangle, mark options={rotate=270}, mark size=2pt, line width=2pt, draw=mycolor11, forget plot] table[row sep=crcr]{%
x	y\\
8	0.000182852647982853\\
};
\addplot[only marks, mark=triangle, mark options={rotate=90}, mark size=2pt, line width=2pt, draw=mycolor12, forget plot] table[row sep=crcr]{%
x	y\\
9	0.000339584399340245\\
};
\addplot[only marks, mark=triangle, mark options={rotate=270}, mark size=2pt, line width=2pt, draw=mycolor12, forget plot] table[row sep=crcr]{%
x	y\\
9	0.000339587956470715\\
};

\addplot[draw=none, only marks, mark=triangle, mark options={rotate=90}, mark size=2pt, line width=2pt, draw=mycolor13] table[row sep=crcr]{%
x	y\\
1	1\\
};
\addlegendentry{OP-left}

\addplot[draw=none, only marks, mark=triangle, mark options={rotate=270}, mark size=2pt, line width=2pt, draw=mycolor13] table[row sep=crcr]{%
x	y\\
1	1\\
};
\addlegendentry{OP-right}

\addplot [color=mycolor2, line width=2.0pt]
  table[row sep=crcr]{%
1	8.41788427928869e-05\\
2	8.50641788557691e-05\\
3	8.63696852707132e-05\\
4	8.79903454326538e-05\\
5	8.96339735645668e-05\\
};
\addlegendentry{$N_\mathrm{S} = 16$}

\addplot [color=mycolor5, dashed, line width=2.0pt]
  table[row sep=crcr]{%
1	8.72783634122255e-05\\
2	8.91184079692796e-05\\
3	9.17640563659579e-05\\
4	9.50977289249081e-05\\
5	9.85503564245316e-05\\
6	0.00010076742152366\\
};
\addlegendentry{$N_\mathrm{S} = 32$}

\addplot [color=mycolor8, dotted, line width=2.0pt]
  table[row sep=crcr]{%
1	9.36120443845964e-05\\
2	9.75056089045287e-05\\
3	0.000102998044584118\\
4	0.000110120638452998\\
5	0.0001177772588008\\
6	0.000122854470762259\\
7	0.000125497666976946\\
};
\addlegendentry{$N_\mathrm{S} = 64$}

\addplot [color=mycolor11, dashdotted, line width=2.0pt]
  table[row sep=crcr]{%
1	0.000106289915229579\\
2	0.000114680256721358\\
3	0.000126826397397359\\
4	0.000142994695102657\\
5	0.00016107230730003\\
6	0.000173889895707073\\
7	0.000180752926606848\\
8	0.000184267352458571\\
};
\addlegendentry{$N_\mathrm{S} = 128$}

\addplot [color=mycolor12, line width=2.0pt]
  table[row sep=crcr]{%
1	0.000134334856638143\\
2	0.000153364621257308\\
3	0.000182125698990213\\
4	0.000221951821498391\\
5	0.000269865717893532\\
6	0.000305861046930524\\
7	0.00032571637890533\\
8	0.000335868097136638\\
9	0.00034109084942728\\
};
\addlegendentry{$N_\mathrm{S} = 256$}

\end{axis}

\end{tikzpicture}%
				}
			}
			\caption{
				Average maximum channel power gain versus \gls{ris} configuration.
				`Explicit' refers to the optimal solution \eqref{eq:ris_nd_power_max} when the direct channel is negligible.
				`OP-left' and `OP-right' refer to the suboptimal solutions, when the direct channel is significant, by lossy transformation \eqref{op:procrustes} where $\mathbf{\Theta}$ is to the left and right of the product, respectively.
			}
			\label{fg:power_sx}
		\end{figure}

		We first focus on the channel power gain problem \eqref{op:power}.
		Fig.~\ref{fg:power_sx} shows the maximum channel power gain under different \gls{ris} configurations.
		An interesting observation is that the relative power gain of \gls{bd}-\gls{ris} over \gls{d}-\gls{ris} is even larger when the direct channel is significant.
		As shown in Figs.~\subref*{fg:power_sx_txrx16_nd} and \subref*{fg:power_sx_txrx16}, a 64-element fully \gls{bd}-\gls{ris} can almost provide the same channel power gain as a 256-element \gls{d}-\gls{ris} when the direct channel is significant, but less so when it is negligible.
		This is because the mode alignment advantage of \gls{bd}-\gls{ris} becomes more pronounced when the modes of direct channel is taken into account.
		We also notice that the suboptimal solutions \eqref{eq:ris_procrustes} for fully-connected \gls{bd}-\gls{ris} by lossy transformation \eqref{op:procrustes} are very close to optimal especially for a large $N_\mathrm{S}$.

		\begin{figure}[!t]
			\centering
			\subfloat[$4 \times 128 \times 4$]{
				\label{fg:rate_beamforming}
				\resizebox{!}{3.25cm}{
%
%
\definecolor{mycolor1}{rgb}{0.00000,0.44706,0.74118}%
\definecolor{mycolor2}{rgb}{0.85098,0.32549,0.09804}%
\definecolor{mycolor3}{rgb}{0.92941,0.69412,0.12549}%
\begin{tikzpicture}

\begin{axis}[%
width=9.509cm,
height=7.5cm,
at={(0cm,0cm)},
scale only axis,
xmin=-20,
xmax=20,
xlabel={Transmit Power [dB]},
ymin=0,
ymax=50,
ylabel={Achievable Rate [bit/s/Hz]},
axis background/.style={fill=white},
xmajorgrids,
ymajorgrids,
legend style={at={(0.03,0.97)}, anchor=north west, legend cell align=left, align=left, draw=white!15!black}
]
\addplot [color=black, line width=2.0pt]
  table[row sep=crcr]{%
-20	0.966447889650251\\
-15	2.13054139165545\\
-10	4.18094583776722\\
-5	7.28839897274867\\
0	11.4412630405996\\
5	16.4565340424558\\
10	22.1955896885685\\
15	28.4236694030567\\
20	34.8930026415245\\
};
\addlegendentry{No RIS}

\addplot [color=mycolor1, line width=2.0pt, mark=o, mark options={solid, mycolor1}]
  table[row sep=crcr]{%
-20	1.78458164361096\\
-15	3.561866909033\\
-10	6.41094164720174\\
-5	10.6060494439492\\
0	15.8685341243575\\
5	22.3729979194697\\
10	29.1590147839292\\
15	35.8614481865284\\
20	42.5460340979809\\
};
\addlegendentry{Alternate: $L = 1$}

\addplot [color=mycolor1, dashed, line width=2.0pt, mark=o, mark options={solid, mycolor1}]
  table[row sep=crcr]{%
-20	1.68142608264932\\
-15	3.47229181555908\\
-10	6.32764150769308\\
-5	10.2843682032806\\
0	15.1638664406851\\
5	20.8059527875408\\
10	26.9552004673932\\
15	33.3756719661617\\
20	39.9334611692689\\
};
\addlegendentry{Decouple: $L = 1$}

\addplot [color=mycolor2, line width=2.0pt, mark=+, mark options={solid, mycolor2}]
  table[row sep=crcr]{%
-20	2.04085797293085\\
-15	4.23025423333992\\
-10	7.66035321117132\\
-5	12.3873778126605\\
0	18.3041387392747\\
5	25.1416115903226\\
10	32.1208560143212\\
15	38.8917948319834\\
20	45.5317843299722\\
};
\addlegendentry{Alternate: $L = 4$}

\addplot [color=mycolor2, dashed, line width=2.0pt, mark=+, mark options={solid, mycolor2}]
  table[row sep=crcr]{%
-20	2.02933277886478\\
-15	4.22295410662709\\
-10	7.76659807581125\\
-5	12.7308814491693\\
0	18.701625254284\\
5	25.1136038999092\\
10	31.6819574777726\\
15	38.3017094469001\\
20	44.9379200377717\\
};
\addlegendentry{Decouple: $L = 4$}

\addplot [color=mycolor3, line width=2.0pt, mark=square, mark options={solid, mycolor3}]
  table[row sep=crcr]{%
-20	2.28407687242564\\
-15	4.81826080186169\\
-10	9.03773074941209\\
-5	14.6404440346335\\
0	20.9116094666672\\
5	27.4324929209647\\
10	34.0375775236663\\
15	40.668902236646\\
20	47.3087929158649\\
};
\addlegendentry{Alternate: $L = 128$}

\addplot [color=mycolor3, dashed, line width=2.0pt, mark=square, mark options={solid, mycolor3}]
  table[row sep=crcr]{%
-20	2.28308476166498\\
-15	4.81938588993846\\
-10	9.03056751903296\\
-5	14.6240066357304\\
0	20.8908816734224\\
5	27.4101252589219\\
10	34.0139992275384\\
15	40.6451530688563\\
20	47.2849864973432\\
};
\addlegendentry{Decouple: $L = 128$}

\end{axis}
\end{tikzpicture}%
				}
			}
			\subfloat[$N_\mathrm{T} \times 128 \times N_\mathrm{R}$]{
				\label{fg:rate_txrx}
				\resizebox{!}{3.25cm}{
%
%
\definecolor{mycolor1}{rgb}{0.00000,0.44706,0.74118}%
\definecolor{mycolor2}{rgb}{0.85098,0.32549,0.09804}%
\definecolor{mycolor3}{rgb}{0.92941,0.69412,0.12549}%
\begin{tikzpicture}

\begin{axis}[%
width=9.509cm,
height=7.5cm,
at={(0cm,0cm)},
scale only axis,
xmin=-20,
xmax=20,
xlabel={Transmit Power [dB]},
ymin=0,
ymax=180,
ylabel={Achievable Rate [bit/s/Hz]},
axis background/.style={fill=white},
xmajorgrids,
ymajorgrids,
legend style={at={(0.03,0.97)}, anchor=north west, legend cell align=left, align=left, draw=white!15!black}
]
\addplot [color=mycolor1, line width=2.0pt, mark=o, mark options={solid, mycolor1}]
  table[row sep=crcr]{%
-20	0.80645914151727\\
-15	1.73412802108339\\
-10	3.04800729466691\\
-5	4.57707843574454\\
0	6.19336035613507\\
5	7.83986275409894\\
10	9.49621942970358\\
15	11.1557232230498\\
20	12.8162253009448\\
};
\addlegendentry{D: $N_\mathrm{T}=N_\mathrm{R} = 1$}

\addplot [color=mycolor1, dashed, line width=2.0pt, mark=o, mark options={solid, mycolor1}]
  table[row sep=crcr]{%
-20	1.02738485492902\\
-15	2.08553552813062\\
-10	3.48388115902665\\
-5	5.04985209432047\\
0	6.67930178784317\\
5	8.33014111798399\\
10	9.98788714882126\\
15	11.6478319270646\\
20	13.3084734885675\\
};
\addlegendentry{BD: $N_\mathrm{T}=N_\mathrm{R} = 1$}

\addplot [color=mycolor2, line width=2.0pt, mark=+, mark options={solid, mycolor2}]
  table[row sep=crcr]{%
-20	1.80216712847512\\
-15	3.61095912930842\\
-10	6.46727118154321\\
-5	10.6728381717045\\
0	15.8663694653186\\
5	22.3025811138217\\
10	29.0071824375286\\
15	35.8429220927395\\
20	42.5257892938348\\
};
\addlegendentry{D: $N_\mathrm{T}=N_\mathrm{R} = 4$}

\addplot [color=mycolor2, dashed, line width=2.0pt, mark=+, mark options={solid, mycolor2}]
  table[row sep=crcr]{%
-20	2.25658747812802\\
-15	4.74132020552887\\
-10	8.90509622893698\\
-5	14.6240914814496\\
0	20.8875622193268\\
5	27.4056906119124\\
10	34.0092394399095\\
15	40.6403427407311\\
20	47.2808661774305\\
};
\addlegendentry{BD: $N_\mathrm{T}=N_\mathrm{R} = 4$}

\addplot [color=mycolor3, line width=2.0pt, mark=square, mark options={solid, mycolor3}]
  table[row sep=crcr]{%
-20	5.00809118408572\\
-15	10.4621038645415\\
-10	19.7374730744499\\
-5	33.550570634643\\
0	51.8679825318363\\
5	73.8730754193531\\
10	98.4593995342972\\
15	124.330060387556\\
20	151.362123996259\\
};
\addlegendentry{D: $N_\mathrm{T}=N_\mathrm{R} = 16$}

\addplot [color=mycolor3, dashed, line width=2.0pt, mark=square, mark options={solid, mycolor3}]
  table[row sep=crcr]{%
-20	6.35841719175107\\
-15	13.4657499296479\\
-10	25.5048236983859\\
-5	43.2319146419582\\
0	66.1428004903052\\
5	91.9243453059882\\
10	118.034135979311\\
15	144.396858442714\\
20	171.015174863974\\
};
\addlegendentry{BD: $N_\mathrm{T}=N_\mathrm{R} = 16$}

\end{axis}
\end{tikzpicture}%
				}
			}
			\\
			\subfloat[$4 \times N_\mathrm{S} \times 4$]{
				\label{fg:rate_sx}
				\resizebox{!}{3.25cm}{
%
%
\definecolor{mycolor1}{rgb}{0.00000,0.44706,0.74118}%
\definecolor{mycolor2}{rgb}{0.85098,0.32549,0.09804}%
\definecolor{mycolor3}{rgb}{0.92941,0.69412,0.12549}%
\begin{tikzpicture}

\begin{axis}[%
width=9.509cm,
height=7.5cm,
at={(0cm,0cm)},
scale only axis,
xmin=-20,
xmax=20,
xlabel={Transmit Power [dB]},
ymin=0,
ymax=60,
ylabel={Achievable Rate [bit/s/Hz]},
axis background/.style={fill=white},
xmajorgrids,
ymajorgrids,
legend style={at={(0.03,0.97)}, anchor=north west, legend cell align=left, align=left, draw=white!15!black}
]
\addplot [color=black, line width=2.0pt]
  table[row sep=crcr]{%
-20	0.974924221815307\\
-15	2.155737929667\\
-10	4.22511615579774\\
-5	7.37578165885651\\
0	11.5659715565206\\
5	16.5899473057444\\
10	22.348709082376\\
15	28.5788877981851\\
20	35.0376846379771\\
};
\addlegendentry{No RIS}

\addplot [color=mycolor1, line width=2.0pt, mark=o, mark options={solid, mycolor1}]
  table[row sep=crcr]{%
-20	1.0775951909933\\
-15	2.33797192421853\\
-10	4.5244691403991\\
-5	7.83325616187767\\
0	12.2043252106765\\
5	17.502783862643\\
10	23.5906958194149\\
15	30.0475873965181\\
20	36.6178405883567\\
};
\addlegendentry{D: $N_\mathrm{S} = 16$}

\addplot [color=mycolor1, dashed, line width=2.0pt, mark=o, mark options={solid, mycolor1}]
  table[row sep=crcr]{%
-20	1.10923579115774\\
-15	2.442709029227\\
-10	4.74476500230175\\
-5	8.23321469591226\\
0	12.8056756289994\\
5	18.4128594155627\\
10	24.6430285568875\\
15	31.1476882752096\\
20	37.7327815763973\\
};
\addlegendentry{BD: $N_\mathrm{S} = 16$}

\addplot [color=mycolor2, line width=2.0pt, mark=+, mark options={solid, mycolor2}]
  table[row sep=crcr]{%
-20	1.38911856962193\\
-15	2.88770421854911\\
-10	5.39989878022609\\
-5	9.15288473496818\\
0	13.9621829772168\\
5	19.9297422219992\\
10	26.4702320328737\\
15	33.1585838779561\\
20	39.78349389952\\
};
\addlegendentry{D: $N_\mathrm{S} = 64$}

\addplot [color=mycolor2, dashed, line width=2.0pt, mark=+, mark options={solid, mycolor2}]
  table[row sep=crcr]{%
-20	1.5620206135513\\
-15	3.37793863109278\\
-10	6.49562082648052\\
-5	11.0483565427754\\
0	16.8960888847718\\
5	23.2423074308869\\
10	29.7886558296274\\
15	36.4013262364968\\
20	43.0352862870955\\
};
\addlegendentry{BD: $N_\mathrm{S} = 64$}

\addplot [color=mycolor3, line width=2.0pt, mark=square, mark options={solid, mycolor3}]
  table[row sep=crcr]{%
-20	2.54346268106081\\
-15	4.91381388788593\\
-10	8.42373765644029\\
-5	13.3216188303469\\
0	19.1393982680901\\
5	26.3802267396723\\
10	33.2591092328213\\
15	39.9877368645945\\
20	46.6278113854872\\
};
\addlegendentry{D: $N_\mathrm{S} = 256$}

\addplot [color=mycolor3, dashed, line width=2.0pt, mark=square, mark options={solid, mycolor3}]
  table[row sep=crcr]{%
-20	3.8102088263969\\
-15	7.91422759898922\\
-10	13.5945990292298\\
-5	19.8260573275662\\
0	26.3330631578426\\
5	32.9330300712558\\
10	39.5629517769786\\
15	46.2024228690652\\
20	52.8449051313244\\
};
\addlegendentry{BD: $N_\mathrm{S} = 256$}

\end{axis}
\end{tikzpicture}%
				}
			}
			\subfloat[$4 \times 128 \times 4$]{
				\label{fg:rate_kfactor}
				\resizebox{!}{3.25cm}{
%
%
\definecolor{mycolor1}{rgb}{0.00000,0.44706,0.74118}%
\definecolor{mycolor2}{rgb}{0.85098,0.32549,0.09804}%
\definecolor{mycolor3}{rgb}{0.92941,0.69412,0.12549}%
\begin{tikzpicture}

\begin{axis}[%
width=9.509cm,
height=7.5cm,
at={(0cm,0cm)},
scale only axis,
xmin=-20,
xmax=20,
xlabel={Transmit Power [dB]},
ymin=0,
ymax=50,
ylabel={Achievable Rate [bit/s/Hz]},
axis background/.style={fill=white},
xmajorgrids,
ymajorgrids,
legend style={at={(0.03,0.97)}, anchor=north west, legend cell align=left, align=left, draw=white!15!black}
]
\addplot [color=mycolor1, line width=2.0pt, mark=o, mark options={solid, mycolor1}]
  table[row sep=crcr]{%
-20	1.80821876007788\\
-15	3.63754065418823\\
-10	6.52397177287891\\
-5	10.7375410087883\\
0	16.0801140220357\\
5	22.5773384000589\\
10	29.3334425394153\\
15	36.0021963018504\\
20	42.6917330184181\\
};
\addlegendentry{D: $\kappa_\mathrm{F} = \kappa_\mathrm{B} = 0$}

\addplot [color=mycolor1, dashed, line width=2.0pt, mark=o, mark options={solid, mycolor1}]
  table[row sep=crcr]{%
-20	2.26507172145531\\
-15	4.75770801019453\\
-10	8.99003046555732\\
-5	14.7118193453058\\
0	20.9839343618143\\
5	27.5050240775648\\
10	34.1095125803414\\
15	40.7408675077114\\
20	47.3807680556974\\
};
\addlegendentry{BD: $\kappa_\mathrm{F} = \kappa_\mathrm{B} = 0$}

\addplot [color=mycolor2, line width=2.0pt, mark=+, mark options={solid, mycolor2}]
  table[row sep=crcr]{%
-20	3.35042591902828\\
-15	5.13668237212932\\
-10	7.72882256094737\\
-5	11.3034247828302\\
0	15.8185759560096\\
5	21.0986699036873\\
10	27.1640911636977\\
15	33.6386180785618\\
20	40.2353537199864\\
};
\addlegendentry{D: $\kappa_\mathrm{F} = \kappa_\mathrm{B} = 10$}

\addplot [color=mycolor2, dashed, line width=2.0pt, mark=+, mark options={solid, mycolor2}]
  table[row sep=crcr]{%
-20	3.38008248429671\\
-15	5.24886942440445\\
-10	7.98495739756983\\
-5	11.7904129482671\\
0	16.5603601097506\\
5	22.2885132742307\\
10	28.6366348668615\\
15	35.1975720349819\\
20	41.8146872560824\\
};
\addlegendentry{BD: $\kappa_\mathrm{F} = \kappa_\mathrm{B} = 10$}

\addplot [color=mycolor3, line width=2.0pt, mark=square, mark options={solid, mycolor3}]
  table[row sep=crcr]{%
-20	3.5341539120381\\
-15	5.32663027748093\\
-10	7.91783240648881\\
-5	11.5025021870777\\
0	16.0242740767472\\
5	21.3232680679002\\
10	27.2833878226791\\
15	33.6787670451286\\
20	40.2329094885052\\
};
\addlegendentry{{D: $\kappa_\mathrm{F} = \kappa_\mathrm{B} \to \infty$}}

\addplot [color=mycolor3, dashed, line width=2.0pt, mark=square, mark options={solid, mycolor3}]
  table[row sep=crcr]{%
-20	3.53634968366243\\
-15	5.33057914284338\\
-10	7.92426624828917\\
-5	11.5073024887957\\
0	16.0274673892275\\
5	21.3283494881659\\
10	27.3020235801385\\
15	33.6812074348956\\
20	40.2364830785231\\
};
\addlegendentry{{BD: $\kappa_\mathrm{F} = \kappa_\mathrm{B} \to \infty$}}

\end{axis}
\end{tikzpicture}%
				}
			}
			\caption{
				Average achievable rate versus \gls{mimo} and \gls{ris} configurations.
				The transmit power corresponds to a direct \gls{snr} of \num{-10} to \qty{30}{dB}.
				`Alternate' refers to the alternating optimization and `Decouple' refers to the low-complexity design.
				`D' means \gls{d}-\gls{ris} and `BD' refers to fully-connected \gls{bd}-\gls{ris}.
			}
			\label{fg:rate}
		\end{figure}

		Fig.~\ref{fg:rate} presents the achievable rate under different \gls{mimo} and \gls{ris} configurations.
		At a transmit power $P = \qty{10}{dB}$, Fig.~\subref*{fg:rate_beamforming} shows that introducing a 128-element \gls{d}-\gls{ris} to $N_\mathrm{T}=N_\mathrm{R}=4$ \gls{mimo} can improve the achievable rate from \qty{22.2}{bps/Hz} to \qty{29.2}{bps/Hz} ($+\qty{31.5}{\percent}$).
		A \gls{bd}-\gls{ris} of group size 4 and 128 can further elevate those to \qty{32.1}{bps/Hz} ($+\qty{44.6}{\percent}$) and \qty{34}{bps/Hz}  ($+\qty{53.2}{\percent}$), respectively.
		An interesting observation is that the rate gap between the optimal \gls{ao} approach in Section~\ref{sc:rate_ao} and the low-complexity shaping solution in Section~\ref{sc:rate_lc} narrows as group size $L$ increases and completely vanishes for a fully-connected \gls{bd}-\gls{ris}.
		This implies that joint beamforming designs may be decoupled with minimal performance degradation by first shaping the wireless channel and then optimizing the transceiver, which substantially simplifies the design.
		Figs.~\subref*{fg:rate_txrx} and \subref*{fg:rate_sx} also show that both \emph{absolute and relative} rate gains of \gls{bd}-\gls{ris} over \gls{d}-\gls{ris}
		increases with the number of transmit and receive antennas and scattering elements, especially at high \gls{snr}.
		For $N_\mathrm{S}=128$ and $P = \qty{20}{dB}$,
		the achievable rate ratio of \gls{bd}-\gls{ris} over \gls{d}-\gls{ris} is \num{1.04}, \num{1.11}, and \num{1.13} for $N_\mathrm{T}=N_\mathrm{R}=1$, \num{4}, and \num{16}, respectively.
		For $N_\mathrm{T}=N_\mathrm{R}=4$ and $P = \qty{20}{dB}$, this ratio amounts to \num{1.03}, \num{1.08}, and \num{1.13} for $N_\mathrm{S}=16$, \num{64}, and \num{256}, respectively.
		Those observations align with the power gain results in Fig.~\ref{fg:power_sx} and highlight the rate benefits of \gls{bd}-\gls{ris} over \gls{d}-\gls{ris} in large-scale \gls{mimo} systems.
		In the low power regime (\num{-20} to \qty{-10}{dB}), we also notice that the slope of the achievable rate of \gls{bd}-\gls{ris} is steeper than that of \gls{d}-\gls{ris}.
		That is, \gls{bd}-\gls{ris} can help to activate more streams and achieve the asymptotic \gls{dof} at a low transmit \gls{snr}.
		This is particularly visible in Fig.~\subref*{fg:rate_sx} where the topmost curve is almost a linear function of the transmit power.
		It can be predicted from Fig.~\ref{fg:singular_pareto} that \gls{bd}-\gls{ris} can significantly enlarge all channel singular values for higher receive \gls{snr}.
		Finally, Fig.~\subref*{fg:rate_kfactor} shows that the gap between \gls{d}- and \gls{bd}-\gls{ris} narrows as the Rician K-factor increases and becomes indistinguishable in \gls{los} environment.
		The observation is expected from previous studies \cite{Shen2020a,Nerini2023} and aligns with Corollary~\ref{co:los}, which suggests that the \gls{bd}-\gls{ris} should be deployed in rich-scattering environments to exploit its channel shaping potential.

	\end{subsection}

	\begin{subsection}{Practical Constraints}
		\begin{subsubsection}{RIS Symmetry}
			\label{sc:ris_symmetry}
			\begin{figure}[!t]
				\centering
				\subfloat[Power: $2 \times N_\mathrm{S} \times 2$]{
					\label{fg:pc_power_symmetry}
					\resizebox{!}{3.15cm}{
%
%
\definecolor{mycolor1}{rgb}{0.00000,0.44706,0.74118}%
\definecolor{mycolor2}{rgb}{0.85098,0.32549,0.09804}%
\begin{tikzpicture}

\begin{axis}[%
width=9.509cm,
height=7.5cm,
at={(0cm,0cm)},
scale only axis,
xmin=1,
xmax=7,
xtick={1,2,3,4,5,6,7},
xticklabels={{$2^0$},{$2^1$},{$2^2$},{$2^3$},{$2^4$},{$2^5$},{$2^6$}},
xlabel={RIS Group Size},
ymin=0,
ymax=9e-07,
ylabel={Channel Power [W]},
axis background/.style={fill=white},
xmajorgrids,
ymajorgrids,
legend style={at={(0.48,0.8)}, anchor=north west, legend cell align=left, align=left, draw=white!15!black}
]
\addplot [color=mycolor1, line width=2.0pt, mark=o, mark options={solid, mycolor1}]
  table[row sep=crcr]{%
1	3.07226159314351e-08\\
2	4.11492399332381e-08\\
3	4.88148530601955e-08\\
4	5.26094766034717e-08\\
5	5.43186358023285e-08\\
};
\addlegendentry{Asymmetric: $N_\mathrm{S} = 16$}

\addplot [color=mycolor1, dashed, line width=2.0pt, mark=o, mark options={solid, mycolor1}]
  table[row sep=crcr]{%
1	3.07226159314351e-08\\
2	2.48110921829919e-08\\
3	2.09999716084868e-08\\
4	1.85392359072275e-08\\
5	1.69392170352418e-08\\
};
\addlegendentry{Enforced: $N_\mathrm{S} = 16$}

\addplot [color=mycolor1, dotted, line width=2.0pt, mark=o, mark options={solid, mycolor1}]
  table[row sep=crcr]{%
1	2.18823167401622e-08\\
2	2.58667252589526e-08\\
3	3.6393122841988e-08\\
4	4.41807731672663e-08\\
5	4.8483723780213e-08\\
};
\addlegendentry{Legacy: $N_\mathrm{S} = 16$}

\addplot [color=mycolor1, dashdotted, line width=2.0pt, mark=o, mark options={solid, mycolor1}]
  table[row sep=crcr]{%
1	3.07226159314351e-08\\
2	3.78824565046636e-08\\
3	4.6008649901077e-08\\
4	5.13929112888468e-08\\
5	5.39731769119744e-08\\
};
\addlegendentry{Takagi: $N_\mathrm{S} = 16$}

\addplot [color=mycolor2, line width=2.0pt, mark=+, mark options={solid, mycolor2}]
  table[row sep=crcr]{%
1	3.73674394477017e-07\\
2	5.420346500246e-07\\
3	6.85389992643681e-07\\
4	7.6227451113248e-07\\
5	8.0069480829595e-07\\
6	8.18206267524393e-07\\
7	8.25088677505729e-07\\
};
\addlegendentry{Asymmetric: $N_\mathrm{S} = 64$}

\addplot [color=mycolor2, dashed, line width=2.0pt, mark=+, mark options={solid, mycolor2}]
  table[row sep=crcr]{%
1	3.73674394477017e-07\\
2	3.06957520044884e-07\\
3	2.68618316241873e-07\\
4	2.45748953801729e-07\\
5	2.27807576071471e-07\\
6	2.22821968007804e-07\\
7	2.16970499497674e-07\\
};
\addlegendentry{Enforced: $N_\mathrm{S} = 64$}

\addplot [color=mycolor2, dotted, line width=2.0pt, mark=+, mark options={solid, mycolor2}]
  table[row sep=crcr]{%
1	3.02768340133104e-07\\
2	3.67874038919462e-07\\
3	5.47929160259681e-07\\
4	6.80081694767098e-07\\
5	7.48298918253076e-07\\
6	7.85561233493806e-07\\
7	8.04051669342744e-07\\
};
\addlegendentry{Legacy: $N_\mathrm{S} = 64$}

\addplot [color=mycolor2, dashdotted, line width=2.0pt, mark=+, mark options={solid, mycolor2}]
  table[row sep=crcr]{%
1	3.73674394477017e-07\\
2	4.73667276858297e-07\\
3	6.21291346950978e-07\\
4	7.28822615730252e-07\\
5	7.83509460162992e-07\\
6	8.10199796739281e-07\\
7	8.22445224822532e-07\\
};
\addlegendentry{Takagi: $N_\mathrm{S} = 64$}

\end{axis}
\end{tikzpicture}%
					}
				}
				\subfloat[Rate: $4 \times N_\mathrm{S} \times 4$]{
					\label{fg:pc_rate_symmetry}
					\resizebox{!}{3.15cm}{
%
%
\definecolor{mycolor1}{rgb}{0.00000,0.44706,0.74118}%
\definecolor{mycolor2}{rgb}{0.85098,0.32549,0.09804}%
\begin{tikzpicture}

\begin{axis}[%
width=9.509cm,
height=7.5cm,
at={(0cm,0cm)},
scale only axis,
xmin=-20,
xmax=20,
xlabel={Transmit Power [dB]},
ymin=0,
ymax=60,
ylabel={Achievable Rate [bit/s/Hz]},
axis background/.style={fill=white},
xmajorgrids,
ymajorgrids,
legend style={at={(0.03,0.97)}, anchor=north west, legend cell align=left, align=left, draw=white!15!black}
]
\addplot [color=black, line width=2.0pt]
  table[row sep=crcr]{%
-20	0.980890669970127\\
-15	2.17631704465586\\
-10	4.26602616349385\\
-5	7.40450506397734\\
0	11.5417650232559\\
5	16.5267353801558\\
10	22.2389737269901\\
15	28.45808707201\\
20	34.9262681793615\\
};
\addlegendentry{No RIS}

\addplot [color=mycolor1, line width=2.0pt, mark=o, mark options={solid, mycolor1}]
  table[row sep=crcr]{%
-20	1.12063253355975\\
-15	2.46573489743792\\
-10	4.77754000084507\\
-5	8.23988306278143\\
0	12.742009858353\\
5	18.2478590438694\\
10	24.5238055280933\\
15	31.066076634542\\
20	37.6934502155395\\
};
\addlegendentry{Asymmetric BD: $N_\mathrm{S} = 16$}

\addplot [color=mycolor1, dashed, line width=2.0pt, mark=o, mark options={solid, mycolor1}]
  table[row sep=crcr]{%
-20	1.05580750715398\\
-15	2.32934800427963\\
-10	4.53857611475026\\
-5	7.85161754589999\\
0	12.1875414322358\\
5	17.4725755956921\\
10	23.5215773006871\\
15	29.9576641834048\\
20	36.545658607323\\
};
\addlegendentry{Enforced BD: $N_\mathrm{S} = 16$}

\addplot [color=mycolor1, dotted, line width=2.0pt, mark=o, mark options={solid, mycolor1}]
  table[row sep=crcr]{%
-20	1.05604829532615\\
-15	2.32528757764103\\
-10	4.51652360379757\\
-5	7.81092268694229\\
0	12.1322770314351\\
5	17.3640485846909\\
10	23.4102621948832\\
15	29.8328777883331\\
20	36.3827795168908\\
};
\addlegendentry{Projected BD: $N_\mathrm{S} = 16$}

\addplot [color=mycolor2, line width=2.0pt, mark=+, mark options={solid, mycolor2}]
  table[row sep=crcr]{%
-20	3.21307234456805\\
-15	6.44489264840814\\
-10	11.1055110729233\\
-5	16.9451867599123\\
0	24.3067729620249\\
5	32.0556428907967\\
10	39.3125340285854\\
15	46.0961194117957\\
20	52.8338188375336\\
};
\addlegendentry{Asymmetric BD: $N_\mathrm{S} = 256$}

\addplot [color=mycolor2, dashed, line width=2.0pt, mark=+, mark options={solid, mycolor2}]
  table[row sep=crcr]{%
-20	2.13313601839314\\
-15	4.51556782846934\\
-10	8.24543732693642\\
-5	13.2870464928264\\
0	19.7620253796895\\
5	26.9290403889184\\
10	33.9580771550036\\
15	40.7022106761348\\
20	47.4165131438527\\
};
\addlegendentry{Enforced BD: $N_\mathrm{S} = 256$}

\addplot [color=mycolor2, dotted, line width=2.0pt, mark=+, mark options={solid, mycolor2}]
  table[row sep=crcr]{%
-20	2.15500488613771\\
-15	4.96728432099124\\
-10	8.26883664746099\\
-5	14.3003861999247\\
0	20.4023452509746\\
5	28.2965717556944\\
10	35.5863371396393\\
15	42.393566131641\\
20	49.1026041293653\\
};
\addlegendentry{Projected BD: $N_\mathrm{S} = 256$}

\end{axis}
\end{tikzpicture}%
					}
				}
				\caption{
					Impact of \gls{ris} symmetry on the \gls{mimo} power gain and achievable rate.
				}
			\end{figure}
			Symmetric \gls{ris} satisfying $\mathbf{\Theta} = \mathbf{\Theta}^\mathsf{T}$ are often considered in the literature due to hardware constraints.
			This study aim to investigate the impact of \gls{ris} symmetry on the system performance.
			\begin{remark}
				All proposed asymmetric \gls{bd}-\gls{ris} solutions are readily modifiable for symmetry. In particular,
				\begin{enumerate}[label=(\roman*)]
					\item \emph{\gls{svd}-based (e.g., \eqref{eq:ris_nd_sv_indl}, \eqref{eq:ris_nd_power}, \eqref{eq:ris_nd_rate_max}, \eqref{eq:ris_power}, \eqref{eq:ris_procrustes}):} Those closed-form asymmetric solutions are constructed from the product of singular matrices. If symmetry is required, one can replace the $\mathbf{U}, \mathbf{V}^\mathsf{H}$ in the \gls{svd} of $\mathbf{A} = \mathbf{U} \mathbf{\Sigma} \mathbf{V}^\mathsf{H}$ by $\mathbf{Q}, \mathbf{Q}^\mathsf{T}$ in the Autonne-Takagi factorization \cite{Ikramov2012} of $\frac{\mathbf{A} + \mathbf{A}^\mathsf{T}}{2} = \mathbf{Q} \mathbf{\Sigma} \mathbf{Q}^\mathsf{T}$ to construct $\mathbf{\Theta}$; \label{it:takagi}
					\item \emph{\gls{rcg}-based (e.g., \eqref{eq:shaping_subdiff}, \eqref{eq:gradient_eucl_rate}):} The symmetry constraint is added to the corresponding optimization problems, and one can project the solution to the nearest symmetric point $\mathbf{\Theta} \gets \frac{\mathbf{\Theta} + \mathbf{\Theta}^\mathsf{T}}{2}$ after each iteration. \label{it:projection}
				\end{enumerate}
			\end{remark}

			Figs.~\subref*{fg:pc_power_symmetry} and \subref*{fg:pc_rate_symmetry} compare the power gain and achievable rate of \gls{mimo} point-to-point channel under asymmetric and various symmetric \gls{ris} configurations.
			Here, `asymmetric' refers to the benchmark solution by \eqref{eq:ris_power} or \eqref{eq:gradient_eucl_rate}, `enforced' refers to enforcing symmetry on above, `legacy' refers to a straightforward extension of the single-mode \gls{snr}-optimal solution \cite[(6)]{Santamaria2023}, `Takagi' refers to the modification \ref{it:takagi}, and `projection' refers to the modification \ref{it:projection}.
			We observe that the performance gaps between the asymmetric and symmetric \gls{ris} configurations are insignificant and tends to widen with the number of scattering elements.
			The two proposed modifications also outperform other candidates in both problems.
		\end{subsubsection}

		\begin{subsubsection}{Channel Estimation Error}
			\label{sc:estimation_error}
			\begin{figure}[!t]
				\centering
				\subfloat[Pareto: $2 \times 64 \times 2$]{
					\label{fg:pc_singular_csi}
					\resizebox{!}{3.15cm}{
						\input{assets/simulation/pc_singular_csi.tex}
					}
				}
				\subfloat[Rate: $4 \times 128 \times 4$]{
					\label{fg:pc_rate_csi}
					\resizebox{!}{3.15cm}{
%
%
\definecolor{mycolor1}{rgb}{0.00000,0.44706,0.74118}%
\definecolor{mycolor2}{rgb}{0.85098,0.32549,0.09804}%
\begin{tikzpicture}

\begin{axis}[%
width=9.509cm,
height=7.5cm,
at={(0cm,0cm)},
scale only axis,
xmin=-20,
xmax=20,
xlabel={Transmit Power [dB]},
ymin=0.872791961172002,
ymax=47.8680403270704,
ylabel={Achievable Rate [bit/s/Hz]},
axis background/.style={fill=white},
xmajorgrids,
ymajorgrids,
legend style={at={(0.03,0.97)}, anchor=north west, legend cell align=left, align=left, draw=white!15!black}
]
\addplot [color=black, line width=2.0pt]
  table[row sep=crcr]{%
-20	0.872791961172002\\
-15	2.07904737782427\\
-10	4.27044866001181\\
-5	7.57596761019195\\
0	12.0581965631964\\
5	17.7466867911886\\
10	24.0493143932381\\
15	30.5805423901838\\
20	37.1882780145721\\
};
\addlegendentry{No RIS}

\addplot [color=mycolor1, line width=2.0pt, mark=o, mark options={solid, mycolor1}]
  table[row sep=crcr]{%
-20	1.72199515744992\\
-15	3.59697161695806\\
-10	6.5895331682259\\
-5	10.9723869941346\\
0	17.3323710991557\\
5	23.7503231602655\\
10	30.3210481368116\\
15	36.941604234274\\
20	43.578074408526\\
};
\addlegendentry{D: $\epsilon = 0.01$}

\addplot [color=mycolor1, dashed, line width=2.0pt, mark=o, mark options={solid, mycolor1}]
  table[row sep=crcr]{%
-20	1.71536945206226\\
-15	3.58520666728276\\
-10	6.57366233471145\\
-5	10.9490393137998\\
0	17.2930208713544\\
5	23.7096575476636\\
10	30.2798063010204\\
15	36.9001857304648\\
20	43.5366039849325\\
};
\addlegendentry{D: $\epsilon = 0.1$}

\addplot [color=mycolor1, dotted, line width=2.0pt, mark=o, mark options={solid, mycolor1}]
  table[row sep=crcr]{%
-20	1.55441512858157\\
-15	3.33220429724211\\
-10	6.23113234284968\\
-5	10.420387175913\\
0	16.538499287718\\
5	22.9165434284263\\
10	29.4734527514333\\
15	36.0894704932352\\
20	42.7244977470308\\
};
\addlegendentry{D: $\epsilon = 0.5$}

\addplot [color=mycolor2, line width=2.0pt, mark=+, mark options={solid, mycolor2}]
  table[row sep=crcr]{%
-20	2.26316055471769\\
-15	4.77307899043111\\
-10	9.39639994943951\\
-5	15.124974996935\\
0	21.4464746896447\\
5	27.984503388636\\
10	34.5944862634857\\
15	41.2275888731308\\
20	47.8680403270704\\
};
\addlegendentry{BD: $\epsilon = 0.01$}

\addplot [color=mycolor2, dashed, line width=2.0pt, mark=+, mark options={solid, mycolor2}]
  table[row sep=crcr]{%
-20	2.24904712851749\\
-15	4.74754914484069\\
-10	9.34470319031593\\
-5	15.0631073403706\\
0	21.380523013882\\
5	27.9171868966866\\
10	34.5267244403906\\
15	41.1596859123352\\
20	47.8000925325886\\
};
\addlegendentry{BD: $\epsilon = 0.1$}

\addplot [color=mycolor2, dotted, line width=2.0pt, mark=+, mark options={solid, mycolor2}]
  table[row sep=crcr]{%
-20	1.93925626618002\\
-15	4.25529914900791\\
-10	8.29391410054909\\
-5	13.8015257433391\\
0	20.029137829252\\
5	26.534917281773\\
10	33.1342433437324\\
15	39.7639478491383\\
20	46.4033172121794\\
};
\addlegendentry{BD: $\epsilon = 0.5$}

\end{axis}
\end{tikzpicture}%
					}
				}
				\caption{
					Impact of \gls{ris} channel estimation error on the \gls{mimo} singular value region and achievable rate.
					A higher transparency of the Pareto frontier indicates a larger channel estimation error.
					`D' means \gls{d}-\gls{ris} and `BD' refers to fully-connected \gls{bd}-\gls{ris}.
				}
			\end{figure}

			Figs.~\subref*{fg:pc_singular_csi} and \subref*{fg:pc_rate_csi} investigates how \gls{ris} channel estimation errors affect the system performance in terms of singular value region and achievable rate.
			We assume the direct channel can be perfectly acquired and the estimated backward and forward channels are modeled by
			\begin{equation*}
				\hat{\mathbf{H}}_{\mathrm{B/F}} = \mathbf{H}_{\mathrm{B/F}} + \tilde{\mathbf{H}}_{\mathrm{B/F}},
			\end{equation*}
			where the error follows $\mathrm{vec}(\tilde{\mathbf{H}}_{\mathrm{B/F}}) \sim \mathcal{N}_\mathbb{C}(\mathbf{0}, \epsilon \Lambda_\mathrm{B} \Lambda_\mathrm{F}\mathbf{I})$.
			The results are evaluated over the ground truth channels.
			It is observed that the proposed channel shaping and joint beamforming solutions are reasonably robust to channel estimation errors.
			An interesting observation is that a \gls{bd}-\gls{ris} designed over extremely poorly estimated channels ($\epsilon = 0.5$) may still outpeform a \gls{d}-\gls{ris} designed over almost perfectly estimated channels ($\epsilon = 0.01$).
			We hope those results can motivate further research on the robust shaping design and provide insights for practical \gls{bd}-\gls{ris} deployment.

		\end{subsubsection}
	\end{subsection}
\end{section}

\begin{section}{Conclusion}
	This paper investigated the capability of \gls{bd}-\gls{ris} to shape a \gls{mimo} channel in terms of singular values and their functions.
	We started from a toy example and derived some analytical bounds (with closed-form solutions) on the channel singular values, power gain, and capacity.
	An efficient framework was then proposed to optimize the \gls{bd}-\gls{ris} for a broader class of singular value functions.
	We also presented two beamforming designs for the rate maximization problem, one for optimal performance and the other exploits shaping implications for much lower complexity while remaining close-to-optimal.
	Extensive simulation show that the significant power and rate gains of \gls{bd}-\gls{ris} over \gls{d}-\gls{ris} stems from its superior \gls{mimo} branch matching and mode alignment potentials, which scales with the number of elements, group size, and \gls{mimo} dimensions.

	The analysis and optimization methods in this paper have been tailored for group-connected \gls{bd}-\gls{ris}.
	Extension to other \gls{ris} architectures remains a promising area for future research.
	One straightforward extension to the multi-sector model \cite{Li2023c} is to retrieve the optimal scattering matrix for each sector individually by methods in this paper and then play with the power splitting factors.
	Meanwhile, transitioning from single- to multi-layer \gls{ris} models \cite{An23b} mirrors that from single- to multi-hop \gls{af} relays; interested readers may be inspired by \cite{Rong2009a}.

	Finally, we remark that the principle of channel shaping is not limited to point-to-point \gls{mimo}.
	Algorithm \ref{ag:rcg} and the two solutions in Section~\ref{sc:rate} are readily extendable to weighted sum-rate maximization and leakage interference minimization in \gls{mimo} interference channel; please refer to our \href{https://github.com/snowztail/channel-shaping}{GitHub} for details.

\end{section}

\begin{appendix}
	\begin{subsection}{Proof of Proposition~\ref{pp:dof}}
		\label{ap:dof}
		It suffices to consider the rank of the indirect channel.
		Denote the \gls{svd} of the backward and forward channels as
		\begin{equation*}
			\mathbf{H}_\mathrm{B/F} = \begin{bmatrix}
				\mathbf{U}_{\mathrm{B/F},1} & \mathbf{U}_{\mathrm{B/F},2}
			\end{bmatrix}
			\begin{bmatrix}
				\mathbf{\Sigma}_{\mathrm{B/F},1} & \mathbf{0} \\
				\mathbf{0}                       & \mathbf{0}
			\end{bmatrix}
			\begin{bmatrix}
				\mathbf{V}_{\mathrm{B/F},1}^\mathsf{H} \\
				\mathbf{V}_{\mathrm{B/F},2}^\mathsf{H}
			\end{bmatrix},
		\end{equation*}
		where $\mathbf{U}_{\mathrm{B/F},1}$ and $\mathbf{V}_{\mathrm{B/F},1}$ are any left and right singular matrices of $\mathbf{H}_\mathrm{B/F}$ corresponding to non-zero singular values $\mathbf{\Sigma}_{\mathrm{B/F},1}$, and $\mathbf{U}_{\mathrm{B/F},2}$ and $\mathbf{V}_{\mathrm{B/F},2}$ are those corresponding to zero singular values.
		The rank of the indirect channel is \cite[(16.5.10.b)]{Hogben2013}
		\begin{equation*}
			\begin{split}
				\rank(\mathbf{H}_\mathrm{B} \mathbf{\Theta} \mathbf{H}_\mathrm{F})
				 & = \rank(\mathbf{H}_\mathrm{B}) - \dim \bigl(\ker(\mathbf{H}_\mathrm{F}^\mathsf{H} \mathbf{\Theta}^\mathsf{H}) \cap \ran(\mathbf{H}_\mathrm{B}^\mathsf{H})\bigr) \\
				 & = \rank(\mathbf{H}_\mathrm{B}) - \dim \bigl(\ran(\mathbf{\Theta} \mathbf{U}_{\mathrm{F},2}) \cap \ran(\mathbf{V}_{\mathrm{B},1})\bigr)                          \\
				 & \triangleq r_\mathrm{B} - r_\mathrm{L}(\mathbf{\Theta}),
			\end{split}
		\end{equation*}
		where we define $r_\mathrm{L}(\mathbf{\Theta}) \triangleq \dim \bigl(\ran(\mathbf{\Theta} \mathbf{U}_{\mathrm{F},2}) \cap \ran(\mathbf{V}_{\mathrm{B},1})\bigr)$ and $r_\mathrm{B/F} \triangleq \rank(\mathbf{H}_\mathrm{B/F})$.
		Since $\mathbf{U}_{\mathrm{F},2} \in \mathbb{U}^{N_\mathrm{S} \times (N_\mathrm{S} - r_\mathrm{F})}$ and $\mathbf{V}_{\mathrm{B},1} \in \mathbb{U}^{N_\mathrm{S} \times r_\mathrm{B}}$, we have $\max(r_\mathrm{B} - r_\mathrm{F}, 0) \le r_\mathrm{L}(\mathbf{\Theta}) \le \min(N_\mathrm{S} - r_\mathrm{F}, r_\mathrm{B})$
		and thus
		\begin{equation}
			\label{iq:rank_indirect}
			\max(r_\mathrm{B} + r_\mathrm{F} - N_\mathrm{S}, 0) \le \rank(\mathbf{H}_\mathrm{B} \mathbf{\Theta} \mathbf{H}_\mathrm{F}) \le \min(r_\mathrm{B}, r_\mathrm{F}).
		\end{equation}
		To attain the upper bound in \eqref{iq:rank_indirect}, the \gls{ris} needs to minimize $r_\mathrm{L}(\mathbf{\Theta})$ by aligning the ranges of $\mathbf{\Theta} \mathbf{U}_{\mathrm{F},2}$ and $\mathbf{V}_{\mathrm{B},2}$ as much as possible.
		This is achieved by
		\begin{equation}
			\label{eq:ris_dof_max}
			\mathbf{\Theta}_{\textnormal{DoF-max}}^{\textnormal{MIMO}} = \mathbf{Q}_{\mathrm{B},2} \mathbf{Q}_{\mathrm{F},2}^\mathsf{H},
		\end{equation}
		where $\mathbf{Q}_{\mathrm{B},2}$ and $\mathbf{Q}_{\mathrm{F},2}$ are the unitary matrices of the QR decomposition of $\mathbf{V}_{\mathrm{B},2}$ and $\mathbf{U}_{\mathrm{F},2}$, respectively.
		Similarly, the lower bound in \eqref{iq:rank_indirect} is attained at
		\begin{equation}
			\label{eq:ris_dof_min}
			\mathbf{\Theta}_{\textnormal{DoF-min}}^{\textnormal{MIMO}} = \mathbf{Q}_{\mathrm{B},1} \mathbf{Q}_{\mathrm{F},2}^\mathsf{H},
		\end{equation}
		where $\mathbf{Q}_{\mathrm{B},1}$ is the unitary matrix of the QR decomposition of $\mathbf{V}_{\mathrm{B},1}$.
		While the \gls{dof}-optimal structures \eqref{eq:ris_dof_max} and \eqref{eq:ris_dof_min} are always feasible for fully-connected \gls{bd}-\gls{ris}, they are generally infeasible for \gls{d}-\gls{ris} unless there exist some QR decomposition that diagonalize $\mathbf{Q}_{\mathrm{B},2} \mathbf{Q}_{\mathrm{F},2}^\mathsf{H}$ and $\mathbf{Q}_{\mathrm{B},1} \mathbf{Q}_{\mathrm{F},2}^\mathsf{H}$ simultaneously.
		That is, \gls{bd}-\gls{ris} may achieve a larger or smaller number of \gls{dof} of indirect channel, and thus equivalent channel, than \gls{d}-\gls{ris}.
	\end{subsection}

	\begin{subsection}{Proof of Proposition~\ref{pp:rd}}
		\label{ap:rank_deficient}
		We consider rank-$k$ forward channel and the proof follows similarly for rank-$k$ backward channel.
		Let $\mathbf{H}_\mathrm{F} = \mathbf{U}_\mathrm{F} \mathbf{\Sigma}_\mathrm{F} \mathbf{V}_\mathrm{F}^\mathsf{H}$ be the \gls{svd} of the forward channel.
		The channel Gram matrix $\mathbf{G} \triangleq \mathbf{H} \mathbf{H}^\mathsf{H} $ can be written as
		\begin{equation*}
			\begin{split}
				\mathbf{G}
				 & = \mathbf{H}_\mathrm{D} \mathbf{H}_\mathrm{D}^\mathsf{H} + \mathbf{H}_\mathrm{B} \mathbf{\Theta} \mathbf{U}_\mathrm{F} \mathbf{\Sigma}_\mathrm{F} \mathbf{\Sigma}_\mathrm{F}^\mathsf{H} \mathbf{U}_\mathrm{F}^\mathsf{H} \mathbf{\Theta}^\mathsf{H} \mathbf{H}_\mathrm{B}^\mathsf{H}                                                         \\
				 & \quad + \mathbf{H}_\mathrm{B} \mathbf{\Theta} \mathbf{U}_\mathrm{F} \mathbf{\Sigma}_\mathrm{F} \mathbf{V}_\mathrm{F}^\mathsf{H} \mathbf{H}_\mathrm{D}^\mathsf{H} + \mathbf{H}_\mathrm{D} \mathbf{V}_\mathrm{F} \mathbf{\Sigma}_\mathrm{F} \mathbf{U}_\mathrm{F}^\mathsf{H} \mathbf{\Theta}^\mathsf{H} \mathbf{H}_\mathrm{B}^\mathsf{H}       \\
				 & = \mathbf{H}_\mathrm{D} (\mathbf{I} - \mathbf{V}_\mathrm{F} \mathbf{V}_\mathrm{F}^\mathsf{H}) \mathbf{H}_\mathrm{D}^\mathsf{H}                                                                                                                                                                                                               \\
				 & \quad + (\mathbf{H}_\mathrm{B} \mathbf{\Theta} \mathbf{U}_\mathrm{F} \mathbf{\Sigma}_\mathrm{F} + \mathbf{H}_\mathrm{D} \mathbf{V}_\mathrm{F}) (\mathbf{\Sigma}_\mathrm{F} \mathbf{U}_\mathrm{F}^\mathsf{H} \mathbf{\Theta}^\mathsf{H} \mathbf{H}_\mathrm{B}^\mathsf{H} + \mathbf{V}_\mathrm{F}^\mathsf{H} \mathbf{H}_\mathrm{D}^\mathsf{H}) \\
				 & = \mathbf{Y} + \mathbf{Z} \mathbf{Z}^\mathsf{H},
			\end{split}
		\end{equation*}
		where we define $\mathbf{Y} \triangleq \mathbf{H}_\mathrm{D} (\mathbf{I} - \mathbf{V}_\mathrm{F} \mathbf{V}_\mathrm{F}^\mathsf{H}) \mathbf{H}_\mathrm{D}^\mathsf{H} \in \mathbb{H}^{N_\mathrm{R} \times N_\mathrm{R}}$ and $\mathbf{Z} \triangleq \mathbf{H}_\mathrm{B} \mathbf{\Theta} \mathbf{U}_\mathrm{F} \mathbf{\Sigma}_\mathrm{F} + \mathbf{H}_\mathrm{D} \mathbf{V}_\mathrm{F} \in \mathbb{C}^{N_\mathrm{R} \times k}$.
		That is to say, $\mathbf{G}$ can be expressed as a Hermitian matrix plus $k$ rank-1 perturbations.
		According to the Cauchy interlacing formula \cite[Theorem 8.4.3]{Golub2013}, the $n$-th eigenvalue of $\mathbf{G}$ is bounded by
		\begin{align}
			\lambda_n(\mathbf{G}) & \le \lambda_{n-k}(\mathbf{Y}), &  & \text{if } n > k, \label{iq:ev_rd_max}         \\
			\lambda_n(\mathbf{G}) & \ge \lambda_n(\mathbf{Y}),     &  & \text{if } n < N - k + 1 \label{iq:ev_rd_min}.
		\end{align}
		Since $\mathbf{Y} = \mathbf{T} \mathbf{T}^\mathsf{H}$ is positive semi-definite, taking the square roots of \eqref{iq:ev_rd_max} and \eqref{iq:ev_rd_min} gives \eqref{iq:sv_rd_max} and \eqref{iq:sv_rd_min}.
	\end{subsection}

	\begin{subsection}{Proof of Proposition~\ref{pp:nd}}
		\label{ap:nd}
		Let $\mathbf{H}_\mathrm{B} = \mathbf{U}_\mathrm{B} \mathbf{\Sigma}_\mathrm{B} \mathbf{V}_\mathrm{B}^\mathsf{H}$ and $\mathbf{H}_\mathrm{F} = \mathbf{U}_\mathrm{F} \mathbf{\Sigma}_\mathrm{F} \mathbf{V}_\mathrm{F}^\mathsf{H}$ be the \gls{svd} of the backward and forward channels, respectively.
		The scattering matrix of fully-connected \gls{bd}-\gls{ris} can be decomposed as
		\begin{equation}
			\label{eq:ris_decompose}
			\mathbf{\Theta} = \mathbf{V}_\mathrm{B} \mathbf{X} \mathbf{U}_\mathrm{F}^\mathsf{H},
		\end{equation}
		where $\mathbf{X} \in \mathbb{U}^{N_\mathrm{S} \times N_\mathrm{S}}$ is a unitary matrix to be designed.
		The equivalent channel is thus a function of $\mathbf{X}$
		\begin{equation}
			\label{eq:channel_nd}
			\mathbf{H} = \mathbf{H}_\mathrm{B} \mathbf{\Theta} \mathbf{H}_\mathrm{F} = \mathbf{U}_\mathrm{B} \mathbf{\Sigma}_\mathrm{B} \mathbf{X} \mathbf{\Sigma}_\mathrm{F} \mathbf{V}_\mathrm{F}^\mathsf{H}.
		\end{equation}
		Since $\sv(\mathbf{U} \mathbf{A} \mathbf{V}^\mathsf{H}) = \sv(\mathbf{A})$ for unitary $\mathbf{U}$ and $\mathbf{V}$, we have
		\begin{equation}
			\label{eq:sv_factor}
			\begin{split}
				\sv(\mathbf{H}) & = \sv(\mathbf{U}_\mathrm{B} \mathbf{\Sigma}_\mathrm{B} \mathbf{X} \mathbf{\Sigma}_\mathrm{F} \mathbf{V}_\mathrm{F}^\mathsf{H})                                                                     \\
				                & = \sv(\mathbf{\Sigma}_\mathrm{B} \mathbf{X} \mathbf{\Sigma}_\mathrm{F})                                                                                                                            \\
				                & = \sv(\bar{\mathbf{U}}_\mathrm{B} \mathbf{\Sigma}_\mathrm{B} \mathbf{\bar{V}}_\mathrm{B}^\mathsf{H} \bar{\mathbf{U}}_\mathrm{F} \mathbf{\Sigma}_\mathrm{F} \mathbf{\bar{V}}_\mathrm{F}^\mathsf{H}) \\
				                & = \sv(\mathbf{BF}),
			\end{split}
		\end{equation}
		where $\bar{\mathbf{U}}_{\mathrm{B}} \in \mathbb{U}^{N_\mathrm{R} \times N_\mathrm{R}}$, $\bar{\mathbf{V}}_\mathrm{B}, \bar{\mathbf{U}}_\mathrm{F} \in \mathbb{U}^{N_\mathrm{S} \times N_\mathrm{S}}$, and $\bar{\mathbf{V}}_\mathrm{F} \in \mathbb{U}^{N_\mathrm{T} \times N_\mathrm{T}}$ can be designed arbitrarily.
	\end{subsection}

	\begin{subsection}{Proof of Corollary~\ref{co:nd_sv_prod_tail}}
		\label{ap:nd_sv_prod_tail}
		\eqref{iq:sv_nd_prod_largest} follows from \eqref{iq:horn} when $r = k$.
		On the other hand, if we can prove
		\begin{equation}
			\label{eq:sv_prod_all_ext}
			\prod_{n=1}^{\bar{N}} \sigma_n(\mathbf{H}) = \prod_{n=1}^{\bar{N}} \sigma_n(\mathbf{H}_\mathrm{B}) \sigma_n(\mathbf{H}_\mathrm{F}),
		\end{equation}
		then \eqref{iq:sv_nd_prod_smallest} follows from \eqref{iq:sv_nd_prod_largest} and the non-negativity of singular values.
		To see \eqref{eq:sv_prod_all_ext}, we start from a stricter result
		\begin{equation}
			\label{eq:sv_product_all}
			\prod_{n=1}^{N_\mathrm{S}} \sigma_n(\mathbf{H}) = \prod_{n=1}^{N_\mathrm{S}} \sigma_n(\mathbf{H}_\mathrm{B}) \sigma_n(\mathbf{H}_\mathrm{F}),
		\end{equation}
		which is provable by cases.
		When $N_\mathrm{S} > N$, both sides of \eqref{eq:sv_product_all} become zero since $\sigma_n(\mathbf{H}) = \sigma_n(\mathbf{H}_\mathrm{B}) = \sigma_n(\mathbf{H}_\mathrm{F}) = 0$ for $n > N$.
		When $N_\mathrm{S} \le N$, we have
		\begin{equation*}
			\begin{split}
				\prod\nolimits_{n=1}^{N_\mathrm{S}} \sigma_n(\mathbf{H})
				 & = \prod\nolimits_{n=1}^{N_\mathrm{S}} \sigma_n(\mathbf{\Sigma}_\mathrm{B} \mathbf{X} \mathbf{\Sigma}_\mathrm{F})             \\
				 & = \prod\nolimits_{n=1}^{N_\mathrm{S}} \sigma_n(\hat{\mathbf{\Sigma}}_\mathrm{B} \mathbf{X} \hat{\mathbf{\Sigma}}_\mathrm{F}) \\
				 & = \det\bigl(\hat{\mathbf{\Sigma}}_\mathrm{B} \mathbf{X} \hat{\mathbf{\Sigma}}_\mathrm{F}\bigr)                               \\
				 & = \det\bigl(\hat{\mathbf{\Sigma}}_\mathrm{B}\bigr) \det(\mathbf{X}) \det\bigl(\hat{\mathbf{\Sigma}}_\mathrm{F}\bigr)         \\
				 & = \prod\nolimits_{n=1}^{N_\mathrm{S}} \sigma_n(\mathbf{\Sigma}_\mathrm{B}) \sigma_n(\mathbf{\Sigma}_\mathrm{F}),
			\end{split}
		\end{equation*}
		where the first equality follows from \eqref{eq:sv_factor} and $\hat{\mathbf{\Sigma}}_\mathrm{B}, \hat{\mathbf{\Sigma}}_\mathrm{F}$ truncate $\mathbf{\Sigma}_\mathrm{B}, \mathbf{\Sigma}_\mathrm{F}$ to square matrices of dimension $N_\mathrm{S}$, respectively.
		It is evident that \eqref{eq:sv_product_all} implies \eqref{eq:sv_prod_all_ext} and thus \eqref{iq:sv_nd_prod_smallest}.
	\end{subsection}

	\begin{subsection}{Proof of Corollary~\ref{co:nd_sv_indl}}
		\label{ap:nd_sv_indl}
		In \eqref{iq:sv_nd_indl}, the set of upper bounds
		\begin{equation}
			\label{iq:sv_nd_indl_set_max}
			\bigl\{\sigma_n(\mathbf{H}) \le \sigma_i(\mathbf{H}_\mathrm{B}) \sigma_j(\mathbf{H}_\mathrm{F}) \mid [i,j,k] \in [N_\mathrm{S}]^3, i+j=n+1\bigr\}
		\end{equation}
		is a special case of \eqref{iq:horn} with $(I, J, K) \in [N_\mathrm{S}]^3$.
		The minimum\footnote{One may think to take the maximum of those upper bounds as the problem of interest is the attainable dynamic range of $n$-th singular value. This is infeasible since the singular values will be reordered.} of \eqref{iq:sv_nd_indl_set_max} is selected as the tightest upper bound in \eqref{iq:sv_nd_indl}.
		On the other hand, the set of lower bounds
		\begin{equation}
			\label{iq:sv_nd_indl_set_min}
			\bigl\{\sigma_n(\mathbf{H}) \ge \sigma_i(\mathbf{H}_\mathrm{B}) \sigma_j(\mathbf{H}_\mathrm{F}) \mid [i,j,k] \in [N_\mathrm{S}]^3, i+j=n+N_\mathrm{S}\bigr\}
		\end{equation}
		can be induced by \eqref{iq:sv_nd_indl_set_max}, \eqref{eq:sv_product_all}, and the non-negativity of singular values.
		The maximum of \eqref{iq:sv_nd_indl_set_min} is selected as the tightest lower bound in \eqref{iq:sv_nd_indl}.
		Interested readers are also referred to \cite[(2.0.3)]{Zhang2005}.

		To attain the upper bound, the \gls{bd}-\gls{ris} needs to maximize the minimum of the first $n$ channel singular values.
		It follows from \eqref{eq:ris_nd_sv_indl_max} that
		\begin{equation*}
			\begin{split}
				\sv(\mathbf{H})
				 & = \sv(\mathbf{H}_\mathrm{B} \mathbf{V}_\mathrm{B} \mathbf{P} \mathbf{U}_\mathrm{F}^\mathsf{H} \mathbf{H}_\mathrm{F})                                                                                                                         \\
				 & = \sv(\mathbf{U}_\mathrm{B} \mathbf{\Sigma}_\mathrm{B} \mathbf{V}_\mathrm{B}^\mathsf{H} \mathbf{V}_\mathrm{B} \mathbf{P} \mathbf{U}_\mathrm{F}^\mathsf{H} \mathbf{U}_\mathrm{F} \mathbf{\Sigma}_\mathrm{F} \mathbf{U}_\mathrm{F}^\mathsf{H}) \\
				 & = \sv(\mathbf{\Sigma}_\mathrm{B} \mathbf{P} \mathbf{\Sigma}_\mathrm{F}).
			\end{split}
		\end{equation*}
		On the one hand, $\mathbf{P}_{ij}=1$ with $(i, j)$ satisfying \eqref{eq:idx_nd_sv_indl_max} ensures $\min_{i+j=n+1} \sigma_i(\mathbf{H}_\mathrm{B}) \sigma_j(\mathbf{H}_\mathrm{F})$ is a singular value of $\mathbf{H}$.
		It is actually among the first $n$ since the number of pairs $(i',j')$ not majorized by $(i,j)$ is $n-1$.
		On the other hand, \eqref{eq:perm_nd_sv_indl_max} ensures the first $(n-1)$-th singular values are no smaller than $\min_{i+j=n+1} \sigma_i(\mathbf{H}_\mathrm{B}) \sigma_j(\mathbf{H}_\mathrm{F})$.
		Combining both facts, we claim the upper bound $\sigma_n(\mathbf{H}) = \min_{i+j=n+1} \sigma_i(\mathbf{H}_\mathrm{B}) \sigma_j(\mathbf{H}_\mathrm{F})$ is attainable by \eqref{eq:ris_nd_sv_indl_max}.
		The attainability of the lower bound can be proved similarly and the details are omitted.
	\end{subsection}

	\begin{subsection}{Proof of Corollary~\ref{co:nd_power}}
		\label{ap:nd_power}
		From \eqref{eq:ris_decompose} and \eqref{eq:channel_nd} we have
		\begin{equation}
			\begin{split}
				\lVert \mathbf{H} \rVert _\mathrm{F}^2
				 & = \tr \bigl(\mathbf{V}_\mathrm{F} \mathbf{\Sigma}_\mathrm{F}^\mathsf{H} \mathbf{X}^\mathsf{H} \mathbf{\Sigma}_\mathrm{B}^\mathsf{H} \mathbf{U}_\mathrm{B}^\mathsf{H} \mathbf{U}_\mathrm{B} \mathbf{\Sigma}_\mathrm{B} \mathbf{X} \mathbf{\Sigma}_\mathrm{F} \mathbf{V}_\mathrm{F}^\mathsf{H}\bigr) \\
				 & = \tr \bigl(\mathbf{\Sigma}_\mathrm{B}^\mathsf{H} \mathbf{\Sigma}_\mathrm{B} \cdot \mathbf{X} \mathbf{\Sigma}_\mathrm{F} \mathbf{\Sigma}_\mathrm{F}^\mathsf{H} \mathbf{X}^\mathsf{H}\bigr)                                                                                                         \\
				 & \triangleq \tr \bigl(\tilde{\mathbf{B}} \tilde{\mathbf{F}}\bigr),
			\end{split}
		\end{equation}
		where $\mathbf{X} \triangleq \mathbf{V}_\mathrm{B}^\mathsf{H} \mathbf{\Theta} \mathbf{U}_\mathrm{F} \in \mathbb{U}^{N_\mathrm{S} \times N_\mathrm{S}}$, $\tilde{\mathbf{B}} \triangleq \mathbf{\Sigma}_\mathrm{B}^\mathsf{H} \mathbf{\Sigma}_\mathrm{B} \in \mathbb{H}_+^{N_\mathrm{S} \times N_\mathrm{S}}$, and $\tilde{\mathbf{F}} \triangleq \mathbf{X} \mathbf{\Sigma}_\mathrm{F} \mathbf{\Sigma}_\mathrm{F}^\mathsf{H} \mathbf{X}^\mathsf{H} \in \mathbb{H}_+^{N_\mathrm{S} \times N_\mathrm{S}}$.
		By Ruhe's trace inequality for positive semi-definite matrices \cite[(H.1.g) and (H.1.h)]{Marshall2010},
		\begin{equation*}
			\sum_{n=1}^N \lambda_n(\tilde{\mathbf{B}}) \lambda_{N_\mathrm{S}-n+1}(\tilde{\mathbf{F}}) \le \tr \bigl(\tilde{\mathbf{B}} \tilde{\mathbf{F}}\bigr) \le \sum_{n=1}^N \lambda_n(\tilde{\mathbf{B}}) \lambda_n(\tilde{\mathbf{F}}),
		\end{equation*}
		which simplifies to \eqref{iq:power_nd}.
		The upper bound is attained when $\mathbf{X}$ is chosen to match the singular values of $\tilde{\mathbf{F}}$ to those of $\tilde{\mathbf{B}}$ in similar order.
		Apparently this occurs at $\mathbf{X} = \mathbf{I}$ and $\mathbf{\Theta} = \mathbf{V}_\mathrm{B} \mathbf{U}_\mathrm{F}^\mathsf{H}$.
		On the other hand, the lower bound is attained when the singular values of $\tilde{\mathbf{F}}$ and $\tilde{\mathbf{B}}$ are matched in reverse order, namely $\mathbf{X} = \mathbf{J}$ and $\mathbf{\Theta} = \mathbf{V}_\mathrm{B} \mathbf{J} \mathbf{U}_\mathrm{F}^\mathsf{H}$.
	\end{subsection}

	\begin{subsection}{Proof of Corollary~\ref{co:nd_capacity_snr_extreme}}
		\label{ap:nd_capacity}
		When perfect \gls{csi} is available at the transmitter, in the low-\gls{snr} regime, the capacity is achieved by dominant eigenmode transmission \cite[(5.26)]{Clerckx2013}
		\begin{align*}
			C_{\rho_\downarrow}
			 & = \log\bigl(1 + \rho \lambda_1(\mathbf{H}^\mathsf{H} \mathbf{H})\bigr)        \\
			 & = \log\bigl(1 + \rho \sigma_1^2(\mathbf{H})\bigr)                             \\
			 & \approx \rho \sigma_1^2(\mathbf{H})                                           \\
			 & \le \rho \sigma_1^2(\mathbf{H}_\mathrm{B}) \sigma_1^2(\mathbf{H}_\mathrm{F}),
		\end{align*}
		where the approximation is $\log(1 + x) \approx x$ for small $x$ and the inequality follows from \eqref{iq:sv_nd_prod_largest} with $k=1$.
		In the high-\gls{snr} regime, the capacity is achieved by multiple eigenmode transmission with uniform power location \cite[(5.27)]{Clerckx2013}
		\begin{align*}
			C_{\rho_\uparrow}
			 & = \sum\nolimits_{n=1}^N \log\Bigl(1 + \frac{\rho}{N} \lambda_n(\mathbf{H}^\mathsf{H} \mathbf{H})\Bigr)                     \\
			 & \approx \sum\nolimits_{n=1}^N \log\Bigl(\frac{\rho}{N} \sigma_n^2(\mathbf{H})\Bigr)                                        \\
			 & = N \log \frac{\rho}{N} + \sum\nolimits_{n=1}^N \log \sigma_n^2(\mathbf{H})                                                \\
			 & = N \log \frac{\rho}{N} + \log \prod\nolimits_{n=1}^N \sigma_n^2(\mathbf{H})                                               \\
			 & \le N \log \frac{\rho}{N} + 2 \log \prod\nolimits_{n=1}^N \sigma_n(\mathbf{H}_\mathrm{B}) \sigma_n(\mathbf{H}_\mathrm{F}),
		\end{align*}
		where the approximation is $\log(1 + x) \approx \log(x)$ for large $x$ and the inequality follows from \eqref{iq:sv_nd_prod_largest} with $k=N$.

		We now show \eqref{eq:ris_nd_rate_max} can achieve the upper bounds in \eqref{iq:capacity_nd_snr_low} and \eqref{iq:capacity_nd_snr_high} simultaneously.
		On the one hand, \eqref{eq:ris_nd_rate_max} is a special case of \eqref{eq:ris_nd_sv_indl_max} with $\mathbf{P} = \mathbf{I}$, which satisfies \eqref{eq:idx_nd_sv_indl_max} and \eqref{eq:perm_nd_sv_indl_max} for $n=1$ and thus attain $\sigma_1(\mathbf{H}) = \sigma_1(\mathbf{H}_\mathrm{B}) \sigma_1(\mathbf{H}_\mathrm{F})$.
		On the other hand, since $\log(\cdot)$ is a monotonic function, we can prove similar to Appendix \ref{ap:nd_power} that $\sum_{n=1}^N \log \sigma_n^2(\mathbf{H}) \le \sum_{n=1}^N \log \sigma_n^2(\mathbf{H}_\mathrm{B}) \sigma_n^2(\mathbf{H}_\mathrm{F})$ and the bound is tight at \eqref{eq:ris_nd_rate_max}.
		The proof is complete.
	\end{subsection}

	\begin{subsection}{Proof of Proposition~\ref{pp:shaping}}
		\label{ap:shaping}
		A straightforward extension to \cite[Theorem 2]{Watson1992} shows that the Clarke subdifferential of a locally Lipschitz function of singular values of a matrix with respect to the matrix itself is given by
		\begin{equation}
			\partial_{\mathbf{H}^*} f\bigl(\sv(\mathbf{H})\bigr) = \conv \bigl\{ \mathbf{U} \mathbf{D} \mathbf{V}^\mathsf{H} \bigr\},
		\end{equation}
		where $\mathbf{D} \in \mathbb{C}^{N_\mathrm{R} \times N_\mathrm{T}}$ is a rectangular diagonal matrix with $[\mathbf{D}]_{n,n} \in \partial_{\sigma_n(\mathbf{H})} f\bigl(\sv(\mathbf{H})\bigr)$, $\forall n \in [N]$, and $\mathbf{U}$, $\mathbf{V}$ are any left and right singular matrices of $\mathbf{H}$.
		Therefore,
		\begin{align*}
			\partial f\bigl(\sv(\mathbf{H})\bigr)
			 & \ni \tr \bigl(\mathbf{V}^* \mathbf{D}^\mathsf{T} \mathbf{U}^\mathsf{T} \partial \mathbf{H}^*\bigr)                                                                  \\
			 & = \tr \bigl(\mathbf{V}^* \mathbf{D}^\mathsf{T} \mathbf{U}^\mathsf{T} \mathbf{H}_{\mathrm{B},g}^* \partial {\mathbf{\Theta}_g^*} \mathbf{H}_{\mathrm{F},g}^*\bigr)   \\
			 & = \tr \bigl(\mathbf{H}_{\mathrm{F},g}^* \mathbf{V}^* \mathbf{D}^\mathsf{T} \mathbf{U}^\mathsf{T} \mathbf{H}_{\mathrm{B},g}^* \partial {\mathbf{\Theta}_g^*} \bigr),
		\end{align*}
		such that $\mathbf{H}_{\mathrm{B},g}^\mathsf{H} \mathbf{U} \mathbf{D} \mathbf{V}^\mathsf{H} \mathbf{H}_{\mathrm{F},g}^\mathsf{H}$ constitutes a Clarke subgradient of $f\bigl(\sv(\mathbf{H})\bigr)$ with respect to $\mathbf{\Theta}_g$.
		The convex hull of those subgradients is the subdifferential \eqref{eq:shaping_subdiff}.
	\end{subsection}

	\begin{subsection}{Proof of Lemma~\ref{lm:rate}}
		\label{ap:rate}
		The differential of $R$ with respect to $\mathbf{\Theta}_g^*$ is \cite{Hjorungnes2007}
		\begin{align*}
			\partial R
			 & = \frac{1}{\eta} \tr \biggl\{ \partial \mathbf{H}^* \cdot \mathbf{Q}^\mathsf{T} \mathbf{H}^\mathsf{T} \Bigl(\mathbf{I} + \frac{\mathbf{H}^* \mathbf{Q}^\mathsf{T} \mathbf{H}^\mathsf{T}}{\eta}\Bigr)^{-1} \biggr\}                                                                      \\
			 & = \frac{1}{\eta} \tr \biggl\{ \mathbf{H}_{\mathrm{B},g}^* \cdot \partial \mathbf{\Theta}_g^* \cdot \mathbf{H}_{\mathrm{F},g}^* \mathbf{Q}^\mathsf{T} \mathbf{H}^\mathsf{T} \Bigl(\mathbf{I} + \frac{\mathbf{H}^* \mathbf{Q}^\mathsf{T} \mathbf{H}^\mathsf{T}}{\eta}\Bigr)^{-1} \biggr\} \\
			 & = \frac{1}{\eta} \tr \biggl\{ \mathbf{H}_{\mathrm{F},g}^* \mathbf{Q}^\mathsf{T} \mathbf{H}^\mathsf{T} \Bigl(\mathbf{I} + \frac{\mathbf{H}^* \mathbf{Q}^\mathsf{T} \mathbf{H}^\mathsf{T}}{\eta}\Bigr)^{-1} \mathbf{H}_{\mathrm{B},g}^* \cdot \partial \mathbf{\Theta}_g^* \biggr\},
		\end{align*}
		and the corresponding complex derivative is \eqref{eq:gradient_eucl_rate}.
	\end{subsection}

	\begin{subsection}{Proof of Proposition~\ref{pp:power}}
		\label{ap:power}
		The differential of \eqref{ob:power} with respect to $\mathbf{\Theta}_g^*$ is
		\begin{align*}
			\partial \lVert \mathbf{H} \rVert _\mathrm{F}^2
			 & = \tr\bigl(\mathbf{H}_{\mathrm{B},g}^* \cdot \partial \mathbf{\Theta}_g^* \cdot \mathbf{H}_{\mathrm{F},g}^* (\mathbf{H}_\mathrm{D}^\mathsf{T} + \mathbf{H}_\mathrm{F}^\mathsf{T} \mathbf{\Theta}^\mathsf{T} \mathbf{H}_\mathrm{B}^\mathsf{T})\bigr) \\
			 & = \tr\bigl(\mathbf{H}_{\mathrm{F},g}^* (\mathbf{H}_\mathrm{D}^\mathsf{T} + \mathbf{H}_\mathrm{F}^\mathsf{T} \mathbf{\Theta}^\mathsf{T} \mathbf{H}_\mathrm{B}^\mathsf{T}) \mathbf{H}_{\mathrm{B},g}^* \cdot \partial \mathbf{\Theta}_g^*\bigr)
		\end{align*}
		and the corresponding complex derivative is
		\begin{equation}
			\frac{\partial \lVert \mathbf{H} \rVert _\mathrm{F}^2}{\partial \mathbf{\Theta}_g^*} = \mathbf{H}_{\mathrm{B},g}^\mathsf{H} (\mathbf{H}_\mathrm{D} + \mathbf{H}_\mathrm{B} \mathbf{\Theta} \mathbf{H}_\mathrm{F}) \mathbf{H}_{\mathrm{F},g}^\mathsf{H} \triangleq \mathbf{M}_g,
		\end{equation}
		whose \gls{svd} is denoted as $\mathbf{M}_g = \mathbf{U}_g \mathbf{\Sigma}_g \mathbf{V}_g^\mathsf{H}$.
		The quadratic objective \eqref{ob:power} can be successively approximated by its first-order Taylor expansion, resulting in the subproblem
		\begin{maxi!}
			{\scriptstyle{\mathbf{\Theta}}}{\sum_g 2 \Re\bigl\{ \tr(\mathbf{\Theta}_g^\mathsf{H} \mathbf{M}_g) \bigr\}}{\label{op:power_ris_taylor}}{\label{ob:power_ris_taylor}}
			\addConstraint{\mathbf{\Theta}_g^\mathsf{H} \mathbf{\Theta}_g=\mathbf{I}, \quad \forall g,}{}{}
		\end{maxi!}
		whose optimal solution is
		\begin{equation}
			\label{eq:ris_power_taylor}
			\tilde{\mathbf{\Theta}}_g = \mathbf{U}_g \mathbf{V}_g^\mathsf{H}, \quad \forall g.
		\end{equation}
		This is because $\Re \bigl\{\tr(\mathbf{\Theta}_g^\mathsf{H} \mathbf{M}_g)\bigr\} = \Re \bigl\{ \tr(\mathbf{\Sigma}_g \mathbf{V}_g^\mathsf{H} \mathbf{\Theta}_g^\mathsf{H} \mathbf{U}_g) \bigr\} \le \tr(\mathbf{\Sigma}_g)$ and the bound is tight when $\mathbf{V}_g^\mathsf{H} \mathbf{\Theta}_g^\mathsf{H} \mathbf{U}_g = \mathbf{I}$.

		Next, we prove that solving the affine approximation \eqref{op:power_ris_taylor} by \eqref{eq:ris_power_taylor} does not decrease \eqref{ob:power}.
		Since $\tilde{\mathbf{\Theta}} = \diag(\tilde{\mathbf{\Theta}}_1, \ldots, \tilde{\mathbf{\Theta}}_G)$ is optimal for \eqref{op:power_ris_taylor}, we have
		\begin{equation}
			\label{iq:power_ris_taylor}
			\begin{split}
				2 \Re \bigl\{     & \sum\limits_g \tr(\tilde{\mathbf{\Theta}}_g^\mathsf{H} \mathbf{H}_{\mathrm{B},g}^\mathsf{H} \mathbf{H}_\mathrm{D} \mathbf{H}_{\mathrm{F},g}^\mathsf{H})                                                                                  \\
				                  & + \sum\limits_{g_1,g_2} \tr(\tilde{\mathbf{\Theta}}_{g_1}^\mathsf{H} \mathbf{H}_{\mathrm{B},g_1}^\mathsf{H} \mathbf{H}_{\mathrm{B},g_2} \mathbf{\Theta}_{g_2} \mathbf{H}_{\mathrm{F},g_2} \mathbf{H}_{\mathrm{F},g_1}^\mathsf{H})\bigr\} \\
				\ge 2 \Re \bigl\{ & \sum\limits_g \tr({\mathbf{\Theta}}_g^\mathsf{H} \mathbf{H}_{\mathrm{B},g}^\mathsf{H} \mathbf{H}_\mathrm{D} \mathbf{H}_{\mathrm{F},g}^\mathsf{H})                                                                                        \\
				                  & + \sum\limits_{g_1,g_2} \tr({\mathbf{\Theta}}_{g_1}^\mathsf{H} \mathbf{H}_{\mathrm{B},g_1}^\mathsf{H} \mathbf{H}_{\mathrm{B},g_2} \mathbf{\Theta}_{g_2} \mathbf{H}_{\mathrm{F},g_2} \mathbf{H}_{\mathrm{F},g_1}^\mathsf{H})\bigr\}.
			\end{split}
		\end{equation}
		Besides, $\lVert \sum_g \mathbf{H}_{\mathrm{B},g} \tilde{\mathbf{\Theta}}_g \mathbf{H}_{\mathrm{F},g} - \sum_g \mathbf{H}_{\mathrm{B},g} \mathbf{\Theta}_g \mathbf{H}_{\mathrm{F},g} \rVert _\mathrm{F}^2 \ge 0$ implies
		\begin{equation}
			\label{iq:auxiliary_nonnegative}
			\begin{split}
				    & \sum\limits_{g_1,g_2} \tr(\mathbf{H}_{\mathrm{F},g_1}^\mathsf{H} \tilde{\mathbf{\Theta}}_{g_1}^\mathsf{H} \mathbf{H}_{\mathrm{B},g_1}^\mathsf{H} \mathbf{H}_{\mathrm{B},g_2} \tilde{\mathbf{\Theta}}_{g_2} \mathbf{H}_{\mathrm{F},g_2})              \\
				    & \quad + \sum\limits_{g_1,g_2} \tr(\mathbf{H}_{\mathrm{F},g_1}^\mathsf{H} {\mathbf{\Theta}}_{g_1}^\mathsf{H} \mathbf{H}_{\mathrm{B},g_1}^\mathsf{H} \mathbf{H}_{\mathrm{B},g_2} {\mathbf{\Theta}}_{g_2} \mathbf{H}_{\mathrm{F},g_2})                  \\
				\ge & 2 \Re \bigl\{\sum\limits_{g_1,g_2} \tr(\mathbf{H}_{\mathrm{F},g_1}^\mathsf{H} \tilde{\mathbf{\Theta}}_{g_1}^\mathsf{H} \mathbf{H}_{\mathrm{B},g_1}^\mathsf{H} \mathbf{H}_{\mathrm{B},g_2} \mathbf{\Theta}_{g_2} \mathbf{H}_{\mathrm{F},g_2})\bigr\}.
			\end{split}
		\end{equation}
		Adding \eqref{iq:power_ris_taylor} and \eqref{iq:auxiliary_nonnegative}, we have
		\begin{equation}
			\begin{split}
				    & 2 \Re \bigl\{\tr(\tilde{\mathbf{\Theta}}^\mathsf{H} \mathbf{H}_\mathrm{B}^\mathsf{H} \mathbf{H}_\mathrm{D} \mathbf{H}_\mathrm{F}^\mathsf{H}) \bigr\} + \tr(\mathbf{H}_\mathrm{F}^\mathsf{H} \tilde{\mathbf{\Theta}}^\mathsf{H} \mathbf{H}_\mathrm{B}^\mathsf{H} \mathbf{H}_\mathrm{B} \tilde{\mathbf{\Theta}} \mathbf{H}_\mathrm{F}) \\
				\ge & 2 \Re \bigl\{\tr({\mathbf{\Theta}}^\mathsf{H} \mathbf{H}_\mathrm{B}^\mathsf{H} \mathbf{H}_\mathrm{D} \mathbf{H}_\mathrm{F}^\mathsf{H}) \bigr\} + \tr(\mathbf{H}_\mathrm{F}^\mathsf{H} {\mathbf{\Theta}}^\mathsf{H} \mathbf{H}_\mathrm{B}^\mathsf{H} \mathbf{H}_\mathrm{B} {\mathbf{\Theta}} \mathbf{H}_\mathrm{F}),
			\end{split}
		\end{equation}
		which suggests that \eqref{ob:power} is non-decreasing as the solution iterates over \eqref{eq:ris_power_taylor}.
		Since \eqref{ob:power} is also bounded from above, the sequence of objective value converges.

		Finally, we prove that any solution when \eqref{eq:auxiliary_power} converges, denoted by $\mathbf{\Theta}'$, is a stationary point of \eqref{op:power}.
		The \gls{kkt} conditions of \eqref{op:power} and \eqref{op:power_ris_taylor} are equivalent in terms of primal/dual feasibility and complementary slackness, while the stationary conditions are respectively, $\forall g$,
		\begin{gather}
			\mathbf{H}_{\mathrm{B},g}^\mathsf{H} (\mathbf{H}_\mathrm{D} + \mathbf{H}_\mathrm{B} \mathbf{\Theta}^\star \mathbf{H}_\mathrm{F}) \mathbf{H}_{\mathrm{F},g}^\mathsf{H} - \mathbf{\Theta}_g^\star \mathbf{\Lambda}_g^\mathsf{H} = 0,\label{eq:power_ris_optimal}\\
			\mathbf{M}_g - \mathbf{\Theta}_g^\star \mathbf{\Lambda}_g^\mathsf{H} = 0.\label{eq:power_ris_taylor_optimal}
		\end{gather}
		When \eqref{eq:auxiliary_power} converges, $\mathbf{H}_{\mathrm{B},g}^\mathsf{H} (\mathbf{H}_\mathrm{D} + \mathbf{H}_\mathrm{B} \mathbf{\Theta}' \mathbf{H}_\mathrm{F}) \mathbf{H}_{\mathrm{F},g}^\mathsf{H} = \mathbf{H}_{\mathrm{B},g}^\mathsf{H} (\mathbf{H}_\mathrm{D} + \mathbf{H}_\mathrm{B} \mathbf{\Theta}^\star \mathbf{H}_\mathrm{F}) \mathbf{H}_{\mathrm{F},g}^\mathsf{H}$ and \eqref{eq:power_ris_taylor_optimal} reduces to \eqref{eq:power_ris_optimal}.
		The proof is thus completed.
	\end{subsection}

\end{appendix}

\begin{section}{Acknowledgement}
	The authors would like to thank the anonymous reviewers for their insightful criticisms and suggestions that helped us correct a few technical errors.
\end{section}

\bibliographystyle{IEEEtran}
\bibliography{library.bib}
\end{document}